\documentclass[11pt]{article}

\usepackage{amsmath,amssymb,amsthm,mathtools}
\usepackage[margin=1in]{geometry}
\usepackage{hyperref}
\usepackage[nameinlink,capitalize]{cleveref}
\usepackage{xspace}

\hypersetup{colorlinks=true,linkcolor=blue,citecolor=blue,urlcolor=black,linkbordercolor={0 0 1}}
\usepackage{tikz-cd}

\newcount\Comments
\newcommand{\nc}{\newcommand}

\usepackage[nameinlink,capitalize]{cleveref}
\crefformat{equation}{#2(#1)#3}
\Crefformat{equation}{#2(#1)#3}

\Crefformat{figure}{#2Figure #1#3}
\Crefname{assumption}{Assumption}{Assumptions}
\Crefformat{assumption}{#2Assumption #1#3}
   \Crefname{question}{Question}{Questions}
   \Crefformat{question}{#2Question #1#3}
   \Crefname{claim}{Claim}{Claims}
   \Crefformat{claim}{#2Claim #1#3}
   \Crefname{problem}{Problem}{Problems}
  \Crefformat{problem}{#2Problem #1#3}
\Crefname{subsubsection}{Section}{Sections}
\crefformat{subsubsection}{#2Section #1#3}
\Crefformat{subsubsection}{#2Section #1#3}

\nc{\sups}[1]{^{\scriptscriptstyle{#1}}}
\nc{\subs}[1]{_{\scriptscriptstyle{#1}}}

\newcommand{\wb}{\widebar}

\nc{\Critic}{\texttt{Critic}\xspace}
\nc{\PSDPUCB}{\texttt{PSDP-UCB}\xspace}
\nc{\LSVIUCB}{\texttt{LSVI-UCB}\xspace}
\nc{\Actor}{\texttt{Actor}\xspace}
\nc{\EstFeature}{\texttt{EstFeature}\xspace}
\nc{\ExpFTPL}{\texttt{ExpFTPL}\xspace}
\nc{\dist}{\mathrm{dist}}
\nc{\Bquad}{B^{\mathsf{quad}}}

\newcommand{\F}{\mathbb{F}}

\makeatletter
\newtheorem*{rep@theorem}{\rep@title}
\newcommand{\newreptheorem}[2]{%
\newenvironment{rep#1}[1]{%
 \def\rep@title{#2 \ref{##1}}%
 \begin{rep@theorem}}%
 {\end{rep@theorem}}}
\makeatother

\makeatletter
\newcommand\xlabel[2][]{\phantomsection\def\@currentlabelname{#1}\label{#2}}
\makeatother

\theoremstyle{plain}
\newtheorem{theorem}{Theorem}
\newtheorem{lemma}[theorem]{Lemma}

\newtheorem{conjecture}[theorem]{Conjecture}

\theoremstyle{definition}
\newtheorem{definition}{Definition}

\numberwithin{theorem}{section}
\numberwithin{definition}{section}

\AddToHook{env/theorem/begin}{\crefalias{theorem}{theorem}}
\AddToHook{env/lemma/begin}{\crefalias{theorem}{lemma}}
\AddToHook{env/corollary/begin}{\crefalias{theorem}{corollary}}
\AddToHook{env/conjecture/begin}{\crefalias{theorem}{conjecture}}
\AddToHook{env/proposition/begin}{\crefalias{theorem}{proposition}}
\AddToHook{env/fact/begin}{\crefalias{theorem}{fact}}
\AddToHook{env/postulate/begin}{\crefalias{theorem}{postulate}}
\AddToHook{env/claim/begin}{\crefalias{theorem}{claim}}
\AddToHook{env/assumption/begin}{\crefalias{theorem}{assumption}}
\AddToHook{env/definition/begin}{\crefalias{definition}{definition}}
\AddToHook{env/defn/begin}{\crefalias{definition}{defn}}
\AddToHook{env/example/begin}{\crefalias{definition}{example}}
\AddToHook{env/remark/begin}{\crefalias{definition}{remark}}
\AddToHook{env/question/begin}{\crefalias{definition}{question}}
\AddToHook{env/problem/begin}{\crefalias{definition}{problem}}

\nc{\DMO}{\DeclareMathOperator}

\DeclareMathOperator*{\argmax}{arg\,max}

\DMO{\prox}{prox}
\DMO{\UCB}{UCB}
\DMO{\LCB}{LCB}
\nc{\phidiff}{\phi\sups{\Delta}}
\nc{\pexp}{q_{\mathrm{exp}}}
\nc{\nn}{\nonumber}
\nc{\rk}{\mathrm{rk}}
\nc{\brk}[3]{{\rm br}_{#1}^{#2}({#3})}
\nc{\co}{{\rm co}}
\nc{\br}[2]{{\rm br}^{#1}({#2})}
\nc{\tA}{\textsc{A}}
\nc{\child}[2]{{\rm ch}_{#1}({#2})}
\nc{\parent}{\mathsf{pa}}
\nc{\dg}{\dagger}
\nc{\bB}{\mathbf{B}}
\nc{\Span}{\mathsf{span}}
\nc{\unif}{\mathsf{unif}}
\nc{\indsig}[2]{\mathcal{I}_{#1}({#2})}
\nc{\early}{{\rm pre}}
\nc{\zsink}{z_{\rm sink}}
\nc{\lowv}{{\rm low}}
\nc{\ol}{\overline}
\nc{\ul}{\underline}
\nc{\madec}[3]{\texttt{ma-dec}_{#1}({#2}, {#3})}
\nc{\madeco}[1]{\texttt{ma-dec}_{#1}}
\nc{\madecd}[3]{\texttt{ma-dec}^{\texttt{d}}_{#1}({#2}, {#3})}
\nc{\SF}{\mathscr{F}}
\nc{\SH}{\mathscr{H}}
\nc{\SP}{\mathscr{P}}
\nc{\SPc}{\wb{\mathscr{P}}}
\nc{\SB}{\mathscr{B}}
\nc{\SC}{\mathscr{C}}
\nc{\BS}{\mathbb{S}}
\nc{\PiMarkov}{\Pi^{\rm markov}}
\nc{\trunc}[2]{\mathsf{trunc}_{#2}({#1})}
\nc{\sbl}{of strong Bellman type\xspace}
\nc{\inormal}[1][\Phi, u,v]{\til{N}_{{#1}}}

\nc{\gamvec}{\gamma}
\nc{\til}{\widetilde}
\nc{\td}{\tilde}
\nc{\wh}{\widehat}
\nc{\old}[1]{\ifnum\Comments=1 {\color{brown}  [OLD: #1]}\fi}
\nc{\noah}[1]{\ifnum\Comments=1 {\color{purple} [ng: #1]}\fi}
\nc{\dhruv}[1]{\ifnum\Comments=1 {\color{red} [dr: #1]}\fi}
\nc{\mir}[1]{\ifnum\Comments=1 {\color{teal} [mc: #1]}\fi}
\nc{\sam}[1]{\ifnum\Comments=1 {\color{red} [sg: #1]}\fi}
\nc{\BP}{\mathbb{P}}
\nc{\BI}{\mathbb{I}}
\nc{\midpoint}[1][\Phi,\phi_1,\phi_2]{\mu^{\star}_{{#1}}}

\nc{\fools}[3]{\MF_{#3}({#1}, {#2})}
\nc{\fool}[2]{\MF({#1},{#2})}
\nc{\clip}[2]{{\rm clip}\left[ \left. {#1} \right| {#2} \right]}
\nc{\imax}{\omega}
\DMO{\conv}{conv}
\nc{\MH}{\mathcal{H}}
\nc{\MV}{\mathcal{V}}
\nc{\MC}{\mathcal{C}}
\nc{\MI}{\mathcal{I}}
\nc{\st}{\star}
\nc{\lng}{\langle}
\nc{\rng}{\rangle}
\DMO{\OOPT}{opt}
\nc{\dopt}[2]{\ell_{\OOPT}({#1},{#2})}
\nc{\MG}{\mathcal{G}}
\nc{\MP}{\mathcal{P}}
\nc{\PP}{\mathbb{P}}
\nc{\TT}{\mathbb{T}}
\nc{\TTmax}{\TT_{\max}}
\DMO{\REG}{Reg}
\DMO{\WREG}{wReg}
\nc{\reg}[2]{{\Delta}_{{#1}}({#2})}
\nc{\wreg}[2]{{\Delta}^{\rm w}_{{#1}}({#2})}
\nc{\Reg}[2]{{\REG}_{{#1}}({#2})}
\nc{\wReg}[2]{{\WREG}_{{#1}}({#2})}
\DMO{\Gap}{Gap}
\DMO{\GD}{GD}
\DMO{\GDA}{GDA}
\DMO{\EG}{EG}
\nc{\TE}{\til{\E}}
\nc{\Var}{\mathbf{Var}}
\DMO{\Cov}{Cov}
\DMO{\OGDA}{OGDA}
\DMO{\Unif}{Unif}
\nc{\Qu}{\ul{Q}}
\nc{\Qo}{\ol{Q}}
\nc{\Ro}{\ol{R}}
\nc{\Vu}{\ul{V}}
\nc{\Vo}{\ol{V}}
\nc{\RanQ}{\Delta Q}
\nc{\RanV}{\Delta V}
\nc{\clipQ}{\Delta \breve{Q}}
\nc{\frzQ}{\Delta \mathring{Q}}
\nc{\clipV}{\Delta \breve{V}}
\nc{\clipdelta}{\breve{\delta}}
\nc{\cliptheta}{\breve{\theta}}
\nc{\delmin}{\Delta_{{\rm min}}}
\nc{\delmins}[1]{\Delta_{{\rm min},{#1}}}
\nc{\gapfinal}[1]{\max \left\{ \frac{\frzQ_{{#1}}^{k^\st}(x,a)}{2H}, \frac{\delmin}{4H} \right\}}
\nc{\post}[2]{R({#1}; {#2})}
\nc{\posts}[3]{R_{#3}({#1}; {#2})}

\nc{\algnst}[1]{\begin{align*}#1\end{align*}}
\nc{\algn}[1]{\begin{align}#1\end{align}}
\nc{\matx}[1]{\left(\begin{matrix}#1\end{matrix}\right)}
\renewcommand{\^}[1]{^{(#1)}}

\nc{\nuu}{\nu}

\nc{\bel}[1]{\mathbf{b}({#1})}
\nc{\nbel}[1]{\bar{\mathbf{b}}({#1})}
\nc{\sbel}[2]{\mathbf{b}'_{#1}({#2})}
\nc{\nsbel}[2]{\bar{\mathbf{b}}'_{#1}({#2})}

\nc{\bv}{\mathbf{v}}
\nc{\bone}{\mathbf{1}}
\nc{\bX}{\mathbf{X}}
\nc{\bY}{\mathbf{Y}}
\nc{\bG}{\mathbf{G}}
\nc{\bz}{\mathbf{z}}
\nc{\bw}{\mathbf{w}}
\nc{\bA}{\mathbf{A}}
\nc{\bJ}{\mathbf{J}}
\nc{\bK}{\mathbf{K}}
\nc{\bb}{\mathbf{b}}
\nc{\ba}{\mathbf{a}}
\nc{\bc}{\mathbf{c}}
\nc{\bC}{\mathbf{C}}
\nc{\BR}{\mathbb R}
\nc{\BA}{\mathbb{A}}
\nc{\BC}{\mathbb C}
\nc{\bx}{\mathbf{x}}
\nc{\bS}{\mathbf{S}}
\nc{\bM}{\mathbf{M}}
\nc{\bR}{\mathbf{R}}
\nc{\bN}{\mathbf{N}}
\nc{\NN}{\mathbb{N}}
\nc{\by}{\mathbf{y}}
\nc{\sy}{y}
\nc{\sx}{x}

\nc{\MO}{\mathcal O}
\nc{\MU}{\mathcal{U}}
\nc{\ME}{\mathcal{E}}
\nc{\MN}{\mathcal{N}}
\nc{\MK}{\mathcal{K}}
\nc{\MM}{\mathcal{M}}
\nc{\MS}{\mathcal{S}}
\nc{\MT}{\mathcal{T}}
\nc{\BF}{\mathbb F}
\nc{\BQ}{\mathbb Q}
\nc{\MX}{\mathcal{X}}
\nc{\MA}{\mathcal{A}}
\nc{\MD}{\mathcal{D}}
\nc{\MB}{\mathcal{B}}
\nc{\MZ}{\mathcal{Z}}
\nc{\MJ}{\mathcal{J}}
\nc{\MW}{\mathcal{W}}
\nc{\MF}{\mathcal{F}}
\nc{\CF}{\mathcal{F}}
\nc{\MR}{\mathcal{R}}
\nc{\MY}{\mathcal{Y}}
\nc{\BZ}{\mathbb Z}
\nc{\BN}{\mathbb N}
\nc{\ep}{\epsilon}
\nc{\epbe}{\varepsilon_{\mathsf{BE}}}
\nc{\epout}{\varepsilon_{\mathsf{outlier}}}
\nc{\bellc}[1][h]{\MT_{#1}^\circ}
\nc{\vep}{\varepsilon}
\nc{\gapfn}[1]{\varepsilon_{#1}}
\nc{\ggapfn}[2]{\varphi_{#1}({#2})}
\nc{\epsahk}{\gapfn{0}}
\nc{\BH}{\mathbb H}
\nc{\BG}{\mathbb{G}}
\nc{\D}{\Delta}
\nc{\One}[1]{\mathbbm{1}\left\{{#1}\right\}}
\nc{\bOne}{\mathbf{1}}
\nc{\Aopt}{\mathcal{A}^{\rm opt}}
\nc{\Amul}{\mathcal{A}^{\rm mul}}

\nc{\SQ}{\mathsf Q}

\nc{\DO}{\accentset{\circ}{\D}}
\nc{\mf}{\mathfrak}
\nc{\mfp}{\mathfrak{p}}
\nc{\mfq}{\mf{q}}
\nc{\mfx}{\mf{s}}
\nc{\Sp}{\mbox{Spec}}
\nc{\Spm}{\mbox{Specm}}
\nc{\hookuparrow}{\mathrel{\rotatebox[origin=c]{90}{$\hookrightarrow$}}}
\nc{\hookdownarrow}{\mathrel{\rotatebox[origin=c]{-90}{$\hookrightarrow$}}}
\nc{\hra}{\hookrightarrow}
\nc{\tra}{\twoheadrightarrow}
\nc{\sgn}{{\rm sgn}}
\nc{\muideal}{\mu_{\mathsf{ideal}}}
\nc{\aut}{{\rm Aut}}
\nc{\Hom}{{\rm Hom}}
\nc{\img}{{\rm Im}}
\DMO{\id}{Id}
\DMO{\supp}{supp}
\DMO{\KL}{KL}
\nc{\kld}[2]{D_{\mathsf{KL}}({#1}||{#2})}
\nc{\ren}[2]{D_2({#1}||{#2})}
\nc{\chisq}[2]{\chi^2({#1}||{#2})}
\nc{\tvd}[2]{D_{\mathsf{TV}}({#1}, {#2})}
\nc{\hell}[2]{d_{\mathsf{H}}^2({#1}, {#2})}
\nc{\dbi}[3][\pi]{D_{\mathsf{bi}}^{#1}({#2} \| {#3})}
\DMO{\BSS}{BSS}
\DMO{\BES}{BES}
\DMO{\BGS}{BGS}
\DMO{\poly}{poly}
\nc{\indep}{\perp}
\DMO{\sink}{sink}
\nc{\fp}[1]{\MP_1({#1})}
\nc{\BO}{\mathbb{O}}
\nc{\BT}{\mathbb{T}}

\nc{\RR}{\mathbb{R}}
\nc{\Gradient}{\nabla}
\DMO{\diag}{diag}
\nc{\EE}{\mathbb{E}}
\nc{\MQ}{\mathcal{Q}}
\nc{\ML}{\mathcal{L}}
\nc{\cPhi}{\bar \Phi}

\DeclareMathOperator*{\PR}{Pr}
\renewcommand{\Pr}{\PR}
\nc{\E}{\mathbb{E}}
\nc{\ra}{\rightarrow}

\nc{\pmhc}[1]{\{-1,1\}^{#1}}
\nc{\Dbnd}{D}
\nc{\Bbnd}{B}

\nc{\Key}{\mathsf{KeyGen}}
\nc{\Enc}{\mathsf{Encode}}
\nc{\Encemb}{\mathsf{EncodeEmb}}
\nc{\Dec}{\mathsf{Decode}}
\nc{\sk}{\mathsf{sk}}
\nc{\pk}{\mathsf{pk}}
\nc{\lpk}{\ell_{\mathsf{pk}}}
\nc{\lsk}{\ell_{\mathsf{sk}}}
\nc{\msg}{\mathsf{m}}
\nc{\Adv}{\mathsf{Adv}}
\nc{\Red}{\mathsf{Red}}
\nc{\negl}{\mathsf{negl}}
\nc{\Ber}{\mathrm{Ber}}
\nc{\PRFPRC}{\mathsf{PRF\text{-}PRC}}
\nc{\wt}{\mathrm{wt}}
\nc{\res}[2]{{#1}_{#2}}
\nc{\bzero}{\mathbf{0}}
\nc{\Bin}{\mathrm{Bin}}
\nc{\Hyp}{\mathrm{Hyp}}

\nc{\Nrho}[1][\rho]{{N}_{#1}}
\nc{\Trho}[1][\rho]{\mathsf{T}_{#1}}
\nc{\hc}[1][n]{\{0,1\}^{#1}}
\nc{\Stab}{\mathbf{Stab}}
\nc{\bW}{\mathbf{W}}
\nc{\NS}{{\mathbf{NS}}}

\nc{\KeyS}{\mathsf{KeyGen_{Sub}}}
\nc{\EncS}{\mathsf{Encode_{Sub}}}
\nc{\DecS}{\mathsf{Decode_{Sub}}}
\nc{\WeightPerturb}{\mathsf{WeightPerturb}}
\nc{\Unique}{\mathsf{Unique}}
\nc{\PRCS}{\mathsf{PRC_{Sub}}}
\nc{\PRC}{\mathsf{PRC}}
\nc{\PRCI}{\mathsf{PRC_{Idx}}}
\nc{\SampleUnique}{\mathsf{SampleUnique}}
\nc{\PerturbDifference}{\mathsf{PerturbDifference}}

\nc{\Model}{\mathsf{Model}}
\nc{\Modelo}{\overline{\Model}}
\nc{\prompt}{\mathtt{PROMPT}}
\nc{\Setup}{\mathsf{Setup}}
\nc{\Detect}{\mathsf{Detect}}
\nc{\Sigprc}{\Sigma_{\mathsf{PRC}}}
\nc{\Wat}{\mathsf{Wat}}
\nc{\term}{\mathtt{END}}
\nc{\tok}{\mathsf{t}}
\nc{\True}{\textsf{True}}
\nc{\False}{\textsf{False}}
\nc{\Eemb}{\ME_{\mathsf{Emb}}}
\nc{\hist}{\mathsf{hist}}
\nc{\hh}{\mathsf{h}}
\nc{\freq}{\mathsf{freq}}
\nc{\ff}{\mathsf{f}}

\nc{\Hemp}[1]{H_{\mathsf{e}}^{#1}}
\nc{\Hempt}[1]{\bar{H}_{\mathsf{e}}^{#1}}
\nc{\Hemptil}[1]{\tilde{H}_{\mathsf{e}}^{#1}}
\nc{\Hmean}[1]{H_{\mathsf{m}}^{#1}}
\nc{\partition}[1][n,q]{P^{\mathsf{ptn}}_{#1}}
\nc{\Crob}{C_{\mathsf{rob}}}
\nc{\Lmax}{L_{\mathsf{max}}}
\nc{\skwat}{\sk_{\mathsf{Wat}}}
\nc{\EmbedToken}{\mathsf{EmbedChar}}
\nc{\len}{\mathrm{len}}
\nc{\Esub}{\ME_{\mathsf{sub}}}
\nc{\Ecomp}{\ME_{\mathsf{comp}}}
\nc{\comp}{\mathsf{c}}
\nc{\SE}{\mathscr{E}}
\nc{\alphb}{q}
\nc{\tAdv}{\widetilde{\Adv}}
\nc{\Funif}{{F_{\mathsf{Unif}}}}
\nc{\Alg}{\mathsf{Alg}}
\nc{\Majority}{\mathsf{Maj}}
\nc{\Dist}{\mathsf{Dist}}
\nc{\edit}{edit\xspace}
\nc{\Edit}{Edit\xspace}
\nc{\Wcomp}{\MW^{\mathsf{comp}}}

\nc{\INS}{\mathsf{INS}}
\nc{\CNS}{\mathsf{CNS}}
\nc{\cdist}{\stackrel{\mathrm{c}}{\sim}}
\nc{\SU}{\mathscr{U}}
\nc{\rr}{\bar{n}}

\nc{\KeyGen}{\mathsf{KeyGen}}
\nc{\ED}{D_{\mathsf{ED}}}
\nc{\Ham}{D_{\mathsf{Ham}}}
\nc{\bin}{\mathsf{bin}}
\nc{\EDball}{\mathcal{B}_{\mathsf{ED}}}
\nc{\SEDball}{\mathcal{B}_{\mathsf{Ham,ED}}}
\nc{\LEDball}{\mathcal{B}_{\mathsf{len,ED}}}
\nc{\epED}{\varepsilon_{\mathsf{ED}}}
\nc{\epDec}{\varepsilon_{\mathsf{Dec}}}
\nc{\Eedit}{\mathscr{E}^{\mathsf{edit}}}
\nc{\Egood}{\ME^{\mathsf{good}}}
\nc{\Ebad}{\ME^{\mathsf{bad}}}
\nc{\PermEnc}{\mathsf{PermEncode}}
\nc{\dham}{d_{\mathsf{H}}}
\nc{\dedit}{d_{\mathsf{E}}}
\nc{\pDec}{p_{\mathsf{Dec}}}
\nc{\Rclean}{R_{\mathsf{Clean}}}
\nc{\adv}{\mathcal{A}}
\nc{\bi}{\mathbf{i}}

\nc{\bj}{\mathbf{j}}
\nc{\bp}{\mathbf{p}}
\nc{\Ologit}{\MO_{\mathsf{logit}}}
\nc{\Osamp}{\MO_{\mathsf{samp}}}
\nc{\epapx}{\varepsilon_{\mathsf{apx}}}
\nc{\DistSpanner}{\textsf{DistSpanner}\xspace}
\nc{\epbase}{\vep_{\mathsf{base}}}
\nc{\cnorm}{\beta}
\nc{\occ}{\mathsf{occ}}

\nc{\phiprev}{\phi^{\mathsf{prev}}}
\nc{\depth}{\mathsf{depth}}
\nc{\range}{\mathsf{range}}
\nc{\Spread}{\mathsf{Spread}}

\nc{\Ebase}{\ME^{\mathsf{base}}}
\nc{\Eincr}{\ME^{\mathsf{incr}}}
\nc{\MIN}{\mathsf{MIN}}
\nc{\MAX}{\mathsf{MAX}}

\nc{\thrup}{\tau_{\mathsf{upp}}}
\nc{\thrlo}{\tau_{\mathsf{low}}}

\nc{\Sact}{\MS^{\mathsf{act}}}
\nc{\Sterm}{\MS^{\mathsf{term}}}
\nc{\mualg}{\mu_{\mathsf{alg}}}
\nc{\Erej}{\ME^{\mathsf{rs}}}
\nc{\Eterm}{\ME^{\mathsf{term}}}
\nc{\N}{\mathsf{N}}
\nc{\Lconst}{\mathsf{L}}
\nc{\tilmualg}{\tilde{\mu}^{\mathsf{alg}}}
\nc{\Gret}{\MG^{\mathsf{ret}}}
\nc{\Nint}{\MN^{\mathsf{int}}}
\nc{\Nappend}{\MN^{\mathsf{append}}}
\nc{\Nfinal}{\MN^{\mathsf{fin}}}
\nc{\prev}{\mathfrak{p}}
\nc{\Erestrict}{\ME^{\mathsf{restrict}}}
\nc{\Vinit}{V^{\mathsf{init}}}
\nc{\vinit}{v^{\mathsf{init}}}
\nc{\query}{\mathsf{query}}
\nc{\key}{\mathsf{key}}
\nc{\base}{\mathsf{base}}
\nc{\Nvalid}{N_{\mathsf{valid}}}
\nc{\Vvalid}{\MV^{\mathsf{valid}}}
\nc{\Equery}{\ME^{\mathsf{query}}}
\nc{\tAlg}{\widetilde{\Alg}}
\nc{\piref}{\pi_{\mathsf{ref}}}
\nc{\Sym}{\mathrm{Sym}}

\nc{\BiLinVI}{\textsf{BiLinVI}}
\nc{\AL}{A}
\nc{\AR}{B}
\nc{\aL}{a}
\nc{\aR}{b}

\nc{\actL}{\mathfrak{a}_{\mathsf{L}}}
\nc{\actR}{\mathfrak{a}_{\mathsf{R}}}

\nc{\VL}{V_{\mathsf{L}}}
\nc{\VR}{V_{\mathsf{R}}}
\nc{\MAL}{\mathcal{A}_{\mathsf{L}}}
\nc{\MAR}{\mathcal{A}_{\mathsf{R}}}

\nc{\piL}{\pi_{\mathsf{L}}}
\nc{\piR}{\pi_{\mathsf{R}}}
\nc{\uL}{u_{\mathsf{L}}}
\nc{\uR}{u_{\mathsf{R}}}
\nc{\pL}{p_{\mathsf{L}}}
\nc{\pR}{p_{\mathsf{R}}}
\nc{\hatpL}{\widehat{p}_{\mathsf{L}}}
\nc{\hatpR}{\widehat{p}_{\mathsf{R}}}
\nc{\barpR}{\bar{p}_{\mathsf{R}}}
\nc{\barpL}{\bar{p}_{\mathsf{L}}}
\nc{\cL}{c_{\mathsf{L}}}
\nc{\cR}{c_{\mathsf{R}}}
\nc{\tilpL}{\tilde{p}_{\mathsf{L}}}
\nc{\tilpR}{\tilde{p}_{\mathsf{R}}}

\nc{\PCP}{\textsf{PCP}\xspace}
\nc{\PPAD}{\textsf{PPAD}\xspace}
\nc{\EOTL}{\textsc{End of The Line}}
\nc{\Lp}{\textsf{L}\xspace}
\nc{\Rp}{\textsf{R}\xspace}

\title{The Fine-Grained Complexity of Approximate Nash Equilibrium and Free Games}
\author{Noah Golowich\thanks{Email: \url{nzg@cs.utexas.edu}}}
\date{\today}

\begin{document}
\maketitle

\begin{abstract}
We study the fine-grained complexity of computing approximate Nash equilibria and approximating the value of free games
in the regime where the approximation error vanishes. Under the \PCP for \PPAD and ETH
for \PPAD conjectures, we show that computing $\ep$-approximate Nash equilibria in 2-player $N$-action normal-form games requires time $N^{(\log(N)/\ep^2)^{1-o(1)}}$, thus showing that the classical Lipton-Markakis-Mehta algorithm \cite{LMM03} is optimal through all regimes of $\ep = \omega(1/\sqrt{N})$. While such optimality was known in the constant-$\ep$ regime \cite{Rub16}, previous work could only rule out significantly smaller running times of $N^{O(\log(N)/\ep)}$ in the regime $\ep = o(1)$.  Using similar techniques, we then establish an analogous tight
lower bound of $N^{(\log(N)/\ep^2)^{1-o(1)}}$ under ETH for $\ep$-additive value estimation in free games,
when $\ep\geq 2^{-o(\sqrt{\log n})}$, answering a question of \cite{AIM14}.
\end{abstract}

\paragraph{Statement on AI Use.}
The results written in this paper were generated with the use of GPT-5.6-sol ultra. %
Please see \cref{sec:ai-use} for a more detailed description of the role of AI in the development of this paper.

\section{Introduction}

Finding a Nash equilibrium of a two-player normal-form game is a
fundamental total search problem. In this paper we focus on the approximate version of the problem, parametrized by some $\ep \in (0,1)$ representing the maximum additive violation to the players' constraints. When $\ep$ is an inverse polynomial in the game size, computing $\ep$-approximate Nash equilibria is known to be 
\PPAD-complete~\cite{CDT09,DGP09}, while the problem admits a
quasipolynomial-time algorithm when $\ep$ is a constant. In particular, for games with payoffs normalized to $[0,1]$,
Lipton, Markakis, and Mehta showed that there is an $\ep$-approximate Nash equilibrium
in which each player has support of size $O((\log N)/\ep^2)$, yielding an
$N^{O((\log N)/\ep^2)}$-time algorithm by exhaustive
search~\cite{LMM03}. At constant accuracy this running time is
quasipolynomial; as $\ep$ decreases, however, the quadratic dependence of the exponent on
$1/\ep$ becomes the dominant cost. 

When $\ep$ is a sufficiently small constant, it is known that we cannot hope for algorithms for $\ep$-approximate Nash equilibria running in time $N^{o(\log N)}$, under the \PCP for \PPAD and ETH for \PPAD conjectures \cite{BPR15}. %
A  straightforward extension of the techniques of \cite{BPR15} suffices to show a lower bound of $N^{o(\log(N)/\ep)}$ for $1/N \ll \ep \leq o(1)$, which is off by a factor of $1/\ep$ in the exponent when compared with the upper bound of \cite{LMM03}.

In light of this discussion, a natural question is: can one obtain a fixed polynomial improvement in the exponent
$(\log N)/\ep^2$ from the result of \cite{LMM03}? Our first main result gives a negative answer throughout the
vanishing-$\ep$ regime, under standard computational
assumptions.

\begin{theorem}[Nash lower bound]
\label{thm:main-ne}
Fix any constant $c > 0$ and any slowly decreasing function (\cref{def:slowly-decreasing}) $N \mapsto \ep(N)$ satisfying $\ep(N) = \omega(1/\sqrt{N}), \ep(N) = o(1)$. Under the \PCP for \PPAD and ETH for \PPAD conjectures (\cref{conj:pcp-ppad}), there is no algorithm which finds a $\ep(N)$-Nash equilibrium in 2-player, $N$-action normal-form games with payoffs in $[0,1]$ in time $N^{((\log N)/\ep(N)^{2})^{1-c}}$. 
\end{theorem}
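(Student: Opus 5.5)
We would prove \cref{thm:main-ne} by a reduction in the style of \cite{Rub16,BPR15}: start from an \PPAD-hard constant-gap problem (via the \PCP for \PPAD conjecture, \cref{conj:pcp-ppad}), encode it as a two-player game, and show that any $\ep$-Nash equilibrium lets us decode an approximate solution. The new ingredient is to make the game accuracy $\ep$ enter quadratically in the exponent, as the \cite{LMM03} sparsification bound $k=\Theta(\log N/\ep^2)$ suggests. Concretely, we start from an $\ep_0$-\textsc{Generalized Circuit} (or \EOTL-derived) instance of size $n$ with constant gap. Under ETH for \PPAD it requires time $2^{n^{1-o(1)}}$. We then build a game of size $N$ in which each player's strategy encodes an assignment to $k$ "blocks" of the $n$ variables. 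We choose $k\approx n^{1-o(1)}$ so that $N=2^{\Theta(n/k)\cdot\text{polylog}}$, and hence $n\approx k\log N$. The target is $k\log N \gtrsim (\log N/\ep^2)^{1-c}\log N$, which forces $k\approx 1/\ep^2$ up to $N^{o(1)}$-factors in the exponent.

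The key steps, in order, are as follows.
\begin{enumerate}
\item Following \cite{Rub16}, partition the variables into $k$ groups of size $n/k$. Each action of a player is a (group index, partial assignment to that group) pair, so the action set has size $k\cdot 2^{O(n/k)}$. The slowly-decreasing hypothesis on $\ep(N)$ (\cref{def:slowly-decreasing}) lets us solve for $k$ and $N$ consistently.
\item Add an imitation/consistency gadget, where one player is rewarded for matching the other's assignment, together with a matching-pennies-style gadget over group indices. These force near-uniform mixing over the $k$ groups.
\item The crucial change relative to \cite{BPR15} is the analysis of the gadgets. A deviation of size $\delta$ from uniformity over $k$ groups should cost about $\delta^2$ or behave like a $\sqrt{k}$-scale fluctuation, rather than $\delta$. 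We would design the payoffs so that an $\ep$-equilibrium yields per-group marginals with average error $O(\ep\sqrt{k})$ instead of $O(\ep k)$. With $k=\Theta(1/\ep^2)$, this still gives constant per-coordinate accuracy, which suffices to decode an $\ep_0$-solution to the constant-gap instance.
\item Conclude via the time bound: an algorithm running in time $N^{(\log N/\ep^2)^{1-c}}$ would solve the base instance in time $2^{o(n)}$-ish, contradicting ETH for \PPAD.
\end{enumerate}

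The main obstacle is step 3, gaining the square root. Losing a factor linear in $k$ is exactly what gives the weaker $N^{\log N/\ep}$ bound. To get the square root, we would try replacing the linear matching-pennies penalty with a randomized or "spread" encoding. Each action would sample a random subset or a random $\pm1$ combination of the groups, so that a player's payoff depends on inner products with the opponent's mixed distribution. Concentration (anti-concentration in the converse) would then give $\ell_2$ rather than $\ell_1$ control, so that errors aggregate like $\sqrt{k}\,\ep$. We would then prove a lemma that any $\ep$-equilibrium of this game has marginals within $\ell_2$-distance $O(\ep)$ of uniform over group-encodings. A Fourier/hypercontractivity argument would then transfer this to a constant fraction of groups being decoded correctly. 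The vanishing-gap regime $\ep=\omega(1/\sqrt N)$ ensures $k\le N$, so the construction is well-defined.
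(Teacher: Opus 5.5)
\textbf{There is a genuine gap: the proposal puts the loss in the wrong place.} In your construction both players pick a group of size $b=n/k$ and an explicit assignment to it, so fitting into $N$ actions forces $b\approx\log N$.

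Suppose the gadgets forced the block marginals to be \emph{exactly} uniform. An $\ep$-equilibrium would still certify only source regret $O(\ep n/b^2)$. The reason is that a block pair contains only $O(1)$ source constraints, which is what keeps the payoffs bounded. So a player's gain from changing its decoded assignment is only a $\Theta(1/k^2)$ fraction of its total source regret summed over all $n$ coordinates. That is $\Theta(b^2/n)$ times the normalized, per-coordinate-average source regret.

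To rule out time $N^{((\log N)/\ep^2)^{1-c}}$ for every $c>0$ you need essentially $n\approx(\log N)^2/\ep^2$. Hence $k\approx(\log N)/\ep^2$, not $1/\ep^2$ as you wrote. Then $b^2/n\approx\ep^2$, and the decoded error is about $1/\ep$, which is useless. This dilution is exactly the $N^{(\log N)/\ep}$ barrier. No sharpening of the uniformity or consistency gadgets can remove it, whether through $\ell_2$ instead of $\ell_1$ control, random $\pm1$ combinations of groups, or hypercontractivity. In the paper the block-marginal errors are already $O(\ep)$ in $\ell_1$ (and $O(\ep/\rho)$ on one side), and they are harmless.

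\textbf{The missing idea is a way to raise the per-pair scale without paying for it in actions.} The paper does this with asymmetric blocks. \Lp has $t$ blocks of size $h$ and \Rp has $h$ blocks of size $t$, with $th\approx m$ (the source dimension) and $O(1)$ edges per block pair (\cref{lem:partitions}). \Rp plays explicit $z\in\{-1,1\}^t$. \Lp instead plays $v$ from the compressed family $\MV(h,t)\subseteq\{-1,1\}^h$ of size $h^4 2^t$, built from $4$-wise independent Hadamard-type signs times $t$ group signs. Its convex hull contains $\rho[-1,1]^h$ with $\rho=\frac14\sqrt{t/h}$ (\cref{lem:scaled-hypercube}).

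A mixture over such $v$ represents $\rho\bx_I$, so the scale becomes $\Theta(\rho)=\Theta(t/\sqrt m)$ rather than $t^2/m$. Then $t\approx\ep\sqrt m$ suffices, with only $2^{O(t\log m)}$ actions. Two further pieces are needed:
\begin{itemize}
\item Because $\mathrm{conv}(\MV)$ strictly contains $\rho[-1,1]^h$, \Rp needs extra type-I secondary actions that punish \Lp whenever a conditional mean leaves $\rho[-1,1]^h$.
\item The uniformity gadget for \Rp's $h$ blocks uses a second copy $\MW$ of $\MV(h,t)$ rather than all $2^h$ sign vectors. This loses a factor $\rho$, which is affordable. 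So in the paper, $\pm1$ combinations over groups are used to \emph{save} actions at a cost, not to gain accuracy.
\end{itemize}
Finally, when $\ep$ is polynomially small in $N$, the paper needs a $t=1$ realization using Hadamard rows so that each player has only $O(m)$ actions. This is where $\ep=\omega(1/\sqrt N)$ is actually used.
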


\paragraph{Free games.} A related approximation problem arises for \emph{free games},
two-prover interactive proofs in which the verifier sends the provers independent
questions.  Given an explicit free game (namely, one where the input takes the form of a table encoding the verifier function), the computational problem $\ep$-\textsc{FreeGame} asks for an
additive-$\ep$ estimate of its value. Writing $N$ for the number of entries
in the verifier table, Aaronson, Impagliazzo, and Moshkovitz \cite{AIM14} gave an
$N^{O((\log N)/\ep^2)}$-time algorithm, which proceeds by subsampling a logarithmic-sized set of questions. This sampling procedure is analogous to the result of \cite{LMM03} which proceeds by subsampling logarithmic-sized sets of actions of a normal-form game. 

On the lower bound side, \cite{AIM14} gave an 
$N^{\widetilde\Omega((\log N)/\ep)}$ lower bound under ETH, and asked
whether the gap in the dependence on $\ep$ could be closed~\cite{AIM14}. This gap was recently highlighted by Bernasconi et al., \cite{BCCF26}, who introduced a unified framework for finding small covers of families of low-degree polynomials defined over polytopes, which allows them to recover a number of $N^{O(\log(N)/\ep^2)}$-time algorithms based roughly on ``sparsification'' ideas, including those for $\ep$-approximate Nash \cite{LMM03} and $\ep$-\textsc{FreeGame} \cite{AIM14}. 

Our second main result shows that the inverse-quadratic dependence on $\ep$ for $\ep$-\textsc{FreeGame} is also optimal:
\begin{theorem}[Free game lower bound]
\label{thm:freegame-main}
There is a sufficiently small absolute constant $\alpha^\star>0$ such that
the following holds.  Fix any constant $c>0$ and any slowly decreasing function (\cref{def:slowly-decreasing})
function
$N\mapsto\ep(N)>0$ such that $\ep(N)=o(1)$ and
$\ep(N)\geq2^{-\alpha^\star\sqrt{\log N}}$ for all sufficiently large $N$.
Under \cref{conj:eth-freegame}, there is no uniform deterministic algorithm
which, given any explicit free game of size $N$, computes an
additive-$\ep(N)$ estimate of its value in time
\begin{align}
N^{((\log N)/\ep(N)^2)^{1-c}}.
\label{eq:freegame-forbidden-runtime}
\end{align}
\end{theorem}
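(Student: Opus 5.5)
We reduce from a 2-CSP (label-cover-type) problem whose hardness is given by \cref{conj:eth-freegame}, using the same template as the Nash lower bound but at the level of value estimation. The key idea is \emph{gap amplification of the right quadratic shape}. Starting from a constant-gap free game $G_0$ on $n$ questions per side with completeness $1$ and soundness $1-\delta_0$, we build a free game $G$ in which each prover is asked a block of $k \approx \log(n)/\ep^2 \cdot (\text{polylog})$ questions. The verifier then checks a random sub-collection of the induced constraints and accepts with probability that is an affine function of the \emph{fraction} of satisfied constraints, suitably centered. The structure we want is a hypothesis-testing distinction: in the YES case the value is $\tfrac12+\ep$, and in the NO case it is at most $\tfrac12$. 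Distinguishing these with bias $\ep$ should require the provers' answers to ``encode'' $\approx 1/\ep^2$ independent constraint checks, mirroring the LMM sample-complexity intuition. Concretely, the verifier evaluates a random sign-weighted majority/threshold of $m=\Theta(1/\ep^2)$ sampled constraints. By anticoncentration, a $\delta_0$ shift in the mean of Bernoulli satisfactions shifts the acceptance probability of a threshold at the median by $\Theta(\delta_0\sqrt m)=\Theta(\ep\sqrt{m}\cdot\ldots)$. We tune this so that constant-gap soundness becomes an additive $\ep$ gap, while the question alphabet grows only to $n^{\tilde O(1/\ep^2)}$.

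The steps, in order, are as follows.

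First, we invoke ETH together with a quasi-linear size PCP (Dinur / Moshkovitz--Raz style) to get that constant-gap 2-CSP on $n$ variables needs time $2^{n^{1-o(1)}}$. We then apply the AIM14 birthday-repetition reduction, so that the prover question sets are $\binom{[n]}{\sqrt n}$-blocks; this yields a free game with constant gap.

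Second, we compose with a tester of the kind above, where each block question becomes a tuple of $m$ blocks. The threshold verifier then produces an $\ep$-gap free game of size $N = 2^{\tilde O(\sqrt n \cdot m\log n)}$. Choosing parameters so that $\log N \approx \sqrt n\, m$ up to $n^{o(1)}$ factors, an algorithm running in time $N^{((\log N)/\ep^2)^{1-c}}$ has exponent $\sqrt n\, m \cdot (\sqrt n\, m \cdot m)^{1-c}$. Here we must also balance block size against $m$ so that this is $2^{n^{1-c'}}$, contradicting ETH. The constraint $\ep \ge 2^{-\alpha^\star\sqrt{\log N}}$ keeps $m$ subpolynomial relative to $n$, so that the $n^{o(1)}$ losses are absorbed. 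The slow-decrease assumption lets us pick $n$ for each $N$ consistently.

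Third, we prove soundness. Any prover strategy for $G$ yields, by a direct-product / correlation argument, a strategy for the base game whose satisfied fraction controls the bias. Completeness is immediate.

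The main obstacle is this soundness step. The provers may correlate answers across the $m$ coordinates of a tuple to beat the naive independent-Bernoulli analysis of the threshold. We would try to handle it by showing that the acceptance probability is a low-degree (multilinear) function of per-constraint satisfaction indicators. Because the game is free, the coordinates' questions are independent. We would then use a hypercontractivity / invariance-principle bound to show that the threshold's bias is at most $O(\sqrt m\cdot(\text{average violation}))+o(\ep)$ regardless of correlations. If that fails, the fallback is to apply the verifier's random signs and permutations to symmetrize the strategy before the analysis.
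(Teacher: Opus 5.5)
Your first step (ETH plus a quasi-linear PCP) is essentially the paper's source, though the paper needs small constant soundness $\sigma_\star=2^{-24}$, not just $1$ vs.\ $1-\delta_0$. The rest of the plan cannot work. It fails already at the level of parameters, independently of the soundness issue you flag.

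To rule out time $N^{((\log N)/\epsilon^2)^{1-c}}$ from a $2^{n^{1-o(1)}}$ bound, the game must have $\log N\approx\epsilon\sqrt n$ up to polylogarithmic factors. In other words, it must get \emph{cheaper} as the required gap shrinks. Birthday repetition with blocks of size $\sqrt{\epsilon n}$ only reaches $\log N\approx\sqrt{\epsilon n}$, which is exactly the $N^{\tilde\Omega(\log N/\epsilon)}$ barrier. Your composition goes the other way. It keeps the constant-gap $\sqrt n$-block game and multiplies the question length by $m=\Theta(1/\epsilon^2)$, so $\log N\approx\sqrt n/\epsilon^2$ and $n\approx(\log N)^2\epsilon^4$: a source that shrinks as $\epsilon\to0$.

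The exponent you compute, $n^{1-c/2}m^{3-2c}$, is $o(n)$ only if $1/\epsilon^2\lesssim(\log N)^{c/(3-c)}$. In that range $((\log N)/\epsilon^2)^{1-c}\le(\log N)^{1-2c/(3-c)}$, so the conclusion already follows from the known constant-accuracy $N^{\tilde\Omega(\log N)}$ bound. For $\epsilon=1/\log N$, or near $2^{-\alpha^\star\sqrt{\log N}}$ (where $(\log N)^2\epsilon^4<1$), you get nothing. Your claim that the lower limit on $\epsilon$ keeps $m$ subpolynomial in $n$ is false.

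The threshold step is also backwards. With constant $\delta_0$ and $m=\Theta(1/\epsilon^2)$ one has $\delta_0\sqrt m\gg1$, so it enlarges the gap, whereas shrinking a constant gap to $\epsilon$ is trivial and saves nothing. More fundamentally, suppose the provers write explicit labels for the $B$ variables in their questions (so $\log N\gtrsim B$) and the verifier reads only satisfaction of constraints inside the question pair. Then its output on constraint-free pairs cannot depend on the answers. So the gap is at most the probability that a pair contains a constraint, about $DB^2/n$ for balanced questions. This forces $n\lesssim(\log N)^2/\epsilon$ however you balance block size against $m$.

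The missing idea is how to let a question pair see $\Theta(1)$ constraints while each answer carries only $O(\log N)$ bits, paying with an order-$\epsilon$ correlation per constraint. The $1/\epsilon^2$ enters through $\rho^2\asymp k/h$, not through the sample complexity of repeated checks: an answer of about $k$ bits can correlate at level $\rho=\frac14\sqrt{k/h}$ with every sign pattern on $h$ coordinates.

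The paper's construction is as follows.
\begin{itemize}
\item It uses asymmetric blocks: $t\approx\log N$ left blocks of size $h$ and $h$ right blocks of size $t$, with $th\approx n$ and $O(1)$ constraints per block pair (\cref{lem:partitions}).
\item The right prover labels its $t$ variables and also claims labels $a_e$ for the left endpoints of its incident constraints.
\item The left prover answers only some $v\in\MV(h,k)$ from \cref{lem:scaled-hypercube}, with $k\approx\epsilon^2h$, $\log|\MV|=k+4\log h$ and $\rho[-1,1]^h\subseteq\operatorname{conv}(\MV)$.
\item The verifier outputs $(1+S)/2$, where $S$ is a normalized sum of $\beta_\theta(\lambda_s(i))\,v_{\lambda_s(i)}\,\tau(a_e)$ over satisfied occurrences. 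Completeness $\frac12+\Omega(\sqrt{k/h})$ then follows from the support-function bound.
\end{itemize}

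Soundness needs no direct-product or invariance argument. The random label hash $\tau$ gives $\sum_s\E_\tau\|z_{s,\tau}\|_2^2\le D\sum_i\max_a d_i(a)\le\sigma_\star D^2n$. The $2p$-wise independent mask $\beta_\theta$, via a $2p$-th moment bound and a union bound over the $2^p$ answers (\cref{lem:freegame-adaptive-selector}), shows that even a $\theta$-adaptive choice of $v$ gains at most $2\sqrt p\,\|z\|_2$ in expectation. This gives value at most $\frac12+O(\sqrt{p\sigma_\star/h})$, hence a gap $\Omega(\sqrt{k/h})\ge4\epsilon$. Since $\log N\approx t+k\log h$, this means $n\approx(\log N)^2/(\epsilon^2\log((\log N)/\epsilon^2))$. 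The hypothesis $\epsilon\ge2^{-\alpha^\star\sqrt{\log N}}$ is what guarantees $4\log h\le k$, so that $p\le2k$.

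None of these ingredients appears in your proposal: compressed answers, a correlation-type verifier, and masking to control the slack $\operatorname{conv}(\MV)\supsetneq\rho[-1,1]^h$. Your own soundness step is left open in any case.
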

Since free games may be interpreted as dense constraint satisfaction problems (CSPs) with polynomial-sized alphabets, the lower bound of \cref{thm:freegame-main} yields a corresponding lower bound for additive approximation of such CSPs. 

\paragraph{Sketch of technical ideas.} The main challenge in the proof of \cref{thm:main-ne} (which parallels those in proving \cref{thm:freegame-main}) is the loss incurred by the usual birthday-repetition
reduction~\cite{BPR15}. Typically, in this reduction, both players encode blocks of $k$ vertices from a hard instance of computing Nash equilibria in a bipartite polymatrix game with some number $m$ of players.
Then a pair of blocks captures only
$O(k^2/m)$ edges, which ultimately leads to a lower bound of $N^{\log(N)/\ep}$, asymptotically smaller than $N^{\log(N)/\ep^2}$ when the desired accuracy $\ep(N)$ is $o(1)$ (corresponding to the regime $k = o(\sqrt{m})$). We instead use an asymmetric partitioning into blocks (i.e., where players' blocks are different sizes) combined with a construction generalizing the Hadamard code to improve the exponent in the lower bound from $\log(N)/\ep$ to $\log(N)/\ep^2$.
The free-game result in \cref{thm:freegame-main} reuses this idea of asymmetric block sizes, but
needs additional ideas resulting from the fact that in free games it is not possible for one player to ``punish'' the other as in normal-form games.  %

\section{Preliminaries}
\label{sec:prelims}
In this section, we introduce preliminaries needed for our main results. First, we introduce the following notion of \emph{slowly decreasing}, which facilitates some technical details in the proofs of \cref{thm:freegame-main,thm:main-ne}.
\begin{definition}[Slowly decreasing]
\label{def:slowly-decreasing}
A function $f : \BN \to (0,1)$ is \emph{slowly decreasing} if it is  efficiently computable, non-increasing and there is a constant $c > 0$ so that for all $N$, $f(N+1) \geq c \cdot f(N)$. 
\end{definition}

\subsection{Normal-form games}
Our focus when considering normal-form games is on the setting where there are 2 players. We refer to the two players as $\mathsf{L}$ and $\mathsf{R}$ (``left'' and ``right''). Each player has \emph{action set} $[N]$; the game is then specified by utility functions $\uL, \uR : [N] \times [N] \to \BR$.

For an error parameter $\ep > 0$, an \emph{$\ep$-Nash equilibrium} of the game $(\uL, \uR)$ is a tuple $(\pL, \pR) \in \Delta([N])^2$ satisfying
\begin{align}
    \uL(\pL, \pR) \geq \max_{a' \in [N]} \uL(a', \pR)-\ep, \qquad \uR(\pL, \pR) \geq \max_{a' \in [N]} \uR(\pL, a')-\ep\nonumber,
\end{align}
where we write $\uL(\pL,\pR) = \E_{a \sim \pL, a' \sim \pR}[\uL(a,a')]$, and similarly for $\uR(\pL, \pR), \uL(a', \pR), \uR(\pL, a')$. 

\subsection{\PCP for \PPAD and ETH for \PPAD conjectures}
To state the conjectures upon which \cref{thm:main-ne} rests, we first introduce \emph{graphical games}. An $m$-player graphical game is specified by a graph $G$ with vertex set $[m]$ and edge set $E$. Each player $i \in [m]$ has an action set $\MA_i$, and for each ordered pair $(i,j)$ with $\{i,j\} \in E$, there is an associated utility function $u_{ij} : \MA_i \times \MA_j \to [0,1]$. For an action profile $\ba = (\ba_1, \ldots, \ba_m) \in \prod_{i=1}^m \MA_i$, the utility of player $i$ is then given by
\begin{align}
u_i(\ba) = \sum_{j:\ \{i,j\} \in E} u_{ij}(\ba_i, \ba_j)\nonumber.
\end{align}
For values $\ep, \delta > 0$, an \emph{$(\ep, \delta)$-weak Nash equilibrium} in a graphical game is a tuple $p = (p_1, \ldots, p_m) \in \prod_{i \in [m]} \Delta(\MA_i)$ so that for at least $(1-\delta) \cdot m$ players $i$, we have
\begin{align}
\E_{\ba \sim p}[u_i(\ba)] \geq \max_{a_i' \in \MA_i} \E_{\ba \sim p}[u_i(a_i', \ba_{-i})] - \ep\nonumber.
\end{align}

Our results rely on the following conjecture which is a consequence of the combination of the \PCP for \PPAD conjecture and the ETH for \PPAD conjecture~\cite{BPR15,Rub16}.
\begin{conjecture}[\PCP for \PPAD and ETH for \PPAD]
\label{conj:pcp-ppad}
There are constants $\ep, \delta > 0$ so that finding an $(\ep,\delta)$-weak Nash equilibrium in a bipartite polymatrix game with $m$ players on each side, maximum degree 3, and 2 actions per player takes time $2^{\tilde \Omega(m)}$.
\end{conjecture}

\subsection{\BiLinVI{} problem}
In place of directly reducing from \cref{conj:pcp-ppad}, it will be more convenient to reduce from the following two-sided linear variational-inequality problem, which we call \BiLinVI{}. It is very similar to the \textsf{LinVI} problem studied in \cite{APSY26}.
\begin{definition}[\BiLinVI]
A \BiLinVI{} instance is specified by an integer $m \in \BN$ and matrices $\AL, \AR \in [-1,1]^{m \times m}$ and vectors $\aL, \aR \in [-1,1]^m$. For an error parameter $\eta$, the problem is to find vectors $\bx, \by \in [-1,1]^m$ for which
\begin{align}
\max_{\bx' \in [-1,1]^m} \frac{1}{2m} \left\langle \bx' - \bx, \AL \by + \aL \right\rangle  &\leq \eta,\label{eq:bilinx}\\
\max_{\by' \in [-1,1]^m} \frac{1}{2m} \left\langle \by' - \by, \AR \bx + \aR \right\rangle &\leq \eta.\label{eq:biliny}
\end{align}
\end{definition}

\begin{lemma}
    \label{lem:bilinvi-hard}
Under \cref{conj:pcp-ppad}, there is an absolute constant $\eta$ so that finding an $\eta$-approximate solution to a \BiLinVI{} instance takes $2^{\tilde \Omega(m)}$ time, even if the matrices $\AL, \AR$ are promised to have at most 3 nonzero entries in every row and column.
\end{lemma}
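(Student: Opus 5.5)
We reduce directly from the weak-Nash problem of \cref{conj:pcp-ppad}. Take a bipartite polymatrix game with $m$ players on each side (left players $i\in[m]$, right players $j\in[m]$), maximum degree 3, and two actions per player, labelled $\{-1,+1\}$. A mixed strategy of player $i$ is then a single number $\bx_i\in[-1,1]$: its expected action, i.e.\ $\Pr[+1]-\Pr[-1]$. Since each edge payoff $u_{ij}:\{\pm1\}^2\to[0,1]$ is a function on $\{\pm1\}^2$, it is multilinear:
\[
u_{ij}(s,t)=\alpha_{ij}+\beta_{ij}s+\gamma_{ij}t+\theta_{ij}st ,
\]
with all coefficients in $[-1,1]$. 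Under product mixed strategies, the part of player $i$'s expected utility that depends on its own strategy $\bx_i$ is therefore $\bx_i\cdot g_i$, where
\[
g_i=\sum_{j\sim i}\bigl(\beta_{ij}+\theta_{ij}\by_j\bigr).
\]

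We build the instance so that row $i$ of $\AL\by+\aL$ equals $g_i/3$ (dividing by the degree bound keeps entries in $[-1,1]$). Concretely, set $(\AL)_{ij}=\theta_{ij}/3$ and $(\aL)_i=\sum_{j\sim i}\beta_{ij}/3$, and define $\AR,\aR$ symmetrically from the right players' payoffs. Since degrees are at most $3$, $\AL$ and $\AR$ have at most $3$ nonzeros per row and column.

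Player $i$'s regret at $(\bx,\by)$ is then
\[
\max_{x_i'\in[-1,1]}(x_i'-\bx_i)\,g_i = 3\max_{x_i'\in[-1,1]}(x_i'-\bx_i)(\AL\by+\aL)_i\ \ge 0,
\]
because the maximum of a linear function over $[-1,1]$ is attained at a pure action. The inner-product objective in \eqref{eq:bilinx} separates over coordinates, so
\[
\max_{\bx'}\tfrac{1}{2m}\langle \bx'-\bx,\AL\by+\aL\rangle=\tfrac{1}{6m}\sum_i \mathrm{reg}_i ,
\]
a sum of nonnegative per-player regrets. The same holds for \eqref{eq:biliny}.

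Now suppose $(\bx,\by)$ is an $\eta$-approximate solution. Then the average regret over all $2m$ players is at most $6\eta$. By Markov's inequality, at most a $6\eta/\ep$ fraction of players have regret exceeding $\ep$. Choosing $\eta=\ep\delta/6$, with $\ep,\delta$ the constants of \cref{conj:pcp-ppad}, makes this fraction at most $\delta$. Hence the strategies $p_i$ with $\Pr[+1]=(1+\bx_i)/2$ form an $(\ep,\delta)$-weak Nash equilibrium.

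The reduction and the translation back run in $\mathrm{poly}(m)$ time, and the instance size parameter is unchanged. So an algorithm solving \BiLinVI{} faster than $2^{\tilde\Omega(m)}$ would contradict \cref{conj:pcp-ppad}.

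The main point needing care is that the weak equilibrium notion counts players on both sides: ``$(1-\delta)\cdot m$ players'' should be read against the full player set, and the Markov bound is applied to the combined $2m$ players, which is consistent with the above. One also has to check the normalization: the regret of a mixed strategy equals the regret of the corresponding $\bx_i$, since utility is linear in $\Pr[+1]$, and constant terms ($\alpha_{ij}$ and the $\gamma_{ij}$ terms, which do not involve $\bx_i$) cancel in regret differences. No substantial obstacle beyond this bookkeeping is expected.
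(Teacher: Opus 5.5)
Your proposal is correct and follows essentially the same reduction as the paper. You encode each player's mixed strategy by its mean in $[-1,1]$, take the affine own-payoff coefficient divided by $3$ as $\AL\by+\aL$, use coordinatewise separability to write the \BiLinVI{} regret as an average of per-player regrets, and finish with Markov over all $2m$ players; the only difference is that you normalize by half the action-$1$-minus-action-$(-1)$ payoff difference (the bilinear Fourier coefficients), so your constant is $\eta=\ep\delta/6$ rather than the paper's $\ep_*\delta_*/3$.
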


\subsection{Free games, bipartite 2-CSPs, and the Exponential Time Hypothesis}

A \emph{free game} is a tuple
$\mathcal F=(X,Y,A,B,V)$ consisting of finite question sets $X,Y$, finite
answer sets $A,B$, and a verifier
$V:X\times Y\times A\times B\to[0,1]$.  The verifier draws $(x,y)$ uniformly
from the product set $X\times Y$, and the value of the game is
\begin{align}
\omega(\mathcal F)
:=\max_{f_{\mathsf L}:X\to A,\,f_{\mathsf R}:Y\to B}
\frac{1}{|X||Y|}\sum_{x\in X,\,y\in Y}
V(x,y,f_{\mathsf L}(x),f_{\mathsf R}(y)).
\label{eq:freegame-value}
\end{align}
The problem $\ep$-\textsc{FreeGame} asks for a number within additive
$\ep$ of $\omega(\mathcal F)$.  We measure the size of an explicit free
game by
$
n:=|X||Y||A||B|,
$
which denotes the number of entries in the verifier table; the games in our
reduction have constant-bit entries.

A \emph{bipartite $2$-CSP} is a constraint-satisfaction problem in which
each constraint involves one variable from each of two sides.  Formally, an
instance is a tuple
\begin{align}
\mathcal I
:=\big(\VL,\VR,E,\Sigma_{\mathsf L},\Sigma_{\mathsf R},
(P_e:\Sigma_{\mathsf L}\times\Sigma_{\mathsf R}\to\{0,1\})_{e\in E}\big),
\label{eq:bipartite-2csp}
\end{align}
where $\VL$ and $\VR$ are finite sets of variables,
$\Sigma_{\mathsf L}$ and $\Sigma_{\mathsf R}$ are their respective finite
label alphabets, and $E$ is a nonempty multiset of constraint occurrences
$e=(i,j)\in\VL\times\VR$.  A labeling consists of maps
$a:\VL\to\Sigma_{\mathsf L}$ and
$b:\VR\to\Sigma_{\mathsf R}$; it satisfies an occurrence $e=(i,j)$ exactly
when $P_e(a(i),b(j))=1$.  The value of the instance is the largest fraction
of simultaneously satisfied constraint occurrences:
\begin{align}
\operatorname{val}(\mathcal I)
:=\max_{a:\VL\to\Sigma_{\mathsf L},\,b:\VR\to\Sigma_{\mathsf R}}
\frac{1}{|E|}\sum_{e=(i,j)\in E}P_e(a(i),b(j)).
\label{eq:bipartite-2csp-value}
\end{align}
In particular, $\operatorname{val}(\mathcal I)=1$ means that one labeling
satisfies every constraint.  We allow parallel occurrences with the same
endpoints, and count them separately in $|E|$ and when measuring degrees.  The \emph{support graph} of $\mathcal I$ is the bipartite
graph containing $(i,j)$ whenever at least one constraint occurrence has
those endpoints.  We say that $\mathcal I$ is $D$-regular if every variable
is incident to exactly $D$ constraint occurrences, counted with
multiplicity.

A Boolean-valued free game can therefore be viewed as a dense bipartite
$2$-CSP: the questions are the variables, the answers are their labels, and
there is a constraint for every pair of questions.  A verifier with values in
$[0,1]$ is the corresponding real-valued version of this constraint system.

\paragraph{ETH and consequences.} We use the following standard sparse formulation of ETH (see \cite{IPZ01}).  %

\begin{conjecture}[Exponential Time Hypothesis]
\label{conj:eth-freegame}
There is an absolute constant $C_{\mathsf{sat}}\geq1$ such that no uniform
deterministic algorithm decides satisfiability of $3$-CNF formulas with $r$
variables and at most $C_{\mathsf{sat}}r$ clauses in time $2^{o(r)}$.
\end{conjecture}

Our reduction establishing hardness for free games will use the the following fixed-soundness consequence of the PCP theorem.  Its proof is deferred to
\cref{sec:free-game}.

\begin{lemma}[Fixed-soundness normalization]
\label{lem:freegame-source-normalization}
Let $\sigma_\star:=2^{-24}$ and $\eta > 0$ be an arbitrary constant.  Under
\cref{conj:eth-freegame}, there is no uniform deterministic algorithm which takes as input a bipartite $D$-regular $2$-CSP instance $\mathcal{I}$ of the following form: %
\begin{align}
\mathcal I
=\big(\VL,\VR,E,\Sigma_{\mathsf L},\Sigma_{\mathsf R},
 (P_e:\Sigma_{\mathsf L}\times\Sigma_{\mathsf R}\to\{0,1\})_{e\in E}\big),
\qquad
|\VL|=|\VR|=m,\qquad |E|=Dm,
\label{eq:freegame-pcp-instance}
\end{align}
where $D\in\mathbb{N}$, $q_{\mathsf L}:=|\Sigma_{\mathsf L}|$, and
$q_{\mathsf R}:=|\Sigma_{\mathsf R}|$ are absolute constants, runs in time $2^{O(m^{1-\eta})}$, and distinguishes between the cases $\operatorname{val}(\mathcal I)  =1$ and $\operatorname{val}(\mathcal I)\leq\sigma_\star$.  
\end{lemma}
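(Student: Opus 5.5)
We reduce from sparse 3-SAT through a quasi-linear-size PCP, then push the soundness below $\sigma_\star$ and regularize, keeping every step quasi-linear in size. We argue by contradiction: a $2^{O(m^{1-\eta})}$-time distinguisher for the normalized instances will yield a $2^{o(r)}$-time algorithm for sparse 3-CNF, contradicting \cref{conj:eth-freegame}.

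\textbf{Step 1 (quasi-linear PCP).} Given a 3-CNF $\varphi$ with $r$ variables and at most $C_{\mathsf{sat}}r$ clauses, apply a quasi-linear-size PCP (Dinur's gap amplification, or Ben-Sasson--Sudan combined with Dinur). This produces, in time $\poly(r)$, a 2-CSP with the following properties:
\begin{itemize}
\item $m_0 = r\cdot \mathrm{polylog}(r)$ variables;
\item a constant-size alphabet and constant degree $d_0$;
\item value $1$ if $\varphi$ is satisfiable, and value at most $1-\gamma_0$ otherwise, for an absolute constant $\gamma_0>0$.
\end{itemize}
By standard preprocessing (replacing each variable by an expander cloud with equality constraints, as in Dinur) we may also assume the constraint graph is $d_0$-regular.

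\textbf{Step 2 (soundness amplification to $\sigma_\star$).} To drive the soundness from $1-\gamma_0$ down to $\sigma_\star = 2^{-24}$ while keeping the size linear, we avoid parallel repetition on the full instance, which would blow up the size polynomially. Instead we use a constant number of rounds of Dinur-style gap amplification (graph powering with expander walks) followed by alphabet reduction is not desired here. Concretely, we take $t = O(\log(1/\sigma_\star)/\gamma_0)$-step expander-walk powering: each vertex's label becomes its view of the radius-$t$ neighborhood. This keeps the number of variables linear, the alphabet constant (of size $q^{d_0^{O(t)}}$), and the degree constant. By Dinur's amplification lemma, the unsatisfiability gap grows to $\min(\Omega(t)\gamma_0,\, c)$ for a universal constant $c$.

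Because Dinur's lemma saturates at a constant gap, we then apply $\ell$-fold parallel repetition with $\ell$ a constant. The repetition is taken on constraints sampled along a constant-degree derandomized structure; equivalently, we take the $\ell$-fold tensor of the constraint graph restricted to $\ell$-tuples of edges. This multiplies the size by a constant factor only if we use a bounded-degree derandomized repetition such as Impagliazzo--Kabanets--Wigderson direct products. The simplest concrete route is Raz's theorem, applied to a projection game obtained by the standard clause-variable bipartization. Taking $\ell$ constant yields soundness $(1-\gamma')^{\Omega(\ell)} \le \sigma_\star$, but the number of vertices becomes $m_0^\ell$, which is too large. So we use the constant-degree version: vertices are $\ell$-tuples along paths of a fixed expander, in the style of the Dinur--Meir derandomized parallel repetition, which achieves soundness any fixed constant at linear size.

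\textbf{Step 3 (bipartization and balancing).} Make the instance bipartite with a left copy and a right copy of each variable, and constraint occurrences between the copies. Then pad so that $|\VL| = |\VR| = m$ and every vertex has degree exactly $D$, using dummy always-satisfied constraints with small multiplicity or by duplicating edges uniformly. Duplicating edges uniformly is the safer choice, since always-satisfied dummy constraints would raise the soundness. Soundness, completeness and constant $D, q_{\mathsf L}, q_{\mathsf R}$ are preserved.

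\textbf{Conclusion and main obstacle.} Since $m = r\,\mathrm{polylog}(r)$, a distinguisher running in time $2^{O(m^{1-\eta})} = 2^{o(r)}$ would decide $\varphi$, contradicting \cref{conj:eth-freegame}. The main obstacle is Step 2: reaching an arbitrarily small constant soundness while preserving quasi-linear size. My first attempt is Dinur amplification to a constant gap followed by Dinur--Meir derandomized parallel repetition, which is known to give constant soundness at linear blowup. This only needs the result for the single fixed constant $\sigma_\star$, and any quasi-linear loss is absorbed by the $m^{1-\eta}$ slack.
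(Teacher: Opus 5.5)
Your Step~2 does not go through, and it carries the entire content of the lemma. The statement must hold for every constant $\eta>0$; the free-game proof uses $\eta=c/16$ with $c$ arbitrarily small. So you need soundness $\sigma_\star$ at size $m=r^{1+o(1)}$. If $m=r^{C}$ for a constant $C>1$, a $2^{O(m^{1-\eta})}$ distinguisher gives only time $2^{O(r^{C(1-\eta)})}$, which is $2^{o(r)}$ only when $\eta>1-1/C$. None of the tools you invoke achieves this:
\begin{itemize}
\item Dinur's powering only guarantees a small constant gap, as you note.
\item Constant-fold parallel repetition produces $m_0^{\ell}$ vertices, as you also note.
\item Repetition along expander walks, or with other randomness-efficient samplers, cannot in general drive the soundness of a two-prover game toward $0$. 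This is the Feige--Kilian obstacle, and it is precisely why Dinur and Meir must first embed the PCP into a specially structured one.
\item The Dinur--Meir derandomized parallel repetition is built on the subspace-based derandomized direct-product test of Impagliazzo, Kabanets and Wigderson. It uses $O(\log n)$ randomness with a constant larger than one, so it yields PCPs of \emph{polynomial} size, not linear size.
\end{itemize}
Reaching quasi-linear size with small constant soundness along the direct-product route needed linear-size direct-product testers on high-dimensional expanders, which came much later and is much heavier. So your claim that Dinur--Meir ``is known to give constant soundness at linear blowup'' is not what that result provides, and this is the missing step.

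The paper skips amplification entirely. It invokes the Moshkovitz--Raz two-query PCP \cite{MR08} in the bipartite, regular form recorded as \cite[Theorem~30]{AIM14}, with soundness $\sigma_\star$. From a sparse 3-CNF on $r$ variables, this gives a bipartite 2-CSP with the following properties:
\begin{itemize}
\item it has $r^{1+o(1)}\mathrm{poly}(1/\sigma_\star)=r^{1+o(1)}$ variables;
\item its alphabet has size $2^{\mathrm{poly}(1/\sigma_\star)}$;
\item every variable lies in exactly $\mathrm{poly}(1/\sigma_\star)$ constraints.
\end{itemize}
Hence $D,q_{\mathsf L},q_{\mathsf R}$ are constants, and the distinguisher decides the formula in time $2^{O(r^{1-\eta/2})}=2^{o(r)}$. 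Your target $m=r\,\mathrm{polylog}(r)$ is more than is needed; $r^{1+o(1)}$ suffices.

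Two smaller points in your Step~3 would also need repair if the instance were not already bipartite and regular. First, the left-copy/right-copy bipartization does not preserve soundness: inequality constraints on a triangle have value $2/3$, but value $1$ on its bipartite double cover. Second, uniform edge duplication cannot equalize unequal left and right degrees. Duplicating each left vertex $a$ times and each right vertex $b$ times, with $a/b=d_{\mathsf L}/d_{\mathsf R}$, does equalize them and preserves the value.
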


\section{Technical Overview}
We prove \cref{thm:main-ne} by reducing the hard \BiLinVI{} problem from
\cref{lem:bilinvi-hard} to computing approximate Nash equilibrium in
2-player normal-form games. The reduction is parameterized by the 
dimension $m$ from a \BiLinVI{} instance (which we refer to as the \emph{source instance}) and an integer $1\leq t\leq\sqrt m$. It produces a game with
at most $2^{O(t\log m)}$ actions per player in which every $\ep$-Nash equilibrium yields an
$O(\ep\sqrt m/t)$-approximate \BiLinVI{} solution. Thus choosing
$t=\Theta(\ep\sqrt m)$ keeps this error below the constant hardness
threshold; the claimed result follows by choosing $m$ so that the action sets fit
within the target size $N$.
We first explain why the usual
birthday-repetition approach from prior work~\cite{BPR15} does not give the desired dependence on $\ep$.
We then describe how the present construction uses asymmetric blocks together with a Hadamard code-like construction to avoid this limitation. 

\paragraph{Why birthday repetition is insufficient.}
We first describe the usual ``birthday repetition'' argument which can be used to show that $\ep$-Nash equilibrium requires time $N^{\tilde \Omega(\log(N)/\epsilon)}$. 
Fix a \BiLinVI{} instance $(\AL, \AR, \aL, \aR)$; it will be useful to consider the bipartite graph $G$ whose vertices on each side are in correspondence with $[m]$, and where there is an edge $(i,j)$ if $\AL_{ij} \neq 0$ or $\AR_{ji} \neq 0$. We use that every row and column of $\AL,\AR$ has at most 3 nonzero entries (as in \cref{lem:bilinvi-hard}), which implies that $G$ has maximum degree $6$.
In the standard birthday-repetition approach~\cite{BPR15}, one converts this \BiLinVI{} instance into a normal form game $\MG$ with $m^{\Theta(k)}$ actions, for some parameter $k$.  The game $\MG$ has two players, $\mathsf{L}, \mathsf{R}$, corresponding to the two sides of the bipartite graph $G$. An action of the $\mathsf{L}$-player in $\MG$ specifies a block $I \subset [m]$ of
$k$ coordinates in $[m]$ together with a value in $\{-1,1\}^I$, corresponding to a choice of bit for each coordinate of $I$. Thus, a distribution over a set of actions all focused on a single block $I$ corresponds to a real-valued assignment $\bx_I \in [-1,1]^I$ of coordinates in $I$. If one partitions $[m]$ into such blocks $I$ and mixes such a distribution for each block $I$, it is possible to read off a full assignment $\bx \in [-1,1]^m$; similar considerations apply for blocks $J \subset [m]$ and assignments $\by_J$ for player-$\mathsf{R}$. 

In order for the expected payoffs to \textsf{L}, \textsf{R} in $\MG$ to line up with the corresponding values of \cref{eq:bilinx,eq:biliny}, respectively, we define the payoffs of $\MG$ as follows: if \textsf{L}-player plays $(I, \bx_I)$ and \textsf{R}-player plays $(J, \by_J)$, then the payoff to $\Lp$ is the portion of \cref{eq:bilinx} corresponding to the subsets of indices $I,J$, namely $\frac{1}{|I|} \sum_{i \in I} \aL_i \cdot \bx_i + \sum_{i \in I, j \in J} \AL_{ij} \cdot \bx_i \by_j$. Payoffs to \Rp are defined analogously. %

If we imagine that such blocks $I \subset [m], J \subset [m]$ are drawn uniformly and independently at random, then such a tuple $(I,J)$ contains $O(k^2/m)$ edges in expectation. Thus payoffs in $\MG$ are ``scaled down'' from those in the left-hand sides of \cref{eq:bilinx,eq:biliny} by a factor of  $O(k^2/m)$ (assuming the players are mixing approximately uniformly over $I,J$, which will be the case in approximate Nash equilibria). Thus, roughly speaking, in order for an $\ep$-Nash
equilibrium of $\MG$ to imply that the inequalities \cref{eq:bilinx,eq:biliny} hold for the induced vectors $\bx,\by$, for some absolute constant $\eta$, one must have $\ep \leq O(\eta k^2/m) = O(k^2/m)$, i.e., 
$
k=\Omega(\sqrt{m\ep}).
$
On the other hand, this value of $k$ yields $\log N=\Theta(k \log m) = \widetilde\Theta(k)$, and hence
$
m=\widetilde\Theta\left(\frac{(\log N)^2}{\ep}\right).
$
Consequently, the $2^{\widetilde\Omega(m)}$ lower bound on the complexity of \BiLinVI{} (from \cref{lem:bilinvi-hard}) translates to a complexity of 
$
2^{\widetilde\Omega((\log N)^2/\ep)}
=N^{\widetilde\Omega((\log N)/\ep)}
$ 
for computing $\ep$-Nash equilibrium of $\MG$, which has exponent that is asymptotically smaller than the best-known upper bound of $N^{O(\log(N)/\ep^2)}$. 

\paragraph{Our approach: asymmetric blocks.} At a high level, the gap in the approach described above results from the fact that the graph $G$ is very sparse, which forces the blocks to be very large in order to make the ``scaling factor'' $O(k^2/m)$ large enough, which in turn blows up the number of actions $N = m^{\Theta(k)}$. \emph{Is there a way to achieve large ``scaling factor'' without ``spending'' so many actions of the \textsf{L}, \textsf{R} players of $\MG$?}

To achieve this goal, it turns out to be useful to use \emph{asymmetric} block sizes: in particular, for integers $t,h$ satisfying $t \ll h$ and $m \leq t\cdot h < 2m$, we will partition $[m]$ into $t$ blocks $I \subset [m]$ of size roughly $h$ for the \textsf{L}-player and partition $[m]$ into $h$ blocks $J \subset [m]$ of size roughly $t$ for the \textsf{R}-player. It is straightforward to show (see \cref{lem:partitions}) that it is possible to choose such partitions so that each pair of blocks includes a constant number of edges of $G$ (which conforms with what one obtains by drawing blocks of sizes $h,\ t$, respectively at random); this fact keeps the payoffs of $\MG$ bounded by a constant.

The key insight is as follows: the standard birthday repetition approach described above would prescribe an action of \Lp which corresponds to each vector in $\{-1,1\}^I$, which would require $2^h$ actions for \Lp-player. Instead, actions of \Lp will correspond to tuples $(I, v)$, where $I \subset [m]$ is one of the $t$ blocks of size $h$, and $v \in \MV(h,t) \subseteq \{-1,1\}^h$ is a smaller ``chosen set'' of vectors (specified in \cref{lem:scaled-hypercube}) whose convex hull contains all elements of $[-1,1]^h$ without any ``too-large'' entries. In particular, we have $|\MV(h,t)| = 2^{\tilde \Theta(t)} \ll 2^h$ and
\begin{align}
\rho[-1,1]^h\subseteq\operatorname{conv}(\MV(h,t)),
\qquad \mbox{ where } \ \ 
\rho:=\frac14\sqrt{\frac{t}{h}}=\Theta\left(\frac{t}{\sqrt m}\right).\label{eq:v-prop-intro}
\end{align}
Actions of \Rp will correspond to tuples $(J,z)$ where $J \subset [m]$ is one of the $h$ blocks of size $t$ (for \Rp-player), and $z \in \{-1,1\}^t$, exactly as in the standard birthday repetition argument (which is not problematic since blocks of \Rp are smaller than blocks of \Lp). The payoff to \Lp if they choose $(I,v)$ and \Rp chooses $(J,z)$ is analogous to above, namely 
\begin{align}
\frac{1}{h} \sum_{i \in I} \aL_i \cdot v_i + \sum_{i \in I, j \in J} \AL_{ij} \cdot v_i \cdot z_j,\label{eq:mg-payoffs}
\end{align}
where we think of $v$ as indexed by $I$ and $z$ as indexed by $J$; payoffs for \Rp are defined symmetrically. The main idea here is that taking $\MV(h,t)$ to be smaller than $\{-1,1\}^h$ will not hurt us if we can ensure that the \Lp player never has incentive to choose a mixed strategy whose corresponding vector deviates outside of the ``small-entry'' cube $\rho[-1,1]^h$ (as we will do below).  

Now suppose that \Lp chooses a distribution over actions $(I,v)$ where the $I$-component is fixed and $v$ has expectation given by $\rho \cdot \bx_I \in \rho \cdot [-1,1]^h$ for some $\bx_I \in [-1,1]^h$, and that \Rp chooses a distribution over  actions $(J,z)$ whose $J$-component is fixed and $z$ has expectation given by $\by_J \in [-1,1]^t$. Then their payoff is ``scaled down'' from that in \cref{eq:bilinx} by a factor of $\Theta((t\cdot h/m) \cdot \rho) = \Theta(\rho)= \Theta(t/\sqrt m)$. By way of comparison, if we had used the birthday repetition approach with blocks of size $t$, we would have had a scaling factor of $\Theta(t^2/m) \ll t/\sqrt m$.
In summary, compressing the larger left
blocks therefore preserves a larger scale while keeping the action
descriptions short.

\paragraph{Enforcing a valid representation.}
In order for the reduction described above to work, we need that the $\ep$-Nash equilibrium of $\MG$ have the following two properties: \begin{enumerate}
\item For each player \Lp, \Rp, the distribution over their ``block index'' (i.e., $I$ for \Lp-player and $J$ for \Rp-player) must be roughly uniform over their set of blocks.
\item For player \Lp, for each block $I$, conditioned on choosing an action of the form $(I,v)$ for some $v \in \MV(h,t)$, the expectation of $v$ must lie in $\rho \cdot [-1,1]^h$. 
\end{enumerate}
The first property above is standard in previous birthday-repetition arguments~\cite{BPR15}, and it implies that the payoff to \Lp in $\MG$ (as determined by averaging \cref{eq:mg-payoffs} over the blocks $I,J$) is close to that in \cref{eq:bilinx} (and similarly for \Rp and \cref{eq:biliny}).

The second property is unique to our argument and is needed to ensure that we obtain a \emph{valid} vector $\bx \in [-1,1]^m$ from an $\ep$-Nash equilibrium of $\MG$. In particular, the conversion from an $\ep$-Nash in $\MG$ to a purported solution $(\bx, \by)$ of \cref{eq:bilinx,eq:biliny} constructs a vector $\rho \cdot \bx_I \in \rho \cdot [-1,1]^h$ for each block $I$, as described above. But because $\mathrm{conv}(\MV(h,t))$ is a \emph{strict superset} of $\rho \cdot [-1,1]^h$, it is possible that the resulting $\bx_I$ does not belong to the solid hypercube $[-1,1]^I$. 

To ensure that the above properties hold in any $\ep$-Nash equilibrium of $\MG$, we equip both players with additional actions, referred to as ``secondary actions'' (in contrast, we refer to actions of the form $(I,v)$ and $(J,z)$ discussed above as ``primary actions''). Player-\Rp has secondary actions which punish player-\Lp if either (a) their action distribution is not near-uniform over the block index $I$ or (b) if there are too many blocks $I$ for which the conditional expectation of $v$ is not in $\rho \cdot [-1,1]^h$ (per the second item above). Similarly, Player-\Lp has secondary actions which punish \Rp if their action distribution is not near-uniform over the block index $J$.

We remark that the secondary actions used to ensure the first item above are similar to the Althofer-style gadget used in prior birthday-repetition reductions~\cite{BPR15}. One difference is that if we were to follow that construction exactly, the number of Player-\Lp secondary actions would be $2^{\tilde O(h)}$ as Player-\Rp has $h$ blocks. To avoid this, player-\Lp secondary actions use a second copy $\MW$ of $\MV(h,t)$, whose convex hull contains $\rho[-1,1]^h$ (see \cref{eq:rho-containment}).

\paragraph{Parameter calibration.}
Fix a target number of actions $N$, and write $\ep_N:=\ep(N)$.
Thus the exponent in the Lipton--Markakis--Mehta upper bound is $O((\log_2 N)/\ep_N^2)$,
while the hypothetical running time ruled out by \cref{thm:main-ne} has
base-two logarithm $(\log_2 N)\left(\frac{\log_2 N}{\ep_N^2}\right)^{1-c}$. We choose the reduction parameters in
two regimes.

First suppose that $\log((\log_2 N)/\ep_N^2)$ is at most a sufficiently small constant
multiple of $\log_2 N$. Up to sufficiently small absolute factors, given an $m$-dimensional instance of \BiLinVI{}, we choose $N$ and $t$ so that 
\begin{align}
m\asymp\frac{(\log_2 N)^2}{\ep_N^2\log^2((\log_2 N)/\ep_N^2)},\qquad
t\asymp\ep_N\sqrt m\asymp\frac{\log_2 N}{\log((\log_2 N)/\ep_N^2)}.
\end{align}
Since $\log m=\Theta(\log((\log_2 N)/\ep_N^2))$, the action bound
$2^{O(t\log m)}$ is at most $N$, while the decoded \BiLinVI{} error is
$O(\ep_N\sqrt m/t)=O(1)$. Moreover,
\begin{align}
(\log_2 N)\left(\frac{\log_2 N}{\ep_N^2}\right)^{1-c}
=o\left(\frac{m}{\operatorname{polylog}(m)}\right),
\end{align}
because every fixed power $((\log_2 N)/\ep_N^2)^c$ dominates $\operatorname{polylog}((\log_2 N)/\ep_N^2)$.
Hence the assumed Nash-equilibrium algorithm would solve the hard source
problem in time $2^{\widetilde o(m)}$.

For the regime in which $\log((\log_2 N)/\ep_N^2)$ is a sufficiently large
constant fraction of $\log_2 N$, or equivalently the target accuracy decays at
least polynomially fast in $N$, we take $t=1$ and
$m=\Theta(1/\ep_N^2)$. To obtain tight bounds we need to use a slightly different construction than the one described above; we refer the reader to \cref{sec:nf-game-con} for details.

\paragraph{Extending the technique to free games.} The proof of \cref{thm:freegame-main} follows similar lines to that of \cref{thm:main-ne}. In particular, the standard ``birthday repetition'' technique when applied to free games \cite{AIM14} starts out with a bipartite 2-CSP where both sides $\VL, \VR$ have $m$ vertices (see \cref{lem:freegame-source-normalization}). We consider this CSP as a bipartite graph where there is an edge $(i,j) \in \VL \times \VR$ if there is some constraint on the tuple $(i,j)$.  It then takes the sets of questions to \Lp, \Rp to be the sets of blocks $I \subset \VL$, $J \subset \VR$, respectively, of some size $k < m$ (e.g., to obtain a lower bound of $n^{\tilde\Omega(\log(n)/\ep)}$ one takes $k \asymp \sqrt{m\ep}$). The provers' answers consist of assignments $\phi_{\Lp} : I \to \Sigma$, $\phi_{\Rp} : J \to \Sigma$ of a value (in the alphabet $\Sigma$ of the CSP instance) to each element in the block. The verifier $V$ then accepts if and only if for each $(i,j)$ with $i \in I, j \in J$ for which $e = (i,j)$ forms a constraint in the 2-CSP instance, the provers' answers satisfy the constraint. 

To prove \cref{thm:freegame-main}, we again start with a bipartite 2-CSP but now introduce a free game where the players have asymmetric roles. In particular, as we did for Nash equilibrium, for parameters $t,h$ satisfying $m \leq t\cdot h < 2m$, we partition $\VL$ into $t$ blocks $I \subset \VL$ of size roughly $h$ for the \Lp-player and $\VR$ into $h$ blocks $J \subset [m]$ of size roughly $t$ for the \Rp-player. We consider a free game instance where \Rp-player's questions are blocks $J \subset [m]$, and their answers consist of (a) an assignment $\phi_{\Rp} : J \to \Sigma$ mapping each element of $J$ to a value in $\Sigma$, and (b)  an assignment $\psi_{\Rp}$, which maps each tuple $(j,i)$ where $j \in J$ and $i$ is a neighbor of $j$, to some value $\psi_{\Rp}((j,i)) \in \Sigma$. 

Given that \Rp-player specifies assignments to nodes on both the left and right sides, the role of the \Lp-player is to act as a sort of ``verifier'' (together with the actual verification function $V$ of the free game). In particular, questions to the \Lp-player take the form of blocks $I \subset [m]$, to which the \Lp-player responds with a choice of vector $v \in \MV(h,k) \subseteq \{-1,1\}^h$, to be interpreted as assigning a bit to each element of $J$ (here $\MV(h,k)$ satisfies the analogous properties to $\MV(h,t)$ in \cref{eq:v-prop-intro}, for an appropriate value of $k \approx \ep^2 h$, where $\ep$ is the desired accuracy level). At a high level, the verification function $V$ of the free game evaluates the inner product between $v$ and a hash of the \Rp-player's assignment at each position $i \in I$  (assuming they are valid; see \cref{eq:freegame-raw-verifier}). The hash function is also provided as part of the question to \Lp. 

If the 2-CSP instance has a satisfying assignment, then for arbitrary hash functions, $v$ can be chosen (based on the ground-truth satisfying assignment and the hash) so that this inner product is large. If the 2-CSP is far from having a satisfying assignment, then for any choice of $v$ (even depending on the hash function), the inner product will be small in expectation over the hash function.

At a conceptual level, one key difference between the free game setting and the approximate Nash setting arises because we can not utilize secondary actions to force the \Lp player to play ``valid'' mixtures over $v \in \MV(h,t)$, i.e., those belonging to $\rho \cdot [-1,1]^h$. To circumvent this issue, we add another component to the question provided to \Lp. This additional component takes the form of a $p$-wise independent hash function mapping the block $I$ to $\{-1,1\}$, where $p = \log|\MV(h,k)| \approx k$. This hash function is incorporated into the inner product defining the verification function $V$ mentioned above. Via standard moment bounds for $p$-wise independent random variables (see \cref{lem:freegame-adaptive-selector}), we can bound the amount by which the \Lp-player can ``cheat'' by playing an arbitrary $v \in \MV$. %
 We refer to \cref{sec:free-game} for full details. 

\section{Technical lemmas}
Below we state a couple of technical lemmas which are used throughout the paper. 
\begin{lemma}[Embedding a scaled hypercube in a small simplex]
    \label{lem:scaled-hypercube}
    Fix integers $1 \leq k \leq r$ where $r$ is a power of 2. Then there is a multiset $\MV(r,k) \subseteq \{-1,1\}^r$ of size $r^4 \cdot 2^k$ so that
    \begin{align}
[-1,1]^r \subseteq 4 \sqrt{r/k} \cdot \mathrm{conv}(\MV(r,k)) \nonumber.
    \end{align}
    Further, identifying the elements of $\MV(r,k)$ with the set $[r^42^k]$, each element of $\MV(r,k)$ may be evaluated in time $\mathrm{poly}(r)$ given its index in $[r^42^k]$. 
\end{lemma}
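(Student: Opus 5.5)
The plan is to build $\MV(r,k)$ as a product of three small families: a $4$-wise independent sign function, a $4$-wise (pairwise suffices) independent bucketing of $[r]$ into $k$ buckets, and a free choice of one sign per bucket. The containment will follow from a support-function (separation) argument combined with a fourth-moment bound.

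\emph{Construction.} Let $\MG$ be an explicit $4$-wise independent family of functions $g:[r]\to\{-1,1\}$. Let $\MH$ be an explicit (at least pairwise) independent family of functions $h:[r]\to[k]$. Since $r$ is a power of $2$, both can be taken to be polynomials of degree $\le 3$ over $\F_{r}$, post-processed. Each family then has size at most $r^2$ up to the rounding issue discussed below. Set
\begin{align*}
\MV(r,k):=\bigl\{\, v^{g,h,\sigma}\ :\ g\in\MG,\ h\in\MH,\ \sigma\in\{-1,1\}^k \bigr\},\qquad v^{g,h,\sigma}_i := g(i)\,\sigma_{h(i)},
\end{align*}
as a multiset, padded by repetition to size exactly $r^4 2^k$. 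Each entry is computable in $\mathrm{poly}(r)$ time from the index, by evaluating two low-degree polynomials over $\F_r$.

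\emph{Reduction to a support-function inequality.} The claim is equivalent to $\rho[-1,1]^r\subseteq\conv(\MV)$ with $\rho=\frac14\sqrt{k/r}$. By the separating hyperplane theorem it suffices to show that for every $w\in\BR^r$,
\begin{align*}
\max_{v\in\MV}\langle w,v\rangle\ \geq\ \rho\|w\|_1 \;=\; \max_{u\in\rho[-1,1]^r}\langle w,u\rangle.
\end{align*}
Fix $w$, and for given $g,h$ write $B_j=h^{-1}(j)$. Choosing $\sigma_j=\sgn\bigl(\sum_{i\in B_j}g(i)w_i\bigr)$ gives
\begin{align*}
\langle w,v^{g,h,\sigma}\rangle=\sum_{j}\Bigl|\sum_{i\in B_j}g(i)w_i\Bigr|.
\end{align*}
We lower-bound its expectation over $g,h$, since the maximum is at least the expectation.

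\emph{The two averaging steps.} First fix $h$. For $X=\sum_{i\in B}g(i)w_i$, $4$-wise independence gives $\E X^2=\|w_B\|_2^2$ and $\E X^4\le 3\|w_B\|_2^4$. Hölder in the form $\E X^2\le(\E|X|)^{2/3}(\E X^4)^{1/3}$ then gives $\E|X|\ge \|w_B\|_2/\sqrt3$. Next, using $\|w_B\|_2\ge\|w_B\|_1/\sqrt{|B|}$,
\begin{align*}
\E_h\sum_j\|w_{B_j}\|_2\ \ge\ \sum_i |w_i|\,\E_h\bigl[|B_{h(i)}|^{-1/2}\bigr]\ \ge\ \sum_i\frac{|w_i|}{\sqrt{\E_h|B_{h(i)}|}}.
\end{align*}
The last step is Jensen, since $x\mapsto x^{-1/2}$ is convex. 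Pairwise independence gives $\E|B_{h(i)}|=1+(r-1)/k\le 2r/k$. Altogether $\E\langle w,v\rangle\ge \frac{1}{\sqrt6}\sqrt{k/r}\,\|w\|_1\ge\rho\|w\|_1$.

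\emph{Main obstacle.} I expect the real work to be the explicit hash family $\MH$ when $k$ is not a power of two, together with the bookkeeping that keeps the total size at most $r^4 2^k$. My first attempt is to map $\F_r\to[k]$ by $x\mapsto \lfloor xk/r\rfloor$ (via an integer identification of $\F_r$). This is uniform up to a factor $2$ on bucket probabilities, and only the bound $\E|B_{h(i)}|\le O(r/k)$ is needed. That costs at most a constant, which fits in the slack between $1/\sqrt6$ and $1/4$; if the slack is insufficient, I would replace $k$ by the largest power of two $k'\le k$, which loses only a factor $\sqrt2$. The $4$-wise independent $\pm1$ family $\MG$ is standard, given by the low bit of a degree-$3$ polynomial over $\F_r$, of size $r^4$. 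To stay within $r^4$ for the pair $(g,h)$ I would share randomness between them: take one degree-$3$ polynomial and use its low bit for $g$ and its remaining bits for $h$. This keeps the pair jointly pairwise independent, and the fourth-moment step conditions only on $h$ through the bucket structure, which still works.
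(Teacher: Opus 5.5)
Your reduction to the support-function inequality is correct. So is the per-bucket H\"older bound, and for \emph{independent} $g\in\MG$, $h\in\MH$ your averaging argument goes through. The gap is that this independent version cannot meet the required size. A $4$-wise independent family built from degree-$\le 3$ polynomials over $\F_r$ has $r^4$ members, not $r^2$. Hence $|\MG|\cdot|\MH|\cdot 2^k>r^4 2^k$ whenever $|\MH|>1$, and padding by repetition only enlarges the multiset.

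Your remedy breaks the step ``first fix $h$''. You read $g$ from the low bit and $h$ from the remaining bits of one polynomial $Q$. Fixing the whole function $h$ then fixes bits of $Q(\alpha_i)$ at all $r$ points, so the conditional law of $g$ need not be $4$-wise independent. Concretely, take $k=r/2$: your map $x\mapsto\lfloor xk/r\rfloor$ then keeps every bit except the low one. A nonconstant polynomial of degree $\le 3$ takes each value at most three times. So for $r>6$ the only such polynomials with all values in $\{0,u\}$ (the kernel of that projection) are constants. Thus $h$ determines $g$ up to a global sign, and no moment bound on $\E_{g\mid h}|\sum_{i\in B}g(i)w_i|$ is available.

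Averaging jointly over $(g,h)$ without conditioning does not rescue the H\"older step either. Let $X_j=\sum_i\mathbf 1\{h(i)=j\}\,g(i)w_i$. Then $\E X_j^4$ contains the term $\sum_i w_i^4\Pr[h(i)=j]$. For $w=(\sqrt r,1,\dots,1)$ with uniform buckets, the per-bucket lower bounds sum to only $O(\sqrt r)$. You need $\tfrac14\sqrt{k/r}\,\|w\|_1\approx\tfrac14\sqrt{kr}$, so the bound falls short once $k$ is large.

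The missing observation is that the bucketing need not be random at all. The paper fixes a deterministic partition of $[r]$ into groups $G_1,\dots,G_k$ of size at most $\lceil r/k\rceil\le 2r/k$. It sets $v^{c,s}_i=\xi^c_i s_\ell$ for $i\in G_\ell$, where $\xi^c$ are the $4$-wise independent signs indexed by $c\in\F_r^4$ and $s\in\{-1,1\}^k$. This family has exactly $r^4 2^k$ elements.

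Your H\"older step then applies verbatim to each group. Your Jensen step becomes plain Cauchy--Schwarz, $\sum_\ell\|w_{G_\ell}\|_2\ge\|w\|_1/\sqrt{\lceil r/k\rceil}$. Together these give $\E_c\max_s\langle v^{c,s},w\rangle\ge\frac{1}{\sqrt6}\sqrt{k/r}\,\|w\|_1\ge\frac14\sqrt{k/r}\,\|w\|_1$. What you flagged as the main obstacle then disappears: there is no hashing into $[k]$ for non-power-of-two $k$ and no size bookkeeping.
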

In the case $k=1$, \cref{lem:scaled-hypercube} may be established by taking $\MV = \MV(r,1)$ to be the set of rows of the symmetric Hadamard matrix $H_r \in \{-1,1\}^{r \times r}$ and their negations, repeated as needed to obtain the stipulated multiset size. %
Then we have that $[-1,1]^r \subseteq \sqrt{r} \cdot \mathrm{conv}(\MV)$: indeed, for any $\bx \in [-1,1]^r$, $\ba := \frac{H_r^\top \cdot \bx}{r \sqrt{r}}$ satisfies $\| \ba \|_1 \leq \sqrt{r} \| \ba\|_2 = \| \bx\|_2/\sqrt{r} \leq 1$, and the fact that $H_r \cdot \ba = \bx/\sqrt{r}$ implies that $\bx \in \sqrt{r} \cdot \mathrm{conv}(\MV)$.

\begin{lemma}
    \label{lem:partitions}
    Let $G = (V, E)$ be a bipartite graph with bipartition $V = \VL \cup \VR$ satisfying $|\VL| = |\VR| = m$, and maximum degree $D$. Fix positive integers $t, h$ with $m \leq t\cdot h \leq 2m$. Then $\VL$ can be divided into $t$ labeled, possibly empty blocks $\VL\^1, \ldots, \VL\^t$ of size at most $h$, and $\VR$ can be divided into $h$ labeled, possibly empty blocks $\VR\^1, \ldots, \VR\^{h}$ of size at most $t$ so that, for each pair $i,j$, there are at most $2D(D^2+3)$ edges between $\VL\^i$ and $\VR\^j$. The partitions may be computed deterministically in time $\mathrm{poly}(m)$.
\end{lemma}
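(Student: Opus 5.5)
The plan is to fix the left partition arbitrarily and place all the difficulty in choosing the right partition, which I will obtain by rounding a uniform fractional assignment. I have not checked the counting that produces the precise constant $2D(D^2+3)$.

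\textbf{Setup.} Split $\VL$ into $t$ consecutive blocks $\VL\^1,\ldots,\VL\^t$ of size at most $h$; this is possible since $th\ge m$. Each left block then sends at most $Dh$ edges into $\VR$. For $v\in\VR$ and a left block $i$, write $d_i(v)$ for the number of neighbours of $v$ in $\VL\^i$, so that $\sum_i d_i(v)\le D$.

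\textbf{Constraint system.} The right partition is an assignment $x_{v,j}\in\{0,1\}$, for $v\in\VR$ and $j\in[h]$, subject to three families of constraints:
\begin{itemize}
\item assignment: $\sum_j x_{v,j}=1$ for each $v$;
\item capacity: $\sum_v x_{v,j}\le t$ for each $j$;
\item pair load: $L_{ij}(x):=\sum_v d_i(v)\,x_{v,j}$ should be $O(D^3)$ for each pair $(i,j)$.
\end{itemize}
The uniform fractional point $x_{v,j}=1/h$ satisfies all of these with room to spare:
\begin{itemize}
\item each right block has load $m/h\le t$;
\item each pair has $L_{ij}\le Dh/h=D$.
\end{itemize}

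\textbf{Iterative rounding.} I would round this point in Beck--Fiala / Karp--Leighton style. The assignment and capacity constraints form a bipartite transportation system, which is totally unimodular, so they are kept as hard constraints throughout. The pair-load constraints are soft.

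At each step, restrict to the fractional variables and take a vertex of the polytope cut out by three things: the hard constraints, the current pair-load constraints set to equalities at their current values, and the box $[0,1]$. Move to a vertex, fix any variables that became integral, and drop any pair constraint that has at most some threshold number $T$ of fractional variables remaining. Each step can be done by linear programming in $\mathrm{poly}(m)$ time.

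\textbf{The counting step.} This is where I expect the main obstacle. We need that at a vertex with $k$ fractional variables, some pair constraint is always droppable. The argument is to compare the at least $k$ linearly independent tight constraints against the available budget:
\begin{itemize}
\item Each fractional $x_{v,j}$ lies in exactly one assignment constraint, which has at least two fractional entries. So the assignment constraints contribute at most $k/2$ rows.
\item Each fractional $x_{v,j}$ appears in at most $D$ pair constraints, namely those $(i,j)$ with $d_i(v)>0$. If every surviving pair constraint had more than $T$ fractional variables, there would be fewer than $Dk/T$ of them.
\item The capacity constraints form the delicate part. Their count could be comparable to $k$, which would break the comparison. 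To handle this, I would use that tight capacity rows, together with the assignment rows, span a forest-like (laminar/bipartite) family, and bound their number by the number of fractional right vertices. I would also take $T$ of order $D^2$ so that the pair constraints contribute at most about $k/D$ rows.
\end{itemize}
Balancing these gives a contradiction, so the rounding never stalls.

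\textbf{Final bound.} When a pair constraint $(i,j)$ is dropped, its load is still within $D$ of the original. Its remaining at most $T$ fractional variables each have coefficient $d_i(v)\le D$, so they can add at most $TD$. The final load is therefore at most $D+TD=O(D^3)$. With the right choice of $T$ this should give $2D(D^2+3)$. Capacities are never violated, since they remain hard constraints and are integral at the end.

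\textbf{Greedy fallback.} If the counting above becomes messy, I would try a greedy procedure first. Assign right vertices one at a time to a block $j$ that is not full and for which every neighbour block $i$ has $L_{ij}<T$. There are at most $D\cdot Dh/T$ bad blocks. Full blocks are the real problem, since there can be nearly $h$ of them when $th\approx m$. I would handle them by splitting into two phases, each at half capacity, which accounts for the factor $2$ in $2D(D^2+3)$.
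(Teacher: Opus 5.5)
\textbf{Gap in the counting step.} You note that the counting is the main obstacle, and the bounds you propose there do not yield a contradiction. At a vertex with $k$ fractional variables you have at most $k/2$ assignment rows. Your ``forest-like'' bound on tight capacity rows by the number $|R|$ of fractional right vertices is actually true: each fractional assignment row carries fractional mass exactly $1$, while each tight column with a fractional entry carries a positive integer amount. But it only gives $|R|\le k/2$. So the hard rows alone may number $k$ and have rank $k-1$, for instance when the fractional support is a single even cycle through right vertices and tight blocks. Your three global bounds then add up to $k/2+k/2+Dk/T>k$ for every $T$. They are therefore compatible (once $k>T/D$) with a vertex pinned by the hard rows plus one pair row having more than $T$ fractional entries. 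Choosing $T$ of order $D^2$ does not change this.

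\textbf{The greedy fallback also fails.} When $th=m$, every right block must end exactly full. The last vertex then has a single open slot, which may be bad for it. Running two half-capacity phases still leaves total capacity equal to $m$, so the same problem remains.

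\textbf{How to repair the rounding.} The missing idea is to charge column by column. A pair row $(i,j)$ and the capacity row of block $j$ involve only the variables $x_{\cdot,j}$. Let $k_j$ be the number of fractional variables in column $j$. A column carrying a surviving, non-droppable pair row has $k_j>T$. A column carrying only its (tight) capacity row contributes one row, which is at most $k_j/2$. Write $K'$ for the number of fractional variables in the former, heavy columns. They number fewer than $K'/T$ and carry fewer than $DK'/T$ pair rows. Hence the rank is below
$k/2+(k-K')/2+(D+1)K'/T\le k$ once $T\geq 2(D+1)$.
If $K'=0$, total unimodularity of the transportation system already forces integrality. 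This repairs your route and gives loads at most $D+TD=D(2D+3)\le 2D(D^2+3)$.

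\textbf{The paper's route.} The paper avoids linear programming altogether. It forms the conflict graph on $\VR$ in which two right vertices are adjacent when they have neighbours in a common left block; this graph has maximum degree at most $D^2h$. It then takes an equitable proper coloring with $2(D^2+3)h$ colors via the algorithmic Hajnal--Szemer\'edi theorem, and groups the classes into $h$ blocks of $2(D^2+3)$ classes each. Equitability handles exactly the tight-capacity issue that breaks your greedy, giving block sizes at most $\lceil m/h\rceil\le t$. Independence of each class bounds every block pair by $2D(D^2+3)$ edges.
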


\section{A tight lower bound for approximate Nash equilibrium}
\label{sec:nf-game-con}
In this section we prove \cref{thm:main-ne}. 
Consider a \BiLinVI{} instance specified by a tuple $(\AL, \AR, \aL, \aR)$ for which $\AL, \AR$ have at most 3 nonzero entries in every row and column. Define the bipartite graph $G = (V,E)$ with $V = \VL \cup \VR$ and $\VL = \VR = [m]$, and with $E$ being given by the set of $(i,j)$ for which any of $(\AL)_{ij}, (\AL)_{ji}, (\AR)_{ij}, (\AR)_{ji}$ is nonzero. Then $G$ has maximal degree $D \leq 12$. We write $C_D:=2D(D^2+3)$ for the block-pair edge bound in \cref{lem:partitions}.

Fix an integer $1\leq t\leq\sqrt m$, and set
\begin{align}
h = 2^{\lceil \log_2(m/t)\rceil}, \qquad K = 10^7, \qquad \rho = \frac{1}{4}\sqrt{\frac{t}{h}} \label{eq:set-params},
\end{align}
so that $m \leq th < 2m$, $t\leq h$, $h$ is a power of $2$, and $\rho=\Theta(t/\sqrt m)$. We consider the partitions $\VL = \VL\^1 \cup \cdots \cup \VL^t$, $\VR = \VR\^1 \cup \cdots \cup \VR^h$ per \cref{lem:partitions}. We let $\MV,\MW$ be two labeled copies of $\MV(h,t) \subseteq \{-1,1\}^h$, where $\MV(\cdot,\cdot)$ is as in \cref{lem:scaled-hypercube}. Thus
\begin{align}
\rho[-1,1]^h \subseteq \operatorname{conv}(\MV), \qquad
\rho[-1,1]^h \subseteq \operatorname{conv}(\MW).
\label{eq:rho-containment}
\end{align}

\paragraph{Construction of the actions.} We consider a 2-player normal form game $\MG$ with players \Lp (``left'') and  \Rp (``right'') and action sets $\MAL, \MAR$, defined as follows. First, $\MAL$ contains the following actions, which we break into \emph{primary} actions and \emph{secondary} actions:
\begin{itemize}
\item \textbf{Left primary actions.} For each left-side block $I \in \{ \VL\^1, \ldots, \VL\^t \}$ and $v \in \MV$, we include a \emph{primary action} consisting of the tuple $(I, v)$.
\item \textbf{Left secondary actions.} We include a \emph{secondary action} $w$ for each $w \in \MW$. 
\end{itemize}

\noindent Next, $\MAR$ contains the following actions, which we break into \emph{primary} actions and \emph{secondary} actions: 
\begin{itemize}
\item \textbf{Right primary actions.} For each right-side block $J \in \{ \VR\^1, \ldots, \VR\^h \}$ and $z \in \{-1,1\}^t$, we include a \emph{primary action} consisting of the tuple $(J, z)$.
\item \textbf{Right secondary actions.} We include the following three types of \emph{secondary actions}:
\begin{itemize}
\item \textbf{Type I.} First, we include a secondary action for each mapping $f : [t] \to \{ \perp \} \cup ([h] \times \{-1,1\})$. 
\item \textbf{Type II.} Next, we include a secondary action for each vector $z \in \{-1,1\}^t$.
\item \textbf{Type III.} Finally, we include a dummy secondary action $\perp$. 
\end{itemize}
\end{itemize}
We let $\MAL^1, \MAL^2$ be the sets of left primary and secondary actions, respectively, and $\MAR^1, \MAR^2$ be the sets of right primary and secondary actions, respectively.

\paragraph{Payoffs.} Next we describe the payoffs for the players. For $\aL \in \MAL, \aR \in \MAR$, we define payoffs
\begin{align}
\uL(\aL, \aR) :=& 10^{-3}K \cdot \uL^0(\aL, \aR) + \uL^1(\aL, \aR),\nonumber\\
\uR(\aL, \aR) :=& 10^{-3}K \cdot \uR^0(\aL, \aR) + \uR^1(\aL, \aR),\nonumber
\end{align}
where the payoffs $\uL^0, \uL^1, \uR^0, \uR^1$ will be defined below. 

We begin by defining $\uL^0, \uR^0$. First, we suppose that both players play primary actions. For a left-hand side index $i \in [m]$, we let $\piL(i) \in [h]$ denote the index of $i$ in the unique left-hand side block $I \in \{ \VL\^1, \ldots, \VL\^t \}$ containing $i$. For a right-hand side index $j \in [m]$, let $\piR(j) \in [t]$ denote the index of $j$ in the unique right-hand side block $J \in \{\VR\^1, \ldots, \VR\^h \}$ containing $j$. For a left primary action $(I,v) \in \{ \VL\^1, \ldots, \VL\^t \} \times \MV$ and a right primary action $(J,z) \in \{ \VR\^1, \ldots, \VR\^h \} \times \{-1,1\}^t$, we define the payoffs
\begin{align}
\uL^0((I,v), (J,z)) &:= \frac{1}{K} \cdot \left(\frac{1}{h} \sum_{i \in I} a_i \cdot v_{\piL(i)} + \sum_{i \in I, j \in J} A_{ij} \cdot v_{\piL(i)} \cdot z_{\piR(j)} \right),\label{eq:both-primary-left}\\
\uR^0((I,v), (J,z)) &:=\frac{1}{K} \cdot \left( \frac{\rho}{t} \sum_{j \in J} b_j \cdot z_{\piR(j)} + \sum_{i \in I, j \in J} B_{ji} \cdot v_{\piL(i)} \cdot z_{\piR(j)}\right).\label{eq:both-primary-right}
\end{align}

Next, we describe the payoffs under $\uL^0, \uR^0$ when one player plays a primary action and the other plays a secondary action. First, if the left player plays a primary action $(I,v)$:
\begin{itemize}
\item If the right player plays a secondary action $f : [t] \to \{ \perp \} \cup ([h] \times \{-1,1\})$ of type I, we define the payoffs
\begin{align}
\uR^0((I,v), f) = \begin{cases}
0 &: f(s) = \perp,\\
b \cdot v_j - \rho &: f(s) = (j,b) \in [h] \times \{-1,1\},
\end{cases}\qquad \uL^0((I,v), f) = -\uR^0((I,v), f)\label{eq:typei-def}.
\end{align}
Here $s\in[t]$ is the index for which $I=\VL\^s$.
\item If the right player plays a secondary action $z \in \{-1,1\}^t$ of type II, we define the payoffs
\begin{align}
\uR^0((I,v), z) = z_s - \frac{1}{t} \sum_{s' \in [t]} z_{s'}, \qquad \uL^0((I,v), z) = -\uR^0((I,v), z), \qquad I = \VL\^s, \ \ s \in [t]\label{eq:typeii-def}.
\end{align}
\item If the right player plays the action $\perp$ of type III, we define the payoffs $\uR^0((I,v), \perp) = \uL^0((I,v), \perp) = 0$.
\end{itemize}
Next, if the right player plays a primary action $(J,z)$:
\begin{itemize}
\item If the left player plays a secondary action $w \in \MW$, then we define the payoffs
\begin{align}
\uL^0(w, (J,z)) = w_g - \frac{1}{h} \sum_{g' \in [h]} w_{g'}, \qquad \uR^0(w, (J,z)) = -\uL^0(w, (J,z)), \qquad J = \VR\^g, \ \ g \in [h]\label{eq:typew-def}.
\end{align}
\end{itemize}
If both players play secondary actions $\aL, \aR$, then we define $\uL^0(\aL, \aR) = \uR^0(\aL, \aR) = 0$. 

Finally, we define the payoffs $\uL^1, \uR^1$ to be as follows. We define
\begin{align}
\uL^1(\aL, \aR) = -\uR^1(\aL, \aR) = \begin{cases}
    K &: \aL \in \MAL^1, \aR \in \MAR^1 \text{ or } \aL \in \MAL^2, \aR \in \MAR^2 \\
    -K &: \aL \in \MAL^1, \aR \in \MAR^2 \text{ or } \aL \in \MAL^2, \aR \in \MAR^1.
\end{cases}\nonumber
\end{align}

\subsection{Enforcing equilibrium constraints}
For primary distributions $\pL\in\Delta(\MAL^1)$ and $\pR\in\Delta(\MAR^1)$, let $\barpL\in\Delta([t])$ and $\barpR\in\Delta([h])$ be the induced distributions over blocks: for $s\in[t]$ and $g\in[h]$,
\begin{align}
\barpL(s) = \Pr_{(I,v) \sim \pL}(I = \VL\^s), \qquad \barpR(g) = \Pr_{(J,z) \sim \pR}(J = \VR\^g).\nonumber
\end{align}
For $I=\VL\^s$ and $J=\VR\^g$ which occur under $\pL, \pR$, respectively, with positive probability, define the \emph{conditional means}
\begin{align}
\pL|_s:=\E_{(I',v')\sim\pL}[v'\mid I'=I],
\qquad
\pR|_g:=\E_{(J',z')\sim\pR}[z'\mid J'=J], \label{eq:define-pli}
\end{align}
and set the corresponding mean to zero when the block (i.e., $I$ or $J$) has probability zero under $\pL, \pR$, respectively. We define the following error functions:
\begin{align}
\delta_s(\pL)&:=\left(\|\pL|_s\|_\infty-\rho\right)_+, &
\delta(\pL)&:=\sum_{s\in[t]}\barpL(s)\delta_s(\pL),\nonumber\\
\Delta_{\mathsf L}(\pL)&:=\left\|\barpL-\frac{\mathbf 1}{t}\right\|_1, &
\Delta_{\mathsf R}(\pR)&:=\left\|\barpR-\frac{\mathbf 1}{h}\right\|_1. \label{eq:projection-errors}
\end{align}
\begin{lemma}
    \label{lem:secondary-deviations}
For any distributions $\pL \in \Delta(\MAL^1), \pR \in \Delta(\MAR^1)$, it holds that
\begin{align}
\max_{\actR \in \MAR^2} \uR^0(\pL, \actR) =& \max \left\{\delta(\pL),\Delta_{\mathsf L}(\pL)\right\},\label{eq:left-secondary-deviation}\\
\max_{\actL \in \MAL^2} \uL^0(\actL, \pR) \geq & \rho\Delta_{\mathsf R}(\pR).
\label{eq:right-secondary-deviation}
\end{align}
\end{lemma}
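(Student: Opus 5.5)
The plan is to compute the deviation payoff to each right secondary action type separately for \cref{eq:left-secondary-deviation}, and to exhibit one good left secondary action for \cref{eq:right-secondary-deviation}. Throughout, $\pL$ is supported on primary actions. So for any secondary $\actR$ the payoff $\uR^0(\pL,\actR)$ is an average of the primary-versus-secondary payoffs \cref{eq:typei-def,eq:typeii-def}.

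\textbf{Right secondary actions, \cref{eq:left-secondary-deviation}.} I will treat the three types in turn.
\begin{itemize}
\item \emph{Type III.} The action $\perp$ gives payoff $0$. Both $\delta(\pL)$ and $\Delta_{\mathsf L}(\pL)$ are nonnegative, so this never exceeds the claimed maximum.
\item \emph{Type II.} For $z\in\{-1,1\}^t$, averaging \cref{eq:typeii-def} over $s\sim\barpL$ gives
\begin{align*}
\uR^0(\pL,z)=\sum_s \barpL(s) z_s-\frac1t\sum_s z_s=\Big\langle z,\barpL-\tfrac{\mathbf 1}{t}\Big\rangle .
\end{align*}
Maximizing over $z$ by choosing $z_s=\sgn(\barpL(s)-1/t)$ gives exactly $\Delta_{\mathsf L}(\pL)$.
\item \emph{Type I.} For $f$, conditioning on the block index $s$ gives
\begin{align*}
\uR^0(\pL,f)=\sum_{s:\,f(s)=(j,b)}\barpL(s)\big(b\,(\pL|_s)_j-\rho\big).
\end{align*}
The sum decouples over $s$. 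For each $s$, the best choice is $f(s)=\perp$ (contributing $0$) or the pair $(j,b)$ maximizing $b(\pL|_s)_j$, which contributes $\|\pL|_s\|_\infty-\rho$. The best choice per $s$ is therefore $(\|\pL|_s\|_\infty-\rho)_+=\delta_s(\pL)$, and the maximum over $f$ equals $\sum_s\barpL(s)\delta_s(\pL)=\delta(\pL)$.
\end{itemize}
Blocks with $\barpL(s)=0$ contribute nothing regardless of the convention $\pL|_s=0$. Taking the maximum over the three types gives $\max\{\delta,\Delta_{\mathsf L}\}$, since $0$ is dominated.

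\textbf{Left secondary actions, \cref{eq:right-secondary-deviation}.} For $w\in\MW$, averaging \cref{eq:typew-def} gives $\uL^0(w,\pR)=\langle w,\barpR-\mathbf 1/h\rangle$. The issue is that $\MW$ is not all of $\{-1,1\}^h$, so I cannot pick the sign vector directly. This is the only step needing an earlier result, and it is mild: I use linearity together with the containment \cref{eq:rho-containment}. Let $\sigma:=\sgn(\barpR-\mathbf 1/h)\in[-1,1]^h$. Then $\rho\sigma\in\operatorname{conv}(\MW)$, so $\rho\sigma=\sum_w\lambda_w w$ for some convex weights $\lambda$. Hence
\begin{align*}
\max_{w\in\MW}\langle w,\barpR-\mathbf 1/h\rangle\ \ge\ \sum_w\lambda_w\langle w,\barpR-\mathbf 1/h\rangle=\rho\,\|\barpR-\mathbf 1/h\|_1=\rho\Delta_{\mathsf R}(\pR).
\end{align*}
This explains why the second statement is only an inequality.
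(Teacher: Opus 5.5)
Your proposal is correct and follows the paper's proof essentially step for step. You decouple type I across blocks to get $\delta(\pL)$, compute type II as the $\ell_1$-dual $\langle z,\barpL-\mathbf 1/t\rangle$ to get $\Delta_{\mathsf L}(\pL)$, and use $\rho[-1,1]^h\subseteq\operatorname{conv}(\MW)$ for the left secondary actions. Your explicit convex-combination argument in that last step just spells out what the paper compresses into one line.
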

\begin{proof}
For a type-I secondary action $f : [t] \to \{ \perp \} \cup ([h] \times \{-1,1\})$, the values $f(s)$ may be chosen independently across blocks $s \in[t]$. Hence
\begin{align}
\max_f\uR^0(\pL,f)
=\sum_{s\in[t]}\barpL(s)\left(\|\pL|_s\|_\infty-\rho\right)_+
=\delta(\pL).
\end{align}
For a type-II action $z\in\{-1,1\}^t$, the expected payoff to \Rp is
\begin{align}
\left\langle z,\barpL-\frac{\mathbf 1}{t}\right\rangle,
\end{align}
whose maximum over sign vectors $z \in \{-1,1\}^t$ is $\Delta_{\mathsf L}(\pL)$. The type-III secondary action has payoff zero, so taking the best of the three types proves \cref{eq:left-secondary-deviation}.

The payoff to \Lp of a player-\Lp secondary action $w\in\MW$ against $\pR$ is $\langle w,\barpR-\mathbf 1/h\rangle$. Since \cref{eq:rho-containment} gives $\rho[-1,1]^h\subseteq\operatorname{conv}(\MW)$,
\begin{align}
\max_{w\in\MW}\left\langle w,\barpR-\frac{\mathbf 1}{h}\right\rangle
\geq \rho\Delta_{\mathsf R}(\pR),
\end{align}
which proves \cref{eq:right-secondary-deviation}.
\end{proof}

Now consider a product distribution $(\pL, \pR) \in \Delta(\MAL) \times \Delta(\MAR)$, to be interpreted as an $\ep$-Nash equilibrium of $\MG$. We let $\pL^1 \in \Delta(\MAL^1), \pL^2 \in \Delta(\MAL^2)$ denote the conditional distributions of $\actL \sim \pL$ given $\actL \in \MAL^1$ and $\actL \in \MAL^2$, respectively, and define $\pR^1,\pR^2$ analogously. With a slight abuse of notation, we write $\barpL := \barpL^1, \barpR := \barpR^1$ to denote the block marginals and, for $s \in [t], g \in [h]$, $\pL|_s := \pL^1|_s, \pR|_g := \pR^1|_g$ to denote the conditional means. We also write  $\uL^0(\pL, \pR) := \E_{\actL \sim \pL, \actR \sim \pR}[\uL^0(\actL, \actR)]$ and similarly for $\uR^0$. 

\begin{lemma}
\label{lem:conditional-deviations}
Suppose $(\pL, \pR)$ is an $\ep$-Nash equilibrium of $\MG$, with $\ep \leq K/100$. Then, with $\ep' = 16000\ep/K$, the following statements hold:
\begin{enumerate}
\item $\pL(\MAL^1), \pR(\MAR^1) \in [1/4, 3/4]$.
\item For every $\pL' \in \Delta(\MAL^1), \pR' \in \Delta(\MAR^1)$, we have, for some values $\cL, \cR \in [1/3, 3]$ (depending on $\pL, \pR$),
\begin{align}
\uL^0(\pL^1, \pR^1) + \cL \cdot \uL^0(\pL^1, \pR^2) &\geq \uL^0(\pL', \pR^1) + \cL \cdot \uL^0(\pL', \pR^2) - \ep',\label{eq:left-primary-deviation}\\
\uR^0(\pL^1, \pR^1) + \cR \cdot \uR^0(\pL^2, \pR^1) &\geq \uR^0(\pL^1, \pR') + \cR \cdot \uR^0(\pL^2, \pR') - \ep'.\label{eq:right-primary-deviation}
\end{align}
\item It holds that
\begin{align}
\max_{\actR \in \MAR^2} \uR^0(\pL^1, \actR) - \uR^0(\pL^1, \pR^2) \leq \ep', \qquad \max_{\actL \in \MAL^2} \uL^0(\actL, \pR^1) - \uL^0(\pL^2, \pR^1)\leq \ep'.\label{eq:al-ar-deviations}
\end{align}
\end{enumerate}
\end{lemma}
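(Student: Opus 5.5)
The plan is to treat $\uL^1,\uR^1$ as a large zero-sum ``matching pennies'' game between the primary/secondary sides, and $\uL^0,\uR^0$ as a small perturbation. Write $\alpha := \pL(\MAL^1)$ and $\beta := \pR(\MAR^1)$. A direct computation gives $\uL^1(\pL,\pR) = K(2\alpha-1)(2\beta-1) = -\uR^1(\pL,\pR)$.

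First I would record a uniform bound $|\uL^0|,|\uR^0| \leq 2$ on all action pairs.
\begin{itemize}
\item Primary versus primary payoffs are at most $\frac1K(1 + C_D)$ in magnitude. This holds because $|I|\leq h$, the entries of $\AL,\AR,\aL,\aR$ lie in $[-1,1]$, and a block pair carries at most $C_D = 2D(D^2+3) \leq 3528$ edges by \cref{lem:partitions}.
\item The type-I, type-II and $\MW$ payoffs \cref{eq:typei-def,eq:typeii-def,eq:typew-def} are at most $2$ in magnitude, since $\rho\le 1$.
\item All remaining payoffs are zero.
\end{itemize}
Consequently the term $10^{-3}K\cdot u^0$ changes any payoff by at most $2\cdot 10^{-3}K$.

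\textbf{Item 1.} This is the step needing the most care, though it is standard. Suppose $\beta > 3/4$. Then against $\pR$, every left primary action earns at least $K(2\beta-1) - 2\cdot10^{-3}K$. Every left secondary action earns at most $-K(2\beta-1)+2\cdot 10^{-3}K$. So each unit of mass on $\MAL^2$ costs \Lp at least $g := K - 4\cdot 10^{-3}K \geq K/2$ relative to moving it to a best primary action. The $\ep$-Nash condition then gives $(1-\alpha) g \leq \ep \leq K/100$, hence $1-\alpha \leq 1/50$ and $\alpha > 3/4$.

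Symmetrically, for \Rp the $\uR^1$ part is $-K(2\alpha-1)(2\beta-1)$. With $\alpha>3/4$, each unit of mass on $\MAR^1$ costs \Rp at least about $K/2$ relative to a secondary action. This forces $\beta \leq 1/50$, a contradiction. The cases $\beta<1/4$, $\alpha>3/4$ and $\alpha<1/4$ are handled identically, so $\alpha,\beta\in[1/4,3/4]$.

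\textbf{Item 2.} Expand \Lp's utility as
\[
10^{-3}K\bigl[\alpha\beta\,\uL^0(\pL^1,\pR^1) + \alpha(1-\beta)\,\uL^0(\pL^1,\pR^2) + (\text{terms involving }\pL^2)\bigr] + K(2\alpha-1)(2\beta-1).
\]
Consider the deviation $\alpha\pL' + (1-\alpha)\pL^2$. It keeps $\alpha$ fixed and leaves the $\pL^2$ terms and the $\uL^1$ term unchanged. The $\ep$-Nash condition therefore gives
\[
10^{-3}K\alpha\beta\bigl[\Delta_{11} + \tfrac{1-\beta}{\beta}\,\Delta_{12}\bigr] \geq -\ep,
\]
where $\Delta_{11}$ and $\Delta_{12}$ are the differences of the $\uL^0$ terms against $\pR^1$ and $\pR^2$. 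Set $\cL := (1-\beta)/\beta$, which lies in $[1/3,3]$ by item 1. Dividing by $10^{-3}K\alpha\beta \geq 10^{-3}K/16$ yields \cref{eq:left-primary-deviation} with $\ep' = 16000\ep/K$. The right-hand inequality \cref{eq:right-primary-deviation} follows in the same way with $\cR := (1-\alpha)/\alpha$.

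\textbf{Item 3.} Let \Rp move all of its secondary mass $\pR^2$ onto a single $\actR\in\MAR^2$. This keeps $\beta$ fixed. Secondary-versus-secondary $u^0$ payoffs are zero, so the only change in \Rp's utility is
\[
10^{-3}K\,\alpha(1-\beta)\bigl[\uR^0(\pL^1,\actR) - \uR^0(\pL^1,\pR^2)\bigr] \leq \ep.
\]
Since $\alpha(1-\beta)\geq 1/16$, this gives the first bound in \cref{eq:al-ar-deviations}. The second bound follows analogously by letting \Lp move $\pL^2$ onto a single $\actL \in \MAL^2$, where the relevant factor is $(1-\alpha)\beta\geq 1/16$.
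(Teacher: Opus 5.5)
Your proposal is correct and follows essentially the same route as the paper's proof. You bound the perturbation $10^{-3}K\cdot u^0$ by $O(10^{-3}K)$, use the matching-pennies structure of $\uL^1,\uR^1$ to force both primary masses into $[1/4,3/4]$, and then use deviations that preserve the primary/secondary split, which yields $\cL=(1-\beta)/\beta$, $\cR=(1-\alpha)/\alpha$ and the constant $16000=16\cdot 10^{3}$. Your Item 1 step is slightly more careful than the paper's: you compare against moving only the secondary mass to a primary action, whereas the paper compares against an arbitrary pure primary action, which could lose a little on the primary mass; you also check all four boundary cases explicitly.
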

\begin{proof}
Let $\gamma_{\mathsf L}:=\pL(\MAL^1)$ and $\gamma_{\mathsf R}:=\pR(\MAR^1)$. By \cref{lem:partitions} and $D\leq12$, for any primary actions $\actL \in \MAL^1, \actR \in \MAR^1$, we have  $|K\uL^0(\actL, \actR)|,|K\uR^0(\actL, \actR)|\leq3529$; further, for all $\actL, \actR$ which are not both primary, we have $|K \uL^0(\actL, \actR)|, |K\uR^0(\actL, \actR)| \leq 2K$. Thus, by our choice of $K$ in \cref{eq:set-params}, changing one player's action changes $10^{-3}K \cdot \uL^0, 10^{-3}K \cdot \uR^0$ by at most $4\cdot10^{-3}K$.

Suppose $\gamma_{\mathsf R}\geq3/4$. Then by definition of $\uL^1$, for every left primary action $\actL$ and left secondary action $\actL'$, we have $\uL(\actL,\pR) \geq (1-4\cdot10^{-3})K +  \uL(\actL', \pR)$. %
Let $\actL^\star \in \MAL^1$ be an arbitrary left primary action. The $\ep$-Nash condition therefore gives
\begin{align}
\ep
&\geq \E_{\actL\sim\pL}\left[\uL(\actL^\star,\pR)-\uL(\pL,\pR)\right]\geq(1-\gamma_{\mathsf L})(1-4\cdot10^{-3})K,\nonumber
\end{align}
and hence $1-\gamma_{\mathsf L}\leq\ep/((1-4\cdot10^{-3})K)<1/90$, where we have used $\ep\leq K/100$.
At this value of $\gamma_{\mathsf L}$, every right secondary action is better than every right primary action by more than $1.9K$. Moving the right player's primary mass, which is at least $3/4$, to an arbitrary  secondary action, is therefore profitable by more than $K$, a contradiction. The case $\gamma_{\mathsf R}\leq1/4$ is identical with both roles reversed: it first gives $\gamma_{\mathsf L}<1/90$, after which moving the right player's secondary-role mass to the primary role gains more than $K$. Hence $\gamma_{\mathsf R}\in[1/4,3/4]$, and the same argument with the players and roles interchanged gives $\gamma_{\mathsf L}\in[1/4,3/4]$.

Set $\cL=(1-\gamma_{\mathsf R})/\gamma_{\mathsf R}$ and $\cR=(1-\gamma_{\mathsf L})/\gamma_{\mathsf L}$, so $\cL,\cR\in[1/3,3]$. Fix any $\pL' \in \MAL^1$. Let $\tilpL$ be the following modified version of $\pL$: in particular, change the distribution of $\actL \sim \tilpL$ conditioned on $\actL \in \MAL^1$ to be $\pL'$ (instead of $\pL^1$). Deviating from $\pL$ to $\tilpL$ gives the $\mathsf{L}$-player a gain of
\begin{align}
10^{-3}K\gamma_{\mathsf L}\gamma_{\mathsf R}\Big(&\uL^0(\pL',\pR^1)+\cL\uL^0(\pL',\pR^2)
-\uL^0(\pL^1,\pR^1)-\cL\uL^0(\pL^1,\pR^2)\Big).\label{eq:cl-cr-proof}
\end{align}
Since $\gamma_{\mathsf L}\gamma_{\mathsf R}\geq1/16$, the $\ep$-Nash condition gives \cref{eq:left-primary-deviation}; the analogous computation for player-$\mathsf{R}$ gives \cref{eq:right-primary-deviation}.

It remains to prove \cref{eq:al-ar-deviations}. Choose
$\actL^\star\in\argmax_{\actL\in\MAL^2}\uL^0(\actL,\pR^1)$, and let
$\tilpL$ be the following modified version of $\pL$: change the distribution of $\actL \sim \tilpL$ conditioned on $\actL \in \MAL^2$ to be $\actL^\star$ (instead of $\pL^2$). This
deviation preserves the $\mathsf{L}$-player's probabilities of playing an action in $\MAL^1$ and $\MAL^2$; further, the  player receives $0$ payoff under $\uL^0$ when both play secondary actions, so the gain in its payoff is
\begin{align}
10^{-3}K(1-\gamma_{\mathsf L})\gamma_{\mathsf R}
\left(
\max_{\actL\in\MAL^2}\uL^0(\actL,\pR^1)
-\uL^0(\pL^2,\pR^1)
\right)\leq \ep.
\end{align}
Since $(1-\gamma_{\mathsf{L}})\gamma_{\mathsf{R}} \geq 1/16$, \cref{eq:al-ar-deviations} for $\mathsf{L}$ follows; the analogous inequality for player-$\mathsf{R}$ follows symmetrically.
\end{proof}

Next, given distributions $\pL \in \Delta(\MAL), \pR \in \Delta(\MAR)$, we define ``projected'' variants, as below:
\begin{definition}[Projected primary distributions]
    \label{def:projected}
Given $\pL \in \Delta(\MAL), \pR \in \Delta(\MAR)$, we define their \emph{projected} versions $\hatpL \in \Delta(\MAL^1), \hatpR \in \Delta(\MAR^1)$ as follows:
\begin{itemize}
\item $\hatpL$ is a distribution over $(I,v)$ whose marginal over the block index $I$ is uniform and whose conditional mean of $v$ is a clipped version of $\pL|_s$ for each $s \in [t]$:
\begin{align}
\Pr_{(I,v) \sim \hatpL}(I = \VL\^s) =& \frac{1}{t} \qquad \forall s \in [t]\label{eq:block-marginal-uniform}\\
\E_{(I,v)\sim\hatpL}[v\mid I=\VL\^s]
=&\operatorname{clip}(\pL|_s,[-\rho,\rho]^h), \qquad s\in[t], \label{eq:projected-left}
\end{align}
Such a distribution exists since $\rho [-1,1]^h \subseteq \mathrm{conv}(\MV)$, i.e., \cref{eq:rho-containment}; if such a distribution is not unique, we let $\hatpL$ be an arbitrary choice of one.
\item $\hatpR$ is the distribution over $(J,z)$ whose marginal over the block index $J$ is uniform and whose conditional mean of $z$ is $\pR|_g$ for each $g \in [h]$:
\begin{align}
\Pr_{(J,z) \sim \hatpR}(J = \VR\^g) = \frac{1}{h} \qquad \forall g \in [h] \label{eq:block-marginal-uniform-right}\\
\E_{(J,z) \sim \hatpR}[z \mid J = \VR\^g] = \pR|_g \qquad \forall g \in [h] \label{eq:projected-right}.
\end{align}
The distribution $\hatpR$ exists because $[-1,1]^t=\operatorname{conv}(\{-1,1\}^t)$.
\end{itemize}
\end{definition} %

The next lemma bounds the differences in utility by passing to the projected versions of $\pL, \pR$.
\begin{lemma}[Projection bounds]
\label{lem:projection-bounds}
Fix distributions $\pL \in \Delta(\MAL), \pR \in \Delta(\MAR)$, and let $\hatpL, \hatpR$, respectively, be their projected versions via \cref{def:projected}. Then
\begin{align}
K\big(&|\uL^0(\pL^1,\pR^1)-\uL^0(\hatpL,\pR^1)|
+|\uR^0(\pL^1,\pR^1)-\uR^0(\hatpL,\pR^1)|\big)
&\leq 10(C_D+1)\bigl(\delta(\pL^1)+\rho\Delta_{\mathsf L}(\pL^1)\bigr), \label{eq:left-projection-bound}\\
K\big(&|\uL^0(\pL^1,\pR^1)-\uL^0(\pL^1,\hatpR)|
+|\uR^0(\pL^1,\pR^1)-\uR^0(\pL^1,\hatpR)|\big)
&\leq 10(C_D+1)\bigl(\rho+\delta(\pL^1)\bigr)\Delta_{\mathsf R}(\pR^1). \label{eq:right-projection-bound}
\end{align}
Moreover,
\begin{align}
\uR^0(\hatpL,\actR)\leq0 \quad\text{for every $\actR\in\MAR^2$},
\qquad
\uL^0(\actL,\hatpR)=0 \quad\text{for every $\actL\in\MAL^2$}. \label{eq:projected-tests}
\end{align}
\end{lemma}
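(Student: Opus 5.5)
The plan is to use that every primary-vs-primary payoff is bilinear in the conditional means. Write $\uL^0(\pL^1,\pR^1)$ as a sum over block pairs $(s,g)$, weighted by $\barpL(s)\barpR(g)$, of a function of $(\pL|_s,\pR|_g)$. By \cref{eq:both-primary-left}, this function is $\frac1K\bigl(\frac1h\sum_{i\in I}a_i(\pL|_s)_{\piL(i)}+\sum_{i\in I,j\in J}A_{ij}(\pL|_s)_{\piL(i)}(\pR|_g)_{\piR(j)}\bigr)$. The analogous expression holds for $\uR^0$. Because of independence, only the conditional means enter. The linear term of $\uL^0$ also depends only on $s$, and the linear term of $\uR^0$ only on $g$.

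\textbf{Left bound \cref{eq:left-projection-bound}.} Passing from $\pL^1$ to $\hatpL$ changes two things: the block weights go from $\barpL(s)$ to $1/t$, and the means go from $\pL|_s$ to $\operatorname{clip}(\pL|_s)$. I would interpolate through the intermediate quantity with weights $\barpL(s)$ and clipped means. The clipping step moves each coordinate by at most $\delta_s$, and the weight of block $s$ is $\barpL(s)$.
\begin{itemize}
\item By \cref{lem:partitions}, the bilinear part involves at most $C_D$ edges per block pair, each with $|A_{ij}|\le1$ and $|(\pR|_g)|\le1$. Summing over $g$ with weights $\barpR(g)$ gives at most $C_D\,\delta_s$.
\item The linear part has $|I|\le h$ terms, each scaled by $1/h$ with $|a_i|\le1$, so it contributes at most $\delta_s$.
\end{itemize}
Summing with weights $\barpL(s)$ gives $(C_D+1)\delta(\pL^1)$, times $1/K$, for each of $\uL^0$ and $\uR^0$. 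For $\uR^0$ the linear term $\frac{\rho}{t}\sum_j b_j z_j$ is unaffected by the means but sees the weights, which enter in the next step.

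The reweighting step uses clipped means, so every coordinate has magnitude at most $\rho$. For a fixed block $s$, the block-$s$ contribution to $K\uL^0$ is then at most $(C_D+1)\rho$ in absolute value. Multiplying by $|\barpL(s)-1/t|$ and summing gives $(C_D+1)\rho\Delta_{\mathsf L}$. For $K\uR^0$, the linear term is $\frac\rho t\sum_{j\in J}b_jz_j$, and the point is that it carries the factor $\rho$, with $|J|\le t$, so it is $O(\rho)$. With the bilinear term, the contribution is at most $(C_D+1)\rho\Delta_{\mathsf L}$. In total, the constants fit within the stated factor $10$.

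\textbf{Right bound \cref{eq:right-projection-bound}.} Here only the weights $\barpR(g)$ change to $1/h$; the conditional means are unchanged by \cref{eq:projected-right}. For each $g$, I would bound the block-$g$ contribution, averaged over $s$ under $\pL^1$ with its unclipped means.
\begin{itemize}
\item In the bilinear term, each coordinate $|(\pL|_s)_k|$ is at most $\rho+\delta_s$. Averaging over $s$ gives at most $C_D(\rho+\delta(\pL^1))$.
\item The $\uR^0$ linear term is $\frac\rho t\sum_j b_j(\pR|_g)$, which is at most $\rho$.
\item The $\uL^0$ linear term does not depend on $g$, so it cancels, since both sets of weights sum to $1$.
\end{itemize}
Multiplying by $|\barpR(g)-1/h|$ and summing gives the bound.

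\textbf{Tests \cref{eq:projected-tests}.} Apply \cref{lem:secondary-deviations} to $\hatpL$. There $\delta(\hatpL)=0$ because the means are clipped into $[-\rho,\rho]^h$, and $\Delta_{\mathsf L}(\hatpL)=0$, so $\max_{\actR}\uR^0(\hatpL,\actR)=0$. Only the lower-bound version is stated there, but the equality in \cref{eq:left-secondary-deviation} gives exactly $\le0$. For $\hatpR$, the payoff of any $w$ is $\langle w,\barpR-\mathbf 1/h\rangle=0$ by uniformity of the block marginal.

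The main obstacle is the bookkeeping in the reweighting step. One has to make sure each bilinear term is bounded using the clipped (small) coordinates in the left step and the $\rho+\delta_s$ bound in the right step, and that the correct fixed-block sums of $C_D$ edges are used.
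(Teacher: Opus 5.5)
Your proposal is correct and follows essentially the same route as the paper. Both arguments interpolate through the intermediate with the original block weights and clipped means (the paper's $\tilpL$), bound the clipping step by $\delta(\pL^1)$ and the reweighting step by $\rho\Delta_{\mathsf L}(\pL^1)$ using the $C_D$ edges per block pair, handle the right bound via $\|\pL|_s\|_\infty\le\rho+\delta_s$, and verify the tests from the clipped means and uniform block marginals. The differences are cosmetic. You get $\uR^0(\hatpL,\actR)\le 0$ by applying \cref{lem:secondary-deviations} to $\hatpL$, whereas the paper checks the three secondary-action types directly. Also, the $\uR^0$ linear term in the left reweighting step does not depend on $s$, so it cancels exactly; your $O(\rho)$ bound there is harmless but unnecessary.
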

\begin{proof}
We first prove \cref{eq:left-projection-bound}. Clipping the conditional mean
$\pL|_s$ to $[-\rho,\rho]^h$ changes each of its coordinates by at most
$\delta_s(\pL^1)$. Let $\tilpL \in \Delta(\MAL^1)$ be a modification of $\pL^1$ where each of the conditional means is the clipped version $\mathrm{clip}(\pL|_s, [-\rho,\rho]^h)$ (i.e., per \cref{eq:projected-left}). Using the fact that there are at most $C_D$ edges between each pair $\VL\^i, \VR\^j$ (per \cref{lem:partitions}), it follows from the definition of $\uL^0,\uR^0$ in \cref{eq:both-primary-left,eq:both-primary-right} that 
\begin{align}
K\left( | \uL^0(\pL^1, \pR^1) - \uL^0(\tilpL, \pR^1)| + |\uR^0(\pL^1, \pR^1) - \uR^0(\tilpL, \pR^1)|\right) \leq (2C_D+1) \cdot \delta(\pL^1)\label{eq:first-projection-proof}.
\end{align}
Since each entry of $\tilpL|_s$ is bounded in absolute value by $\rho$ (for each $s \in [t]$), it again follows from the definition of $\uL^0,\uR^0$ that %
\begin{align}
K\left( | \uL^0(\hatpL, \pR^1) - \uL^0(\tilpL, \pR^1)| + |\uR^0(\hatpL, \pR^1) - \uR^0(\tilpL, \pR^1)|\right) \leq (2C_D+1) \cdot \rho \cdot \left\| \barpL - \mathbf{1}/t \right\|_1 = (2C_D+1) \cdot \rho \cdot \Delta_{\mathsf{L}}(\pL^1)\nonumber,
\end{align}
which gives \cref{eq:left-projection-bound} when combining with \cref{eq:first-projection-proof}.

To establish \cref{eq:right-projection-bound}, we note that for each $s \in [t]$, $\| \pL|_s \|_\infty \leq \rho + \delta_s(\pL)$, which gives
\begin{align*}
K\Big(|\uL^0(\pL,\pR^1)
-\uL^0(\pL,\hatpR)|
+|\uR^0(\pL,\pR^1)
-\uR^0(\pL,\hatpR)|\Big) \leq & 
(2C_D+1)\bigl(\rho+\delta(\pL)\bigr)
\Delta_{\mathsf R}(\pR^1)\\
\leq& 10(C_D+1)\bigl(\rho+\delta(\pL)\bigr)
\Delta_{\mathsf R}(\pR^1).
\end{align*}

It remains to prove \cref{eq:projected-tests}. Against $\hatpL$, the payoff (under $\uR^0$) of
a $\mathsf{R}$-player type-I secondary action (see \cref{eq:typei-def}) is at most $0$ since for each $s \in [t]$, for any $g \in [h], b \in \{-1,1\}$, we have
\begin{align*}
    b \cdot (\hatpL|_s)_g - \rho \leq \| \hatpL|_s \|_\infty - \rho \leq 0.
\end{align*}
The payoff of a type-II action $z\in\{-1,1\}^t$ against $\hatpL$ is zero because the left block
marginal is uniform (see \cref{eq:typeii-def}):
\begin{align*}
\uR^0(\hatpL,z)
=\frac1t\sum_{s\in[t]}z_s-\frac1t\sum_{s\in[t]}z_s=0.
\end{align*}
A type-III action also has payoff zero. Finally, for every
$w\in\MW=\MAL^2$, the uniform right block marginal gives
\begin{align*}
\uL^0(w,\hatpR)
=\frac1h\sum_{g\in[h]}w_g-\frac1h\sum_{g\in[h]}w_g=0.
\end{align*}
This proves \cref{eq:projected-tests}.
\end{proof}

\subsection{Completing the proof of the reduction}
Let $(\pL, \pR) \in \Delta(\MAL) \times \Delta(\MAR)$ be an $\ep$-Nash equilibrium of the game $\MG$ defined above for the \BiLinVI{} instance $(\AL, \AR,\aL,\aR)$. In \cref{lem:derive-linvi-solution} below, we show how $(\pL, \pR)$ may efficiently be converted into a solution to the original \BiLinVI{} instance $(\AL, \AR, \aL, \aR)$. To do so, we define $\bx, \by \in [-1,1]^m$ depending on $\pL, \pR$, as follows. For each $i \in [m]$, choose the unique $s\in[t]$ for which $i\in\VL\^s$ and set
\begin{align}
\bx_i = \mathrm{clip}\left(\frac{(\pL|_s)(\piL(i))}{\rho}, [-1,1] \right). \label{eq:xi-definition}
\end{align}
Here recall that $\piL(i)$ denotes the index of $i$ in $\VL\^s$.

In a similar manner, for each $j \in [m]$, choose the unique $g\in[h]$ for which $j\in\VR\^g$ and set
\begin{align}
\by_j = (\pR|_g)(\piR(j)).\label{eq:yj-definition}
\end{align}

\begin{lemma}
\label{lem:derive-linvi-solution}
There is a sufficiently small absolute constant $\ep_0 > 0$, so that for any $0 < \ep \leq \ep_0$, if we set $t = \lceil \ep \sqrt{m/\ep_0} \rceil$, then the following holds. 
Suppose that $(\pL, \pR)$ is an $\ep$-Nash equilibrium of the game $\MG$. Then $(\bx, \by)$ as defined in \cref{eq:xi-definition,eq:yj-definition} is an $80000 \cdot \sqrt{32 \ep_0}$%
-approximate solution to the \BiLinVI{} instance $(\AL, \AR, \aL, \aR)$. 
\end{lemma}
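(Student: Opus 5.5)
We combine \cref{lem:secondary-deviations,lem:conditional-deviations,lem:projection-bounds}. The argument first bounds the three error quantities $\delta(\pL^1)$, $\Delta_{\mathsf L}(\pL^1)$, $\Delta_{\mathsf R}(\pR^1)$. It then passes to the projected distributions $(\hatpL,\hatpR)$ and finally reads off the \BiLinVI{} inequalities from the primary deviation conditions \cref{eq:left-primary-deviation,eq:right-primary-deviation}. Throughout, $\ep'=16000\ep/K$, and the choice $t=\lceil\ep\sqrt{m/\ep_0}\rceil$ together with $\rho=\frac14\sqrt{t/h}$ and $th\asymp m$ gives $\rho\gtrsim t/\sqrt{m}\gtrsim \ep/\sqrt{\ep_0}$. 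Hence $\ep'/\rho\lesssim \sqrt{\ep_0}\cdot(16000/K)$ after tracking constants, and this is the ratio that ultimately governs the \BiLinVI{} error.

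\textbf{Step 1 (bounding the errors).} Take \cref{eq:left-primary-deviation} with $\pL'=\hatpL$. By \cref{eq:projected-tests}, $\uL^0(\hatpL,\pR^2)$ equals minus $\uR^0(\hatpL,\pR^2)$ on type I/II/III actions, so it is $\geq 0$. Separately, \cref{eq:al-ar-deviations} with \cref{eq:left-secondary-deviation} gives $\uL^0(\pL^1,\pR^2)=-\uR^0(\pL^1,\pR^2)\leq -\max\{\delta,\Delta_{\mathsf L}\}+\ep'$. Substituting these and \cref{eq:left-projection-bound} into the deviation inequality yields
\[
\cL\max\{\delta,\Delta_{\mathsf L}\}\leq \ep'(1+\cL)+\tfrac{10(C_D+1)}{K}(\delta+\rho\Delta_{\mathsf L}).
\]
Since $K=10^7$ is much larger than $C_D$ and $\cL\geq 1/3$, the last term is absorbed, giving $\delta,\Delta_{\mathsf L}\lesssim\ep'$.

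For the right player, the zero-sum structure between $\MAL^2$ and $\MAR^1$ gives $\uR^0(\pL^2,\pR^1)=-\uL^0(\pL^2,\pR^1)\leq -\rho\Delta_{\mathsf R}+\ep'$, using \cref{eq:right-secondary-deviation,eq:al-ar-deviations}, while $\uR^0(\pL^2,\hatpR)=0$. Applying \cref{eq:right-primary-deviation} with $\pR'=\hatpR$ and \cref{eq:right-projection-bound} then gives $\rho\Delta_{\mathsf R}\lesssim\ep'+(\rho+\delta)\Delta_{\mathsf R}/K'$ for a large constant $K'$, hence $\Delta_{\mathsf R}\lesssim\ep'/\rho$. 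The step I expect to be most delicate is keeping the absorption constants honest, notably the dependence of the primary--primary payoffs on $\Delta_{\mathsf R}$, which is weighted only by $\rho$.

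\textbf{Step 2 (transfer to \BiLinVI).} Next, compute $\uL^0(\hatpL',\hatpR)$ for an arbitrary deviation $\bx'\in[-1,1]^m$. Realize $\bx'$ as a projected-type distribution $\hatpL'$ with uniform blocks and conditional means $\rho\bx'_I$, which is possible by \cref{eq:rho-containment}. Summing over the uniform blocks, the normalization $\frac{1}{th}\approx\frac{1}{m}$ gives
\[
K\,\uL^0(\hatpL',\hatpR)=\frac{\rho}{th}\big(\langle\bx',\aL\rangle+\langle\bx',\AL\by\rangle\big).
\]
The same identity holds for $\hatpL$ with $\bx$, because the clipped means are exactly $\rho\bx_I$. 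Now chain the inequalities: \cref{eq:left-primary-deviation} with $\pL'=\hatpL'$, the bound $\uL^0(\hatpL',\pR^2)\geq0$ from \cref{eq:projected-tests}, and $\uL^0(\pL^1,\pR^2)\leq\ep'$ from Step 1. Then replace $\pL^1,\pR^1$ by $\hatpL,\hatpR$ using \cref{lem:projection-bounds}, whose errors are $O((\delta+\rho\Delta_{\mathsf L}+\rho\Delta_{\mathsf R})/K)=O(\ep'/K)$. Also replace $\pR^1$ by $\hatpR$ inside $\uL^0(\hatpL',\cdot)$, which costs $O(\rho\Delta_{\mathsf R}/K)$ since $\hatpL'$ has means bounded by $\rho$. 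Altogether this gives
\[
\tfrac{\rho}{th}\langle\bx'-\bx,\AL\by+\aL\rangle\lesssim K\ep',
\]
so the left-hand side of \cref{eq:bilinx} is $\lesssim K\ep'/\rho\lesssim\sqrt{\ep_0}$.

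For the right player, the scaling $\rho/t$ on $\aR$ in \cref{eq:both-primary-right} is designed to produce $\frac{\rho}{th}(\langle\by',\aR\rangle+\langle\by',\AR\bx\rangle)$ symmetrically. The only difference is that $\uR^0(\pL^2,\hatpR')=0$ for any uniform-block $\hatpR'$, which makes this side easier. Tracking the constants $16000$, $16$, etc.\ yields the stated bound $80000\sqrt{32\ep_0}$.
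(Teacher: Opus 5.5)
Your proposal is correct and follows the paper's proof essentially step for step. First you bound $\max\{\delta(\pL^1),\Delta_{\mathsf L}(\pL^1)\}\lesssim\ep'$ by deviating to $\hatpL$. Next you bound $\Delta_{\mathsf R}(\pR^1)\lesssim\ep'/\rho$ by deviating to $\hatpR$. Finally you deviate to a uniform-block primary distribution with conditional means $\rho\bx'$ (resp.\ $\by'$) and transfer to $(\hatpL,\hatpR)$ via the projection bounds to read off $\frac{\rho}{th}\langle\bx'-\bx,\AL\by+\aL\rangle\lesssim K\ep'$.

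The one difference is minor. The paper reduces without loss of generality to $\ep'\le\rho/13$, which yields $\delta(\pL^1)\le\rho$ and lets it absorb the $(\rho+\delta)\Delta_{\mathsf R}$ term. You instead get $\ep'\ll\rho$ directly from $\rho\ge\ep/\sqrt{32\ep_0}$, and that case split is vacuous under this choice of $t$ anyway. You should still state explicitly that this gives $\delta(\pL^1)\le\rho$, and that $\ep\le\ep_0\le K/100$, so \cref{lem:conditional-deviations} applies.
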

\begin{proof}
Set $\ep'=16000\ep/K$. We may assume that $\ep'\leq\rho/13$: otherwise, since every row of $\AL,\AR$ has at most three nonzero entries (and the entries are bounded in $[-1,1]$), each of the two \BiLinVI{} regrets is at most $4\leq 5K\ep'/\rho$, and the claim is immediate. This assumption also implies $\ep\leq K/100$, so \cref{lem:conditional-deviations} applies; further, let $\cL, \cR \in [1/3,3]$ be the constants from \cref{lem:conditional-deviations} given $\pL, \pR$. By \cref{lem:secondary-deviations},
\begin{align}
\Gamma:=\max_{\actR\in\MAR^2}\uR^0(\pL^1,\actR)
=\max\{\delta(\pL^1),\Delta_{\mathsf L}(\pL^1)\},
\qquad
\max_{\actL\in\MAL^2}\uL^0(\actL,\pR^1)\geq\rho\Delta_{\mathsf R}(\pR^1).
\end{align}
Thus \cref{eq:al-ar-deviations} gives
\begin{align}
\uR^0(\pL^1,\pR^2)\geq \Gamma-\ep', \qquad
\uL^0(\pL^2,\pR^1)\geq \rho\cdot \Delta_{\mathsf R}(\pR^1)-\ep'. \label{eq:test-lower-bounds}
\end{align}

Let $\hatpL, \hatpR$ be the projected distributions defined from $\pL, \pR$ as in \cref{def:projected}. Use $\pL'=\hatpL$ in \cref{eq:left-primary-deviation}. Since $\pR^2$ is supported on $\mathsf{R}$-player secondary actions, \cref{eq:projected-tests} gives $\uR^0(\hatpL,\pR^2)\leq0$. We therefore obtain
\begin{align}
\cL(\Gamma-\ep')
&\leq \cL\bigl(\uR^0(\pL^1,\pR^2)-\uR^0(\hatpL,\pR^2)\bigr)
&&\text{\scriptsize by \cref{eq:test-lower-bounds,eq:projected-tests}}\nonumber\\
&=\cL\bigl(\uL^0(\hatpL,\pR^2)-\uL^0(\pL^1,\pR^2)\bigr)
&&\text{\scriptsize since $\uL^0=-\uR^0$ on $(\pL^1, \pR^2)$ and $(\hatpL, \pR^2)$}\nonumber\\
&\leq \uL^0(\pL^1,\pR^1)-\uL^0(\hatpL,\pR^1)+\ep'
&&\text{\scriptsize by \cref{eq:left-primary-deviation}}\nonumber\\
&\leq \frac{10(C_D+1)}{K}\bigl(\delta(\pL^1)+\rho\cdot\Delta_{\mathsf L}(\pL^1)\bigr)+\ep'
&&\text{\scriptsize by \cref{eq:left-projection-bound}}\nonumber\\
&\leq \frac{20(C_D+1)}{K}\Gamma+\ep'
&&\text{\scriptsize since $\delta(\pL^1),\Delta_{\mathsf L}(\pL^1)\leq \Gamma$ and $\rho\leq1$}\nonumber.
\end{align}
Multiplying by $K$ and rearranging, we obtain
\begin{align}
\bigl(\cL K-20(C_D+1)\bigr)\Gamma
&\leq(\cL+1)K\ep'\nonumber.
\end{align}
Since $\cL\geq1/3$, $\cL+1\leq4$, and $10(C_D+1)\leq K/100$, we obtain
\begin{align}
\delta(\pL^1)\leq \Gamma\leq13\ep', \qquad
\Delta_{\mathsf L}(\pL^1)\leq \Gamma\leq13\ep'. \label{eq:left-errors-final}
\end{align}

Now use $\pR'=\hatpR$ in \cref{eq:right-primary-deviation}. Since $\pL^2$ is supported on left secondary actions, \cref{eq:projected-tests} gives $\uL^0(\pL^2,\hatpR)=0$. We therefore obtain
\begin{align}
\cR\bigl(\rho\Delta_{\mathsf R}(\pR^1)-\ep'\bigr)
&\leq \cR\bigl(\uL^0(\pL^2,\pR^1)-\uL^0(\pL^2,\hatpR)\bigr)
&&\text{\scriptsize by \cref{eq:test-lower-bounds,eq:projected-tests}}\nonumber\\
&=\cR\bigl(\uR^0(\pL^2,\hatpR)-\uR^0(\pL^2,\pR^1)\bigr)
&&\text{\scriptsize since $\uR^0=-\uL^0$ on $(\pL^2, \pR^1)$ and $(\pL^2, \hatpR)$}\nonumber\\
&\leq \uR^0(\pL^1,\pR^1)-\uR^0(\pL^1,\hatpR)+\ep'
&&\text{\scriptsize by \cref{eq:right-primary-deviation}}\nonumber\\
&\leq \frac{10(C_D+1)}{K}\bigl(\rho+\delta(\pL^1)\bigr)\Delta_{\mathsf R}(\pR^1)+\ep'
&&\text{\scriptsize by \cref{eq:right-projection-bound}}\nonumber
\end{align}
Rearranging then gives
\begin{align}
\left(\cR K\rho-10(C_D+1)(\rho+\delta(\pL^1))\right)\Delta_{\mathsf R}(\pR^1)
&\leq(\cR+1)K\ep'
\end{align}
By the assumption $\ep'\leq\rho/13$ and \cref{eq:left-errors-final}, we have $\delta(\pL^1)\leq\rho$. Since $\cR\geq1/3$ and $10(C_D+1)\leq K/100$, the coefficient on the left is at least $K\rho/4$, and hence
\begin{align}
\Delta_{\mathsf R}(\pR^1)\leq\frac{16\ep'}{\rho}. \label{eq:right-errors-final}
\end{align}

Fix $\bx',\by'\in[-1,1]^m$. Now, choose some $\pL^{\bx'}\in\Delta(\MAL^1)$ over $(I,v) \in \MAL^1$ whose marginal over the block index $I$ is uniform and whose conditional distributions of $v$ are given by $\bx'$: formally, for $s \in [t], i \in \VL\^s$, 
\begin{align*}
\Pr_{(I,v)\sim\pL^{\bx'}}(I=\VL\^s)&=\frac1t, \qquad
\E_{(I,v)\sim\pL^{\bx'}}[v_{\piL(i)}\mid I=\VL\^s]=\rho\cdot \bx_i'.
\end{align*}
Such a distribution exists because $\rho[-1,1]^h\subseteq\operatorname{conv}(\MV)$ by \cref{eq:rho-containment}. Similarly, choose any $\pR^{\by'}\in\Delta(\MAR^1)$ satisfying, for all $g \in [h], j \in \VR\^g$,
\begin{align*}
\Pr_{(J,z)\sim\pR^{\by'}}(J=\VR\^g)&=\frac1h, \qquad
\E_{(J,z)\sim\pR^{\by'}}[z_{\piR(j)}\mid J=\VR\^g]=\by_j'.
\end{align*}
This distribution exists because $[-1,1]^t=\operatorname{conv}(\{-1,1\}^t)$. By construction, every right secondary action has nonpositive payoff (under $\uR^0$) against $\pL^{\bx'}$, whereas every left secondary action has zero payoff (under $\uL^0$) against $\pR^{\by'}$. Hence
$\uR^0(\pL^{\bx'},\pR^2)\leq0$ and $\uL^0(\pL^2,\pR^{\by'})=0$. We may now compute:
\begin{align}
K\cdot \left(\uL^0(\pL^{\bx'},\pR^1)-\uL^0(\pL^1,\pR^1)\right)
&\leq K\cL\left(\uL^0(\pL^1,\pR^2)-\uL^0(\pL^{\bx'},\pR^2)\right)+K\ep'
&&\text{\scriptsize by \cref{eq:left-primary-deviation}}\nonumber\\
&=K\cL\left(-\uR^0(\pL^1,\pR^2)+\uR^0(\pL^{\bx'},\pR^2)\right)+K\ep'
&&\text{\scriptsize since $\uL^0=-\uR^0$ for these inputs}\nonumber\\
&\leq K\cL(-\Gamma+\ep')+K\ep'
&&\text{\scriptsize by \cref{eq:test-lower-bounds} and $\uR^0(\pL^{\bx'},\pR^2)\leq0$}\nonumber\\
&\leq4K\ep'
&&\text{\scriptsize since $\Gamma\geq0$ and $\cL\leq3$},\\
K\cdot \left(\uR^0(\pL^1,\pR^{\by'})-\uR^0(\pL^1,\pR^1)\right)
&\leq K\cR\left(\uR^0(\pL^2,\pR^1)-\uR^0(\pL^2,\pR^{\by'})\right)+K\ep'
&&\text{\scriptsize by \cref{eq:right-primary-deviation}}\nonumber\\
&=K\cR\left(-\uL^0(\pL^2,\pR^1)+\uL^0(\pL^2,\pR^{\by'})\right)+K\ep'
&&\text{\scriptsize since $\uR^0=-\uL^0$ for these inputs}\nonumber\\
&\leq K\cR\left(-\rho\Delta_{\mathsf R}(\pR^1)+\ep'\right)+K\ep'
&&\text{\scriptsize by \cref{eq:test-lower-bounds} and $\uL^0(\pL^2,\pR^{\by'})=0$}\nonumber\\
&\leq4K\ep'
&&\text{\scriptsize since $\Delta_{\mathsf R}(\pR^1)\geq0$ and $\cR\leq3$}.
\end{align}

Using \cref{lem:projection-bounds} and the above displays, as well as \cref{eq:left-errors-final,eq:right-errors-final} and $10(C_D+1)\leq K/100$, we obtain
\begin{align}
K\left(\uL^0(\pL^{\bx'},\hatpR)-\uL^0(\hatpL,\hatpR)\right)
&\leq4K\ep'+10(C_D+1)\bigl(\delta(\pL^1)+\rho\Delta_{\mathsf L}(\pL^1)\bigr)\nonumber\\
&\qquad+20(C_D+1)\rho\Delta_{\mathsf R}(\pR^1)
\leq5K\ep',\label{eq:left-projected-deviation}\\
K\left(\uR^0(\hatpL,\pR^{\by'})-\uR^0(\hatpL,\hatpR)\right)
&\leq4K\ep'+20(C_D+1)\bigl(\delta(\pL^1)+\rho\Delta_{\mathsf L}(\pL^1)\bigr)\nonumber\\
&\qquad+10(C_D+1)\rho\Delta_{\mathsf R}(\pR^1)
\leq5K\ep'.\label{eq:right-projected-deviation}
\end{align}
Next, the definition of $\uL^0$ when both players play a primary action (namely, \cref{eq:both-primary-left}) gives
\begin{align*}
&K\left(\uL^0(\pL^{\bx'},\hatpR)-\uL^0(\hatpL,\hatpR)\right)\\
&\quad=\frac{1}{th}\sum_{s\in[t]}\sum_{g\in[h]}
\left(\frac{\rho}{h}\sum_{i\in\VL\^s}a_i(\bx_i'-\bx_i)
+\rho\sum_{\substack{i\in\VL\^s\\j\in\VR\^g}}A_{ij}(\bx_i'-\bx_i)\by_j\right)
&&\text{\scriptsize by the definition of $\uL^0$}\\
&\quad=\frac{\rho}{th}\left(
\sum_{i\in[m]}a_i(\bx_i'-\bx_i)
+\sum_{i,j\in[m]}A_{ij}(\bx_i'-\bx_i)\by_j\right)
&&\text{\scriptsize since the blocks partition $[m]$}\\
&\quad=\frac{\rho}{th}\left\langle\bx'-\bx,\AL\by+\aL\right\rangle.
\end{align*}
Similarly,
\begin{align*}
&K\left(\uR^0(\hatpL,\pR^{\by'})-\uR^0(\hatpL,\hatpR)\right)\\
&\quad=\frac{1}{th}\sum_{s\in[t]}\sum_{g\in[h]}
\left(\frac{\rho}{t}\sum_{j\in\VR\^g}b_j(\by_j'-\by_j)
+\rho\sum_{\substack{i\in\VL\^s\\j\in\VR\^g}}B_{ji}\bx_i(\by_j'-\by_j)\right)
&&\text{\scriptsize by the definition of $\uR^0$}\\
&\quad=\frac{\rho}{th}\left(
\sum_{j\in[m]}b_j(\by_j'-\by_j)
+\sum_{i,j\in[m]}B_{ji}\bx_i(\by_j'-\by_j)\right)
&&\text{\scriptsize since the blocks partition $[m]$}\\
&\quad=\frac{\rho}{th}\left\langle\by'-\by,\AR\bx+\aR\right\rangle.
\end{align*}
Thus \cref{eq:left-projected-deviation,eq:right-projected-deviation} become
\begin{align}
\frac{\rho}{th}\left\langle\bx'-\bx,\AL\by+\aL\right\rangle\leq5K\ep',\qquad
\frac{\rho}{th}\left\langle\by'-\by,\AR\bx+\aR\right\rangle\leq5K\ep'.
\end{align}
Finally, $th\leq2m$. Dividing by $2m$ and maximizing over $\bx',\by'$ shows that both \BiLinVI{} regrets are at most
\begin{align}
\frac{5K\ep' th}{2m\rho}\leq\frac{5K\ep'}{\rho}=\frac{80000\ep}{\rho} \leq 80000 \cdot \sqrt{32\ep_0} \nonumber, %
\end{align}
as desired.
\end{proof}

\begin{proof}[Proof of \cref{thm:main-ne}]
Fix an $m$-dimensional \BiLinVI{} instance satisfying the sparsity promise
in \cref{lem:bilinvi-hard}. Let $\eta>0$ be the absolute constant in the hardness result for
\BiLinVI{} from \cref{lem:bilinvi-hard}, and decrease the constant $\ep_0$
in \cref{lem:derive-linvi-solution}, if necessary, so that
$80000\sqrt{32\ep_0}\leq\eta$. Fix $c>0$ and the function $\ep(\cdot)$
from the theorem, and suppose toward a contradiction that the asserted
algorithm exists. Replacing $c$ by $\min\{c,1/2\}$ only weakens its
running-time guarantee, so we may assume that $0<c\leq1/2$. 

\paragraph{Action size bounds.} Given the $m$-dimensional \BiLinVI{} instance as above, we will construct a normal-form game instance as in \cref{sec:nf-game-con}, for an appropriate choice of $t$. For future, reference, we record the sizes of the action sets for each player:
\begin{align}
|\MAL|&=(t+1)h^4 2^t,&
|\MAR|&=h2^t+(2h+1)^t+2^t+1.
\end{align}
Consequently, there is an absolute constant $C_0$ such that the larger
action set has size at most
$2^{C_0t\log m}$. 
We will also use a smaller realization when $t=1$. In this case, take each
of $\MV$ and $\MW$ to contain one labeled copy of every row of the symmetric
Hadamard matrix $H_h$ and its negative, as in the discussion following
\cref{lem:scaled-hypercube}. It is straightforward to see that \cref{eq:rho-containment} continues to hold, and therefore that all lemmas from the previous subsection hold with this modified choice of $\MV,\MW$. Importantly, this modified choice ensures that that $|\MW|, |\MV| = O(h) = O(m)$ when $t=1$ and thus the number of actions for each player is $O(m)$.

\paragraph{Construction of the normal-form game.} We will next construct a normal-form game instance from the \BiLinVI{} instance, as in \cref{sec:nf-game-con}; note that this construction relies on a choice of $t \leq m$. To specify such $t$, we define
\begin{align}
\ep_N:=\ep(N),\qquad e_N:=4K\ep_N.
\end{align}
Choose sufficiently small absolute constants $c_1,c_2>0$ (to be specified below),
and a small absolute constant $c_3>0$ satisfying
$4K\sqrt{c_3/\ep_0}<1$. Define
\begin{align}
M(N):=\begin{cases}
\left\lfloor \frac{c_1(\log_2 N)^2}{\ep_N^2\log^2((\log_2 N)/\ep_N^2)}\right\rfloor,&\log((\log_2 N)/\ep_N^2)\leq c_2\log_2 N,\\
\left\lfloor c_3/\ep_N^2\right\rfloor,&\log((\log_2 N)/\ep_N^2)>c_2\log_2 N.
\end{cases}\nonumber
\end{align}
We claim that we can choose $N$ so that $m\leq M(N)\leq O(m)$. Indeed,
$M(N)\to\infty$ since $\ep_N=o(1)$. Moreover, slow decrease of $\ep_N$
implies that $\frac{\log_2(N+1)}{\ep_{N+1}^2}\big/\frac{\log_2 N}{\ep_N^2}=O(1)$, so each of the two expressions defining
$M(N)$ changes by at most a constant factor in one step. If consecutive
integers lie in different regimes, then $\log((\log_2 N)/\ep_N^2)=\Theta(\log_2 N)$ at both
integers, and both expressions are $\Theta(1/\ep_N^2)$ there. Thus
$M(N+1)=O(M(N))$ even at a transition between regimes. Taking the first
$N$ beyond a fixed sufficiently large threshold for which $M(N)\geq m$
proves the claim for all sufficiently large $m$. Henceforth we proceed with such a value of $N$. 

For this choice of $N$, we have $e_N\leq\ep_0$ and
$\log((\log_2 N)/\ep_N^2)=O(\log_2 N)$. Suppose first that $\log((\log_2 N)/\ep_N^2)\leq c_2\log_2 N$ (i.e., the ``small-$\ep$ regime''). Then we take
\begin{align}
m\leq\left\lfloor\frac{c_1(\log_2 N)^2}{\ep_N^2\log^2((\log_2 N)/\ep_N^2)}\right\rfloor=\Theta(m),\qquad
t:=\left\lceil e_N\sqrt{m/\ep_0}\right\rceil. \label{eq:general-parameters-final}
\end{align}
As long as $m$ is sufficiently large, we have $1\leq t\leq\sqrt m$ 
and
\begin{align}
t&\leq \frac{4K\sqrt{c_1}}{\sqrt{\ep_0}}\frac{\log_2 N}{\log((\log_2 N)/\ep_N^2)}+1,
&\log m&\leq2\log((\log_2 N)/\ep_N^2).
\end{align}
For $c_1$ and$c_2$ sufficiently small, the number of actions $2^{C_0 t \log m}$ is bounded above by $N$. Thus the reduction
maps every $m$-dimensional \BiLinVI{} instance to a game with at most $N$
actions per player in the ``small-$\ep$ regime''. 

Next we address the ``large-$\ep$'' regime $\log((\log_2 N)/\ep_N^2)>c_2\log_2 N$. Here, we have
\begin{align}
m\leq\left\lfloor\frac{c_3}{\ep_N^2}\right\rfloor=\Theta(m),\qquad t:=1.
\label{eq:small-t-parameters-final}
\end{align}
This is exactly the value of $t$ prescribed by
\cref{lem:derive-linvi-solution}. Moreover,
$m/N=O(1/(N\ep_N^2))=o(1)$ by the assumption
$\ep_N=\omega(1/\sqrt N)$. The $t=1$ realization above therefore also has
at most $O(m) \leq N$ actions per player as long as $m$ is sufficiently large. 

The payoffs of the normal-form game
$\MG$ (as constructed in \cref{sec:nf-game-con}) lie in $[-2K,2K]$. Normalize them to $[0,1]$ by setting
\begin{align}
\widetilde u^{\mathsf P}:=\frac{u^{\mathsf P}+2K}{4K},\qquad
\mathsf P\in\{\mathsf L,\mathsf R\}.
\end{align}
The assumed algorithm returns an $\ep_N$-Nash equilibrium of this normalized
game, which is an $e_N$-Nash equilibrium of $\MG$. Our choice
$t=\lceil e_N\sqrt{m/\ep_0}\rceil$ therefore allows
\cref{lem:derive-linvi-solution} to decode an $\eta$-approximate solution
of the original \BiLinVI{} instance.

\paragraph{Running time bound.} Finally, we bound the running time. Its base-two logarithm (by assumption in the theorem statement) is at most
\begin{align}
(\log_2 N)\left(\frac{\log_2 N}{\ep_N^2}\right)^{1-c}. \label{eq:runtime-final}
\end{align}
In the regime of \cref{eq:general-parameters-final}, we have
$m=\Theta(\frac{(\log_2 N)^2}{\ep_N^2\log^2((\log_2 N)/\ep_N^2)})$ and $\log m=\Theta(\log((\log_2 N)/\ep_N^2))$. Hence, for every
fixed constant $C>0$,
\begin{align}
\frac{(\log_2 N)\left(\frac{\log_2 N}{\ep_N^2}\right)^{1-c}}{m/(\log m)^C}
=O\left(\frac{(\log((\log_2 N)/\ep_N^2))^{C+2}}{((\log_2 N)/\ep_N^2)^c}\right)=o(1).
\end{align}
In the regime of \cref{eq:small-t-parameters-final}, we have
$(\log_2 N)/\ep_N^2=\Theta(m\log_2 N)$ and $\log m=\Theta(\log_2 N)$, and therefore
\begin{align}
(\log_2 N)\left(\frac{\log_2 N}{\ep_N^2}\right)^{1-c}=O\left(m^{1-c}(\log_2 N)^{2-c}\right)
=o\left(\frac{m}{(\log m)^C}\right)
\end{align}
for every fixed $C>0$. The search for $N$, construction of the explicit payoff matrices,
and decoding take $N^2\operatorname{poly}(m)$ time, whose base-two
logarithm is $O(\log_2 N+\log m)=o(m/(\log m)^C)$ in both regimes. We have
therefore obtained a $2^{\widetilde{o}(m)}$-time procedure for
every sufficiently large source dimension $m$, contradicting the hardness
of the original \BiLinVI{} instance in \cref{lem:bilinvi-hard}.
\end{proof}

\section{A tight lower bound for free games}
\label{sec:free-game}

In this section, we prove \cref{thm:freegame-main}.

\subsection{The free-game instance}
\label{sec:freegame-instance}
Fix a $D$-regular 2-CSP instance $\mathcal I = (\VL, \VR, E, \Sigma_{\mathsf{L}}, \Sigma_{\mathsf{R}}, (P_e)_{e \in E})$ from
\cref{lem:freegame-source-normalization} (we will refer to $\mathcal{I}$ as the \emph{source instance}). We write $m = |\VL| = |\VR|$. We next construct a free game instance, 
parameterized by a power of two $t$ satisfying $t \leq m$, and an integer $k$.  Given such $t,k$, define $h := 2^{\lceil \log(m/t) \rceil}$, so that $m \leq th < 2m$. We will ensure that $k$ satisfies
\begin{align}
4\log h&\leq k\leq h/4.
\label{eq:freegame-parameters}
\end{align}
 Apply \cref{lem:partitions} to the support graph of
$\mathcal I$ to obtain blocks
$I_1,\ldots,I_t \subset \VL$ of size at most $h$ and blocks
$J_1,\ldots,J_h \subset \VR$ of size at most $t$.  The lemma ensures that at most
$2D(D^2+3)$  edges belong to $I_s \times J_g$ for any $s \in [t], g \in [h]$.  Since each tuple $(i,j) \in \VL \times \VR$ has at most $D$ parallel constraint occurrences, every pair of blocks $I_s, J_g$ contains at
most
$
C_{\mathsf{fg}}:=2D^2(D^2+3)
$
constraint occurrences.

Let $\MV=\MV(h,k)$ be the multiset from
\cref{lem:scaled-hypercube}.  Thus
\begin{align}
|\MV|=h^4 2^k,\qquad
\rho[-1,1]^h\subseteq\operatorname{conv}(\MV),\qquad
\rho:=\frac14\sqrt{\frac{k}{h}}.
\label{eq:freegame-compressed-family}
\end{align}
It follows from the containment of $\rho[-1,1]^h$ in $\operatorname{conv}(\MV)$ that for every $z\in\BR^h$,
\begin{align}
\max_{v\in\MV}\langle v,z\rangle\geq\rho\|z\|_1.
\label{eq:freegame-support-bound}
\end{align}
Distinct indices of the multiset are treated as distinct answer symbols.
Finally, we define
$p:=\log|\MV|=k+4\log h.$
Then \cref{eq:freegame-parameters} implies $p\leq2k$ and
$2p\leq h$.

We next construct describe the construction of a family of $p$-wise independent hash functions which, roughly speaking, will ensure that the left prover only plays actions corresponding to ``valid'' assignments in the source 2-CSP instance $\mathcal{I}$.

Recalling that $h$ is a power of $2$ (so that the field $\F_h$ is well-defined), we view $\F_h$ as a vector space over
$\F_2$ and fix any nonzero linear map
$L:\F_h\to\F_2$ (for example, $L$ may return the first
coordinate in a fixed basis; or we may use the trace $\operatorname{Tr}_{\F_h/\F_2}$). For each 
$
\theta=(\theta_0,\ldots,\theta_{2p-1})
\in\F_h^{2p},
$
define
\begin{align}
Q_\theta(\zeta)&:=\sum_{r=0}^{2p-1}\theta_r\zeta^r,&
\beta_\theta(x)&:=
(-1)^{L(Q_\theta(x))}.
\label{eq:freegame-mask}
\end{align}
For a uniform $\theta \sim \mathrm{Unif}(\F_h^{2p})$, the signs
$(\beta_\theta(x))_{x\in\F_h}$ are $2p$-wise independent and unbiased (here we use that $2p \leq h$).

The following lemma bounds the correlation between any fixed $z \in \mathbb{R}^h$ and the output of some function $f_{\mathrm{sel}} : \mathbb{F}_h^{2p} \to \MV$ over a uniform choice of its input $\theta \in \mathbb{F}_h^{2p}$. We use $\odot$ to denote the entrywise product of two vectors in $\mathbb{R}^h$.

\begin{lemma}
\label{lem:freegame-adaptive-selector}
For every $z\in\BR^h$ and every map
$f_{\mathrm{sel}}:\F_h^{2p}\to\MV$,
\begin{align}
\E_{\theta\in\F_h^{2p}}
\left|
\left\langle\beta_\theta\odot f_{\mathrm{sel}}(\theta),z\right\rangle
\right|
\leq2\sqrt p\,\|z\|_2.
\label{eq:freegame-adaptive-selector}
\end{align}
\end{lemma}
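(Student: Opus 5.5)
The plan is to remove the adaptivity of $f_{\mathrm{sel}}$ by a union bound over the (small) set $\MV$, and then control each fixed $v\in\MV$ by a $2p$-th moment bound for $2p$-wise independent signs. The starting point is pointwise domination:
\begin{align*}
\left|\left\langle\beta_\theta\odot f_{\mathrm{sel}}(\theta),z\right\rangle\right|
\leq \max_{v\in\MV}\left|\left\langle\beta_\theta\odot v,z\right\rangle\right|
=\max_{v\in\MV}|S_v(\theta)|,
\qquad S_v(\theta):=\sum_{x\in\F_h}\beta_\theta(x)v_xz_x .
\end{align*}
Here we identify the coordinate set $[h]$ with $\F_h$, consistent with the definition of $\beta_\theta$. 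It then suffices to bound $\E_\theta\max_{v\in\MV}|S_v|$.

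For a fixed $v$, the coefficients $c_x:=v_xz_x$ satisfy $\sum_x c_x^2=\|z\|_2^2$, because $v\in\{-1,1\}^h$. Since the $\beta_\theta(x)$ are $2p$-wise independent unbiased signs, the $2p$-th moment of $S_v$ equals what it would be under fully independent Rademacher signs. We can therefore expand and count pairings:
\begin{align*}
\E_\theta S_v^{2p}=\sum_{x_1,\ldots,x_{2p}}c_{x_1}\cdots c_{x_{2p}}\,\E\Big[\prod_r\beta(x_r)\Big]\leq (2p-1)!!\,\|z\|_2^{2p}\leq p^p\|z\|_2^{2p}.
\end{align*}
The inequality holds because only terms in which every index appears an even number of times survive, each with nonnegative contribution. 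The standard Khintchine argument then bounds these terms by the number of perfect matchings times $(\sum c_x^2)^p$: a multiset with multiplicities $2a_1,\ldots$ is dominated by the Gaussian moments $\prod(2a_i-1)!!$.

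Next I combine the moment bounds via $\E\max_v|S_v|\leq(\E\max_v S_v^{2p})^{1/(2p)}\leq(\sum_{v}\E S_v^{2p})^{1/(2p)}$. The sum runs over the $|\MV|$ indices of the multiset (repeats only help). This gives
\begin{align*}
\E_\theta\max_{v\in\MV}|S_v|\leq |\MV|^{1/(2p)}\sqrt{p}\,\|z\|_2 = 2^{p/(2p)}\sqrt p\,\|z\|_2=\sqrt2\sqrt p\,\|z\|_2\leq 2\sqrt p\|z\|_2,
\end{align*}
using $p=\log|\MV|$ (base 2), so that $|\MV|=2^p$.

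The main points to check are two. First, the moment identity requires exactly $2p$-wise independence and unbiasedness, which was asserted after \eqref{eq:freegame-mask} using $2p\leq h$. Second, the bound $(2p-1)!!\leq p^p$ must be justified; it follows from $(2p-1)!!=\prod_{r=1}^{p}(2r-1)\leq\prod_{r=1}^p 2r\cdot\tfrac12\cdot$ AM–GM, or more simply from $(2r-1)(2p-2r+1)\leq p^2$ for paired factors. The Khintchine-type domination step is the only genuinely delicate one. I would handle it by comparing each surviving monomial coefficient to that of a Gaussian sum $\sum c_xg_x$, whose $2p$-th moment is exactly $(2p-1)!!\|c\|_2^{2p}$.
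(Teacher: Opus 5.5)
Your proposal is correct and follows essentially the same route as the paper: you dominate the adaptive choice by $\max_{v\in\MV}|S_v|$, bound each fixed $v$ by the $2p$-th moment of a $2p$-wise independent sign sum (your Gaussian comparison is the paper's pairing count phrased via Gaussian moments), and combine via $(\sum_{v}\E S_v^{2p})^{1/(2p)}$ with $|\MV|=2^p$. The only difference is that your sharper estimate $(2p-1)!!\leq p^p$, in place of the paper's $(2p)^p$, gives the slightly stronger bound $\sqrt{2p}\,\|z\|_2$.
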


\begin{proof}
Fix $v\in\MV$, and write
$w_x:=v_xz_x$ for $x \in [h]$ and
$Z_\theta:=\langle\beta_\theta\odot v,z\rangle
=\sum_{x\in[h]}\beta_\theta(x)w_x$.
We may compute
\begin{align*}
\E_\theta|Z_\theta|^{2p}
=\sum_{x_1,\ldots,x_{2p}\in[h]}
\left(\prod_{j=1}^{2p}w_{x_j}\right)
\E_\theta\left[\prod_{j=1}^{2p}\beta_\theta(x_j)\right].
\end{align*}
By $2p$-wise independence and unbiasedness, the expectation in each
summand is zero unless every coordinate occurs an even number of times, in
which case it is one.  Every surviving tuple can be paired so that the two
indices in each pair agree.  Summing over the $(2p-1)!!$ pairings therefore
gives
\begin{align*}
\E_\theta|Z_\theta|^{2p}
&\leq(2p-1)!!\left(\sum_{x\in[h]}w_x^2\right)^p
\leq(2p-1)!!\,\|z\|_2^{2p}
\leq(2p)^p\|z\|_2^{2p}.
\end{align*}
Here the second inequality uses $|v_x|=1$, and the last uses
$(2p-1)!!\leq(2p)^p$.
Consequently, using $|\MV|=2^p$,
\begin{align*}
\E_\theta\max_{v\in\MV}
\left|\left\langle\beta_\theta\odot v,z\right\rangle\right|
&\leq
\left(
\sum_{v\in\MV}\E_\theta
\left|\left\langle\beta_\theta\odot v,z\right\rangle\right|^{2p}
\right)^{1/(2p)}\leq |\MV|^{1/(2p)}\sqrt{2p}\,\|z\|_2
=2\sqrt p\,\|z\|_2.
\end{align*}
The left-hand side of \cref{eq:freegame-adaptive-selector} is bounded
above by the maximum in the preceding display.
\end{proof}

\paragraph{Construction of the hard free game instance.} We now specify the components of the free game
$\mathcal F(\mathcal I)=(X,Y,A,B,V)$.  
The question and answer sets are
\begin{align}
X&:=[t]\times\{-1,1\}^{\Sigma_{\mathsf L}}
       \times\F_h^{2p},&
Y&:=[h],
\label{eq:freegame-questions}\\
A&:=\MV,&
B&:=\Sigma_{\mathsf R}^{\,t}
       \times\Sigma_{\mathsf L}^{\,Dt}.
\label{eq:freegame-answers}
\end{align}
Each element of $A$ corresponds to an element of the multiset $\MV=\MV(h,k)$.  A  question for the left-player (i.e., $(X,A)$ player) is a tuple
$x=(s,\tau,\theta)$, where $s$ selects $I_s$, $\tau$ should be interpreted as a hash function
$\tau:\Sigma_{\mathsf L}\to\{-1,1\}$, and $\theta$  should be interpreted as a seed for the $(2p)$-wise independent family in \cref{eq:freegame-mask}.  A  question for the right-player (i.e., the $(Y,B)$-player) $y=g$ selects $J_g$.  An answer for the right player gives one
label $b_j\in\Sigma_{\mathsf R}$ for every $j\in J_g$ and a separate
claimed label $a_e\in\Sigma_{\mathsf L}$ for every constraint occurrence
$e=(i,j)$ incident to $J_g$. 

Next, fix an arbitrary injective map 
$\lambda_s:I_s\to[h]$ for each $s \in [t]$. 
Given questions $(s,\tau,\theta)\in X$ and $g\in Y$, a left answer
$v\in A$, and a right answer represented by $(b_j,a_e)_{j,e}$ as above, define
the raw score
\begin{align}
S\big((s,\tau,\theta),g,v,(b_j,a_e)_{j,e}\big)
:=\frac{1}{C_{\mathsf{fg}}}
\sum_{\substack{e=(i,j)\in E\cap(I_s\times J_g)\\
                 P_e(a_e,b_j)=1}}
\beta_\theta(\lambda_s(i)) \cdot v_{\lambda_s(i)}\cdot \tau(a_e),
\label{eq:freegame-raw-verifier}
\end{align}
and set
\begin{align}
V\big((s,\tau,\theta),g,v,(b_j,a_e)_{j,e}\big)
:=\frac{1+S\big((s,\tau,\theta),g,v,(b_j,a_e)_{j,e}\big)}{2}.
\label{eq:freegame-verifier}
\end{align}
There are at most $C_{\mathsf{fg}}$ summands in
\cref{eq:freegame-raw-verifier} (by definition of $C_{\mathsf{fg}}$), each in $\{-1,1\}$, so $S\in[-1,1]$ and
$V\in[0,1]$.  In particular, this is a valid free game whose verifier
entries are rationals of constant bit complexity.

To get some intuition for the expression $S((s,\tau,\theta), g,v,(b_j, a_e)_{j,e})$ in \cref{eq:freegame-raw-verifier}, at a high level the goal is to count the number of constraints $e = (i,j) \in E \cap (I_s \times J_g)$ inside the chosen block tuple $I_s \times J_g$ for which the right-player's label $b_j$ and their ``proposal'' for the left-player's label, $a_e$, satisfy the constraint $P_e(a_e, b_j) = 1$. However, of course the right-player could cheat by ``proposing'' the labels $a_e$ in some inconsistent way. To prevent the right player from cheating in this way, we need a few additional components: first, the left player action $v \in \MV$ should be viewed as proposing a ``hash'' for the label $a \in \Sigma_{\mathsf{L}}$ that they would like to assign to each element $i \in I_s$, which should line up with the actual hash $\tau(a_e)$ of the label $a_e$ chosen by the right-player (recall that $\tau$ is chosen as part of the question, and will be uniformly random); hence the product $v_{\lambda_s(i)} \cdot \tau(a_e)$. (Note that if we had $|\Sigma_{\mathsf{L}}| = 2$, then this hash would be unnecessary.) 

Finally, to prevent the left-player from cheating in their choice of $v$ (as $\mathrm{conv}(\MV)$ is a strict superset of the scaled hypercube $\rho[-1,1]^h$), we include an additional hash, $\beta_\theta(\lambda_s(i))$, which is a random sign that is $2p$-wise independent across the coordinates of $v$. Its role in preventing the left-player from cheating may be observed from \cref{lem:freegame-adaptive-selector}: the mapping $f_{\mathrm{sel}} : \mathbb{F}_h^{2p} \to \MV$ should be understood as the left-player's choice of action (which may depend on $\theta$), and \cref{lem:freegame-adaptive-selector} shows that, in expectation over the seed $\theta$, when incorporating the hash $\beta_\theta$, $f_{\mathrm{sel}}(\theta)$ ``behaves like'' a vector of $\ell_2$ norm at most $O(\sqrt{p})$, which is exactly the $\ell_2$ norm of vectors in $\rho\{-1,1\}^h$ (recall that $\rho = \Theta(\sqrt{k/h})$ and $p \leq O(k)$).

\subsection{Completeness and soundness}

We next prove completeness of the above reduction. 

\begin{lemma}[Completeness]
\label{lem:freegame-completeness}
If $\operatorname{val}(\mathcal I)=1$, then
\begin{align}
\omega(\mathcal F(\mathcal I))
\geq \frac12+\frac{D}{16C_{\mathsf{fg}}}
\sqrt{\frac{k}{h}}.
\label{eq:freegame-completeness}
\end{align}
\end{lemma}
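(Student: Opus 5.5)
The plan is to fix a satisfying labeling and exhibit explicit prover strategies whose expected verifier value meets the bound in \cref{eq:freegame-completeness}. Since $\omega(\mathcal F(\mathcal I))$ is a maximum over strategies, this lower-bounds it.

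\textbf{Strategies.} Since $\operatorname{val}(\mathcal I)=1$, fix labelings $a^\star:\VL\to\Sigma_{\mathsf L}$ and $b^\star:\VR\to\Sigma_{\mathsf R}$ satisfying every constraint occurrence.
\begin{itemize}
\item The right prover, on question $g$, answers honestly: $b_j:=b^\star(j)$ for each $j\in J_g$, and $a_e:=a^\star(i)$ for each occurrence $e=(i,j)$ incident to $J_g$.
\item With these answers every occurrence $e\in E\cap(I_s\times J_g)$ passes the test $P_e(a_e,b_j)=1$. The raw score therefore becomes
\begin{align*}
S=\frac{1}{C_{\mathsf{fg}}}\sum_{e=(i,j)\in E\cap(I_s\times J_g)}\beta_\theta(\lambda_s(i))\,v_{\lambda_s(i)}\,\tau(a^\star(i)).
\end{align*}
\item The left prover sees $(s,\tau,\theta)$ but not $g$, so we average over $g\sim\mathrm{Unif}([h])$. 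The blocks $J_1,\ldots,J_h$ partition $\VR$, and $\mathcal I$ is $D$-regular. Hence each $i\in I_s$ has exactly $D$ occurrences, which are spread over the various $J_g$, giving
\begin{align*}
\E_g[S]=\frac{D}{C_{\mathsf{fg}}h}\sum_{i\in I_s}\beta_\theta(\lambda_s(i))\,\tau(a^\star(i))\,v_{\lambda_s(i)}=\frac{D}{C_{\mathsf{fg}}h}\langle v,z\rangle.
\end{align*}
Here $z\in\{-1,0,1\}^h$ is defined by $z_{\lambda_s(i)}:=\beta_\theta(\lambda_s(i))\tau(a^\star(i))$ for $i\in I_s$, and $z$ is zero off $\lambda_s(I_s)$; this uses that $\lambda_s$ is injective.
\item The left prover answers with $v\in\argmax_{v\in\MV}\langle v,z\rangle$. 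This choice is legitimate because $z$ depends only on the question $(s,\tau,\theta)$ and the fixed labeling $a^\star$.
\end{itemize}

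\textbf{Bound per left question.} By \cref{eq:freegame-support-bound},
\begin{align*}
\langle v,z\rangle\geq\rho\|z\|_1=\rho|I_s|.
\end{align*}
Therefore the conditional expected value of $V=(1+S)/2$ given the left question is at least
\begin{align*}
\frac12+\frac{D\rho|I_s|}{2C_{\mathsf{fg}}h}.
\end{align*}

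\textbf{Average over $s$.} Averaging over the uniform $s\in[t]$ uses $\sum_s|I_s|=m$ and $th<2m$, so $\E_s|I_s|=m/t>h/2$. This gives
\begin{align*}
\omega(\mathcal F(\mathcal I))\geq\frac12+\frac{D\rho}{4C_{\mathsf{fg}}}.
\end{align*}
Substituting $\rho=\frac14\sqrt{k/h}$ yields exactly \cref{eq:freegame-completeness}.

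\textbf{Main subtlety.} The left prover must act without knowing $g$. It is resolved by taking the expectation over $g$ first, which turns the score into a single linear functional $\langle v,z\rangle$ of $v$. The support-function bound \cref{eq:freegame-support-bound} then handles the rest, since $z$ is a sign vector on $|I_s|$ coordinates.
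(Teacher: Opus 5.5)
Your proof is correct and follows essentially the same route as the paper. In both, the right prover answers honestly, and the left prover picks $v$ for each question by applying the support bound \cref{eq:freegame-support-bound} to the sign vector built from $\beta_\theta$ and $\tau(a^\star_i)$; the average over $g$ then uses $D$-regularity and the fact that the $J_g$ partition $\VR$, and the average over $s$ uses $th<2m$ (the paper's $z$ simply carries an extra factor $D$, and you correctly identify that the regularity being used is that of the left vertices $i\in I_s$).
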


\begin{proof}
Fix a satisfying labeling
$(a_i^\star)_{i\in\VL},(b_j^\star)_{j\in\VR}$ for the source instance $\mathcal I$.  On question $g$, the
right prover reports $b_j=b_j^\star$ for every $j\in J_g$ and
$a_e=a_i^\star$ for every incident occurrence $e=(i,j)$.  For each left
question $(s,\tau,\theta)$, form $z\in\BR^h$ by setting
\begin{align}
z_{\lambda_s(i)}
:=D \cdot \beta_\theta(\lambda_s(i))\cdot \tau(a_i^\star)
\qquad(i\in I_s).
\label{eq:freegame-completeness-z}
\end{align}
 By
\cref{eq:freegame-support-bound}, there is a deterministic answer
$v\in\MV$ satisfying
\begin{align*}
\langle v,z\rangle
\geq\rho\|z\|_1=\rho D|I_s|.
\end{align*}
Choose one such maximizing answer for every left-player  question.

All constraints pass under these responses.  Averaging the raw score $S$ (defined in \cref{eq:freegame-raw-verifier})
over $g\in[h]$ and $s\in[t]$ gives
\begin{align*}
\E_{s,\tau,\theta,g}[S]
&=\frac{1}{C_{\mathsf{fg}}th}
\sum_{s=1}^t\E_{\tau,\theta}\langle v,z\rangle
\geq\frac{\rho Dm}{C_{\mathsf{fg}}th}
\geq\frac{\rho D}{2C_{\mathsf{fg}}},
\end{align*}
where the first equality uses that the source instance $\mathcal{I}$ is $D$-regular (so that each $j \in \VR$ is incident to $D$ constraints), and the final inequality uses $th<2m$.  The verifier value is one half
plus one half of this raw score.  Substituting
$\rho=\frac14\sqrt{k/h}$ proves
\cref{eq:freegame-completeness}.
\end{proof}

Next we establish soundness of our reduction. 

\begin{lemma}[Soundness]
\label{lem:freegame-soundness}
If $\operatorname{val}(\mathcal I)\leq\sigma_\star$, then
\begin{align}
\omega(\mathcal F(\mathcal I))
\leq \frac12+\frac{D}{C_{\mathsf{fg}}}
\sqrt{\frac{p\sigma_\star}{h}}.
\label{eq:freegame-soundness}
\end{align}
\end{lemma}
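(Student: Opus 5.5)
Fix arbitrary deterministic strategies: a left map $f_{\mathsf L}:(s,\tau,\theta)\mapsto v\in\MV$ and a right map $g\mapsto(b_j,a_e)_{j,e}$. Mixed strategies cannot beat the best deterministic pair, so it suffices to show
\begin{align*}
\E_{s,\tau,\theta,g}[S]\leq \frac{2D}{C_{\mathsf{fg}}}\sqrt{\frac{p\sigma_\star}{h}}.
\end{align*}
For fixed $(s,\tau,\theta)$, write the raw score averaged over $g$ as $\frac{1}{C_{\mathsf{fg}}h}\langle\beta_\theta\odot v,z^{s,\tau}\rangle$. Here $z^{s,\tau}\in\BR^h$ is the vector with
\begin{align*}
z^{s,\tau}_{\lambda_s(i)}:=\sum_{e=(i,j)\ni i,\ P_e(a_e,b_j)=1}\tau(a_e)\qquad (i\in I_s),
\end{align*}
and zero elsewhere. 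Crucially, $z^{s,\tau}$ depends only on the right prover's fixed strategy and on $\tau$, not on $\theta$. So for each fixed $(s,\tau)$, \cref{lem:freegame-adaptive-selector} with $f_{\mathrm{sel}}=f_{\mathsf L}(s,\tau,\cdot)$ bounds the $\theta$-average by $2\sqrt p\,\|z^{s,\tau}\|_2$. This is the step where the left prover's freedom beyond $\rho[-1,1]^h$ is neutralized.

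It remains to bound $\E_\tau\|z^{s,\tau}\|_2$, summed over $s$. By Jensen, $\E_\tau\|z^{s,\tau}\|_2\leq(\E_\tau\|z^{s,\tau}\|_2^2)^{1/2}$. For each $i$, expand the square: each vertex $i\in\VL$ has exactly $D$ occurrences, so
\begin{align*}
\E_\tau\big(z^{s,\tau}_{\lambda_s(i)}\big)^2=\sum_{e,e'}\E_\tau[\tau(a_e)\tau(a_{e'})],
\end{align*}
where the sum runs over satisfied occurrences $e,e'$ at $i$. Since $\tau$ is a uniformly random sign function on $\Sigma_{\mathsf L}$, the cross term is $1$ if $a_e=a_{e'}$ and $0$ otherwise. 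Hence
\begin{align*}
\E_\tau\big(z^{s,\tau}_{\lambda_s(i)}\big)^2=\sum_{a\in\Sigma_{\mathsf L}}n_i(a)^2\leq D\cdot\max_a n_i(a),
\end{align*}
where $n_i(a)$ counts satisfied occurrences at $i$ whose claimed label is $a$.

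Now let $a_i^\star:=\arg\max_a n_i(a)$ and take the labeling $(a_i^\star)_i$ together with the right prover's $b_j$. This labeling satisfies at least $\sum_i\max_a n_i(a)$ occurrences. Indeed, occurrence $e$ counted in $n_i(a_i^\star)$ has $a_e=a_i^\star$ and $P_e(a_e,b_j)=1$, and the $b_j$ are consistent because each $j$ lies in a unique block $J_g$. Hence $\sum_i\max_a n_i(a)\leq\sigma_\star Dm$.

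Combine the bounds with Cauchy–Schwarz over $s\in[t]$:
\begin{align*}
\sum_s\Big(\sum_{i\in I_s}D\max_a n_i(a)\Big)^{1/2}\leq\sqrt{t}\,(D^2\sigma_\star m)^{1/2}.
\end{align*}
The expected raw score is therefore at most
\begin{align*}
\frac{1}{C_{\mathsf{fg}}th}\cdot 2\sqrt p\cdot\sqrt t\, D\sqrt{\sigma_\star m}.
\end{align*}
Using $m\leq th$, this is at most $\frac{2D}{C_{\mathsf{fg}}}\sqrt{p\sigma_\star/h}$. Halving to pass from $S$ to $V$ gives exactly \cref{eq:freegame-soundness}.

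The main obstacle is the order of averaging. The adaptive-selector lemma must be applied with $\tau$ and $s$ fixed, so that $z$ is independent of $\theta$. The $\tau$-average must be handled by the second-moment and collision argument, so that the right prover's inconsistent claims $a_e$ are charged against $\operatorname{val}(\mathcal I)$. Getting the constant to match (the factor $2$ from the lemma cancelling the $\tfrac12$ in $V$) requires using $m\leq th$ exactly as above.
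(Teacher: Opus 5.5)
Your proof is correct and follows essentially the same route as the paper. Both arguments fix deterministic strategies, form the vector with entries $\sum_a n_i(a)\tau(a)$ (the paper's $d_i(a)$), apply \cref{lem:freegame-adaptive-selector} with $(s,\tau)$ held fixed, and bound the $\tau$-second moment by $D\max_a n_i(a)$; they then charge $\sum_i\max_a n_i(a)\leq\sigma_\star Dm$ to the source value and finish with Jensen, Cauchy--Schwarz over $s$, and $m\leq th$.
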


\begin{proof}
Fix arbitrary deterministic response functions for the two provers.  Since
the $J_g$ are disjoint, the right-player responses associate a label
$b_j\in\Sigma_{\mathsf{R}}$ to every $j\in\VR$, and a label
$a_e \in \Sigma_{\mathsf{L}}$ for every constraint $e \in E$. For $i\in\VL$ and
$a\in\Sigma_{\mathsf L}$, define, for $a \in \Sigma_{\mathsf{L}}$,
\begin{align}
d_i(a):=
\#\{e=(i,j)\in E:a_e=a\text{ and }P_e(a,b_j)=1\}.
\label{eq:freegame-di}
\end{align}
Choosing at each $i$ a label $a$ maximizing $d_i(a)$ gives a global labeling that satisfies at least
$\sum_i\max_a d_i(a)$ constraints of the source instance $\mathcal{I}$. Since we have assumed $\mathrm{val}(\mathcal I)\leq\sigma_\star$, it holds that 
\begin{align}
\sum_{i\in\VL}\max_{a\in\Sigma_{\mathsf L}}d_i(a)
\leq\sigma_\star Dm.
\label{eq:freegame-source-soundness-use}
\end{align}
For $s\in[t]$ and a label hash $\tau : \Sigma_{\mathsf{L}} \to \{-1,1\}$, define
$z_{s,\tau}\in\BR^h$ by
\begin{align}
(z_{s,\tau})_{\lambda_s(i)}
:=\sum_{a\in\Sigma_{\mathsf L}}d_i(a)\tau(a)
\qquad(i\in I_s).
\label{eq:freegame-soundness-z}
\end{align}
Fix any vector
$F(s,\tau,\theta)\in\MV$, corresponding to the left-player's answer in response to the query $(s,\tau,\theta)$.  Summing
\cref{eq:freegame-raw-verifier} over all left-player blocks shows that its
expected raw score is exactly
\begin{align}
\frac{1}{C_{\mathsf{fg}}th}
\sum_{s=1}^t\E_{\tau,\theta}
\left\langle
\beta_\theta\odot F(s,\tau,\theta),z_{s,\tau}
\right\rangle.
\label{eq:freegame-soundness-raw}
\end{align}
Applying \cref{lem:freegame-adaptive-selector} for each fixed value of $(s,\tau)$ (so in particular $f_{\mathrm{sel}}(\theta) = F(s,\tau,\theta)$) gives that 
\cref{eq:freegame-soundness-raw} is at most
\begin{align}
\frac{2\sqrt p}{C_{\mathsf{fg}}th}
\sum_{s=1}^t\E_\tau\|z_{s,\tau}\|_2.
\label{eq:freegame-soundness-mask-bound}
\end{align}
The signs $(\tau(a))_{a\in\Sigma_{\mathsf L}}$ are independent and
unbiased.  Moreover, $\sum_a d_i(a)\leq D$, so
\begin{align}
\E_\tau\left(\sum_a d_i(a)\tau(a)\right)^2
=\sum_a d_i(a)^2
\leq D\max_a d_i(a).
\label{eq:freegame-hash-moment}
\end{align}
Using Cauchy--Schwarz, Jensen's inequality,
\cref{eq:freegame-source-soundness-use}, and
\cref{eq:freegame-hash-moment}, we obtain
\begin{align*}
\sum_{s=1}^t\E_\tau\|z_{s,\tau}\|_2
&\leq
\sqrt{t\sum_{s=1}^t\E_\tau\|z_{s,\tau}\|_2^2}\leq
\sqrt{tD\sum_{i\in\VL}\max_a d_i(a)}
\leq D\sqrt{t\sigma_\star m}.
\end{align*}
Since $m\leq th$, the expected raw score in
\cref{eq:freegame-soundness-mask-bound} is at most
\begin{align*}
\frac{2D}{C_{\mathsf{fg}}}\sqrt{\frac{p\sigma_\star}{h}}.
\end{align*}
Taking one half of this raw bias proves
\cref{eq:freegame-soundness}.
\end{proof}

For later use, define the two thresholds
\begin{align}
v_{\mathrm{yes}}
&:=\frac12+\frac{D}{16C_{\mathsf{fg}}}\sqrt{\frac{k}{h}},&
v_{\mathrm{no}}
&:=\frac12+\frac{D}{C_{\mathsf{fg}}}
\sqrt{\frac{p\sigma_\star}{h}}.
\label{eq:freegame-thresholds}
\end{align}
Since $p\leq2k$ and $\sigma_\star=2^{-24}$, we have 
\begin{align}
v_{\mathrm{yes}}-v_{\mathrm{no}}
\geq\frac{D}{32C_{\mathsf{fg}}}\sqrt{\frac{k}{h}}.
\label{eq:freegame-value-gap}
\end{align}

\subsection{Proof of the free-game lower bound}

We are now ready to prove \cref{thm:freegame-main}.
\begin{proof}[Proof of \cref{thm:freegame-main}]
Fix a $D$-regular 2-CSP instance $\mathcal{I}$ from \cref{lem:freegame-source-normalization} where each side has size $m$. Since $N \mapsto \ep(N)$ is slowly decreasing, after increasing $m$ by at most a constant factor (and padding the resulting CSP instance), we can find $N \in \BN$ so that, for $\ep := \ep(N)$, we have $m = t \cdot h$ for 
\begin{align}
t:=2^{\lfloor\log(c_1\log N)\rfloor}, \qquad 
h:=2^{\left\lfloor\log\left(
\frac{c_1\log N}
{\ep^2\left\lceil\log((\log N)/\ep^2)\right\rceil}
\right)\right\rfloor}.\nonumber
\end{align}
The value of $h$ prescribed above (given the values of $t,m$) is exactly $2^{\lceil \log(m/t) \rceil}$, as required in the construction in \cref{sec:freegame-instance}. Next, set $\Lambda := (128 C_{\mathsf{fg}}/D)^2$ and $k = \lceil \Lambda \ep^2 h \rceil$. Consider the free game instance $\mathcal{F}(\mathcal{I}) = (X,Y,A,B,V)$ constructed in \cref{sec:freegame-instance} with the above choices of $t, k$ and $h$. 
From \cref{eq:freegame-questions,eq:freegame-answers}, we have
\begin{align}
|X|&=t2^{q_{\mathsf L}}h^{2p},&
|Y|&=h,&
|A|&=h^4 2^k,&
|B|&=q_{\mathsf R}^tq_{\mathsf L}^{Dt}.
\label{eq:freegame-cardinalities}
\end{align}
Consequently, the size $n_0 := |\mathcal{F}(\mathcal{I})|$ of the constructed free game is 
$
n_0=t2^{q_{\mathsf L}}h^{2p+5}2^k
q_{\mathsf R}^tq_{\mathsf L}^{Dt}.
$
Using $p=k+4\log h$ and $4\log h\leq k$, we obtain
$
\log n_0\leq C_{\mathsf{size}}(t+k\log h)
$
for an absolute constant $C_{\mathsf{size}}\geq1$.

Choose a sufficiently small
absolute constant $c_1>0$ so that
$C_{\mathsf{size}}(\Lambda+2)c_1\leq1/2$.  We will prove the theorem for a
sufficiently small constant $\alpha^\star>0$ satisfying
$9(\alpha^\star)^2\leq\Lambda c_1/8$.

Recalling that $\ep = \ep(N) \geq 2^{-\alpha^\star\sqrt{\log N}}$, we have
\begin{align}
\left\lceil\log\left(\frac{\log N}{\ep^2}\right)\right\rceil
&\leq\log\log N+2\alpha^\star\sqrt{\log N}+1\leq 3\alpha^\star\sqrt{\log N}\label{eq:alphastar-property}
\end{align}
for all sufficiently large $N$.  

It remains to check that $k$ satisfies \cref{eq:freegame-parameters}.  First,
\begin{align*}
k
&\geq\Lambda\ep^2h
>\frac{\Lambda c_1\log N}
{2\left\lceil\log((\log N)/\ep^2)\right\rceil}\geq4\left\lceil\log\left(\frac{\log N}{\ep^2}\right)\right\rceil
\geq4\log h,
\end{align*}
where the penultimate inequality follows from the choice of
$\alpha^\star$ and \cref{eq:alphastar-property}, and the last follows from the definition of $h$.  On the
other hand,
\begin{align*}
\frac{k}{h}\leq\Lambda\ep^2+\frac1h=o(1),
\end{align*}
because $\ep=o(1)$ and $h\to\infty$.  Hence $k\leq h/4$ for all
sufficiently large $N$, as required by
\cref{eq:freegame-parameters}.

Let $n_0$ be the size of the resulting game. We may bound
\begin{align*}
\log n_0
&\leq C_{\mathsf{size}}(t+k\log h)
\leq C_{\mathsf{size}}\left(
(\Lambda+1)c_1\log N
+\left\lceil\log\left(\frac{\log N}{\ep^2}\right)\right\rceil
\right)
\leq\log N
\end{align*}
for all sufficiently large $N$; in the last step we used
$\left\lceil\log((\log N)/\ep^2)\right\rceil=o(\log N)$ and the choice of
$c_1$. By padding, we may ensure that the size is exactly $N$.
By \cref{lem:freegame-completeness,lem:freegame-soundness} and \cref{eq:freegame-value-gap}, an algorithm which obtains an $\ep$-additive approximate estimate of the value value of the free game $\mathcal{F}(\mathcal{I})$ distinguishes between the cases $\mathrm{val}(\mathcal{I}) \leq \sigma_\star$ and $\mathrm{val}(\mathcal{I}) = 1$; here we use in particular that (by \cref{eq:freegame-value-gap})
\begin{align}
v_{\mathrm{yes}}-v_{\mathrm{no}}
\geq\frac{D}{32C_{\mathsf{fg}}}\sqrt{\frac{k}{h}}
\geq\frac{D\sqrt\Lambda}{32C_{\mathsf{fg}}}\ep
=4\ep.
\label{eq:freegame-calibrated-gap}
\end{align}

Finally, the parameter choices give
\begin{align}
\frac{c_1^2(\log N)^2}
{4\ep^2\left\lceil\log((\log N)/\ep^2)\right\rceil}
<m\leq
\frac{c_1^2(\log N)^2}
{\ep^2\left\lceil\log((\log N)/\ep^2)\right\rceil}
\label{eq:freegame-source-target-relation}
\end{align}
for all sufficiently large $N$.  The base-two logarithm of the alleged
algorithm's running time (i.e., from the statement of \cref{thm:freegame-main}) is at most
\begin{align*}
\log N\left(\frac{\log N}{\ep^2}\right)^{1-c}
\leq\frac{4}{c_1^2}m
\frac{\left\lceil\log((\log N)/\ep^2)\right\rceil}
{((\log N)/\ep^2)^c}\leq m^{1-c/8}.
\end{align*}
The last inequality uses
$\left\lceil\log((\log N)/\ep^2)\right\rceil
\leq((\log N)/\ep^2)^{c/4}$ and
$m\leq((\log N)/\ep^2)^2$, and holds for all sufficiently large $N$.
Summarizing, we have shown that the resulting algorithm distinguishes between $\mathrm{val}(\mathcal{I}) \leq \sigma_\star$ and $\mathrm{val}(\mathcal{I}) = 1$ in time $2^{O(m^{1-c/16})}$, which contradicts \cref{lem:freegame-source-normalization} and thus proves the theorem. 
\end{proof}

\section{Detailed AI Use Statement}
\label{sec:ai-use}

I first asked GPT-5.6-sol ultra to find a proof that the algorithm of \cite{LMM03} which computes $\ep$-Nash equilibrium in time $N^{O(\log(N)/\ep^2)}$ is optimal, or to find a faster algorithm. I used a variant of OpenAI's cycle double cover prompt\footnote{\url{https://cdn.openai.com/pdf/04d1d1e4-bc75-476a-97cf-49055cd98d31/cdc_prompt.pdf}}. After a few steps of encouragement (e.g., ``keep working until you find a solution, you may use additional subagents.''), it eventually produced a proof that there is no algorithm running in time $2^{1/\ep^{4/3}}$, for some $\ep$ which scaled as an inverse polynomial in the number of actions $N$. I then told it to try to improve this to $2^{1/\ep^2}$, and it did so (again in a parameter regime where $\ep \asymp 1/\mathrm{poly}(N)$). I next instructed it to use ideas in ``birthday repetition'' to prove a lower bound of $N^{\log(N)/\ep^{2-c}}$ with $\ep \asymp 1/\log(N)$ (which it succeeded in doing), and then I instructed it to generalize the proof to work for arbitrary $\ep(N)$ as in the statement of \cref{thm:main-ne}.
Finally, I asked it to use the ideas which it obtained for showing hardness of approximate Nash equilibrium to show the tight hardness result for free games in \cref{thm:freegame-main}, which it succeeded in doing.

While my prompts required some knowledge of the material, any such knowledge required felt quite shallow and unlikely to constitute any meaningful obstacle for future AI models. On the other hand, the initial proof produced by GPT was incredibly difficult to read, and it required substantial manual editing to get it to a more reasonable form. Doing so required me to understand the proofs, though this necessity may not persist very long, i.e., one might expect that AI's ability at writing polished and easily understandable proofs and technical overviews (on its own) may significantly improve in the near term.

In two words, my role with respect to AI would best be described as (a) cheerleader; (b) cleaner. %

\section*{Acknowledgments.}
I thank Aviad Rubinstein for a useful discussion that spurred me to ask GPT about the topics in this paper. 

\newpage

\appendix

\section{Related work}

\paragraph{Algorithms and hardness for Nash equilibria.}
Computing a Nash equilibrium to inverse-polynomial accuracy is well-known to be \PPAD-complete, including for
bimatrix games~\cite{DGP09,CDT09}. %
For constant approximation parameter $\ep$, Rubinstein proved
\PPAD-completeness of $\ep$-Nash computation in degree-three polymatrix games with two actions per
player~\cite{Rub15}. %

\paragraph{Free games.}
Free games are a version of two-prover games where players receive independent questions, and
can encode dense bipartite $2$-CSPs.  Aaronson,
Impagliazzo, and Moshkovitz gave a combinatorial deterministic
$n^{O((\log n)/\ep^2)}$-time approximation algorithm based on subsampling.
Brand\~ao and Harrow obtained the same asymptotic bound through a
linear-programming relaxation, and Bernasconi et al.\ recently recovered it
for any fixed number of provers using a general polynomial-cover
framework~\cite{AIM14,BH13,BCCF26}.  Aaronson et al.\ also showed, under ETH, an
$n^{\widetilde\Omega((\log n)/\ep)}$ lower bound for
$1/n\leq\ep\leq\Delta$, for an absolute constant $\Delta>0$, and explicitly
asked whether the linear dependence on $1/\ep$ could be improved to match the
quadratic dependence in the upper bound~\cite{AIM14}.  Manurangsi and
Raghavendra \cite{MR17} subsequently proved a general birthday-repetition theorem and
derived stronger multiplicative inapproximability for free games and dense
CSPs; these results concern the regime where one aims for a multiplicative approximation of the value rather than the
additive approximation considered here.

\paragraph{Fine-grained lower bounds.}
For the optimization problem of finding a constant-approximate Nash
equilibrium with maximum social welfare, Braverman, Ko, and Weinstein showed
that an $N^{o(\log N)}$-time algorithm would refute ETH~\cite{BKW15}; their
reduction proceeds through the free-game framework of Aaronson,
Impagliazzo, and Moshkovitz~\cite{AIM14}. Babichenko, Papadimitriou, and
Rubinstein subsequently formulated the \PCP-for-\PPAD conjecture and used
the ``birthday repetition'' technique of \cite{AIM14} to obtain quasipolynomial hardness for the search problem of computing \emph{any} approximate Nash equilibrium in normal-form games, under \PCP-for-\PPAD~\cite{BPR15}. This result matches the dependence on
$N$ in the upper bound of \cite{LMM03} when $\ep$ is fixed.  Rubinstein then showed that computing an $\ep$-Nash equilibrium for some
absolute constant $\ep>0$ requires time
$N^{\log^{1-o(1)}N}$~\cite{Rub16} \emph{without} relying on \PCP-for-\PPAD. 

Despite the above progress, the dependence on
a vanishing accuracy parameter $\ep$ remained unresolved prior to the present work. In a related optimization setting,
Bernasconi et al.\ highlight the analogous gap between the
$\ep^{-2}$ dependence in the upper bound and the $\ep^{-1}$ dependence in
the lower bound, for a range of problems which can be phrased as optimization of low-degree polynomials over convex sets \cite{BCCF26}. 

\paragraph{Other computational models.}
Harrow, Natarajan, and Wu proved unconditional nearly logarithmic lower
bounds on the sum-of-squares degree needed to optimize over Nash equilibria
at constant accuracy~\cite{HNW16}. A number of papers have also studied the complexity of computing approximate Nash equilibria in other computational models. For example, Babichenko and
Rubinstein obtained a polynomial lower bound for the communication complexity of constant-approximate Nash
equilibrium in two-player games~\cite{BR17}, which G\"o\"os and Rubinstein
strengthened to $N^{2-o(1)}$~\cite{GR18}. %

\section{Proofs of the preliminary and technical lemmas}

We first justify the \BiLinVI{} hardness used throughout the reduction.

\begin{proof}[Proof of \cref{lem:bilinvi-hard}]
Let $\ep_*,\delta_*>0$ be the constants from
\cref{conj:pcp-ppad}, and consider one of the hard bipartite polymatrix
games. Write $\VL$ and $\VR$ for the two sides, each of size $m$, and
identify each player's two actions with $\{-1,1\}$. For a mixed-strategy
profile, let $\bx,\by\in[-1,1]^m$ denote the players' expected actions on
the two sides.

For each $i\in\VL$, let $d_i^{\mathsf L}(\by)$ be the expected payoff of
action $1$ minus that of action $-1$. Since the game is polymatrix and
bipartite, there are a vector $\boldsymbol{\alpha}^{\mathsf L}$ and a
matrix $M^{\mathsf L}$ such that
\begin{align}
d^{\mathsf L}(\by)=M^{\mathsf L}\by+
\boldsymbol{\alpha}^{\mathsf L}. \label{eq:polymatrix-left-difference}
\end{align}
Indeed, the payoff difference contributed by an edge $(i,j)$ is an affine
function of $\by_j$. The two values of this function lie in $[-1,1]$, so
its constant and linear coefficients have magnitude at most $1$. It follows
from the degree-three promise that the entries of $M^{\mathsf L}$ have
magnitude at most $1$, the entries of
$\boldsymbol{\alpha}^{\mathsf L}$ have magnitude at most $3$, and every
row and column of $M^{\mathsf L}$ has at most three nonzero entries. Define
\begin{align}
\AL:=\frac13M^{\mathsf L},\qquad
\aL:=\frac13\boldsymbol{\alpha}^{\mathsf L}.
\end{align}
Define $M^{\mathsf R}$ and
$\boldsymbol{\alpha}^{\mathsf R}$ analogously from the action-$1$ versus
action-$-1$ payoff differences of the players in $\VR$, and set
\begin{align}
\AR:=\frac13M^{\mathsf R},\qquad
\aR:=\frac13\boldsymbol{\alpha}^{\mathsf R}.
\end{align}
Thus $(\AL,\AR,\aL,\aR)$ is a valid \BiLinVI{} instance, and each of
$\AL,\AR$ has at most three nonzero entries in every row and column.

It remains to relate the two notions of regret. Let $r_i$ denote the
ordinary unilateral regret of a player in the initial polymatrix game. For $i\in\VL$, linearity in
the player's own centered strategy gives
\begin{align}
r_i=\max_{z\in[-1,1]}
\frac12(z-\bx_i)d_i^{\mathsf L}(\by).
\end{align}
The maximization over the cube separates coordinatewise. Consequently,
using \cref{eq:polymatrix-left-difference},
\begin{align}
\max_{\bx'\in[-1,1]^m}\frac1{2m}
\left\langle\bx'-\bx,\AL\by+\aL\right\rangle
&=\frac1{3m}\sum_{i\in\VL}r_i. \label{eq:bilinvi-source-left}
\end{align}
The symmetric calculation gives
\begin{align}
\max_{\by'\in[-1,1]^m}\frac1{2m}
\left\langle\by'-\by,\AR\bx+\aR\right\rangle
&=\frac1{3m}\sum_{j\in\VR}r_j. \label{eq:bilinvi-source-right}
\end{align}

Set $\eta:=\ep_*\delta_*/3$. If $(\bx,\by)$ is an
$\eta$-approximate \BiLinVI{} solution, then
\cref{eq:bilinvi-source-left,eq:bilinvi-source-right} imply that the total
regret of all $2m$ source players is at most
$6m\eta=2m\ep_*\delta_*$. Hence fewer than $2m\delta_*$ players can have
regret greater than $\ep_*$. Mapping each centered coordinate $z$ to the
binary mixed strategy that plays action $1$ with probability $(1+z)/2$
therefore yields an $(\ep_*,\delta_*)$-weak Nash equilibrium. The
construction and this decoding are polynomial-time and preserve the
parameter $m$. Thus any algorithm violating the asserted lower bound for
the stated \BiLinVI{} problem would contradict \cref{conj:pcp-ppad}.
\end{proof}

We next prove the two combinatorial lemmas used in \cref{sec:nf-game-con}.

\begin{proof}[Proof of \cref{lem:scaled-hypercube}]
Write
\begin{align}
\lambda:=\frac14\sqrt{\frac{k}{r}}.
\end{align}
When $r=1$, necessarily $k=1$, and we take the two labeled signs. Assume
henceforth that $r\geq2$. Partition $[r]$ into $k$ nonempty groups
$G_1,\ldots,G_k$, each of size at most
$g:=\lceil r/k\rceil$. Since $r$ is a power of two, let
$\mathbb F_r$ be the field of order $r$, and enumerate its elements as
$\alpha_1,\ldots,\alpha_r$. View $\mathbb F_r$ as a vector space over
$\mathbb F_2$, fix any nonzero linear map
$L:\mathbb F_r\to\mathbb F_2$ (for example, $L$ may return the first
coordinate in a fixed basis), and, for
$c=(c_0,c_1,c_2,c_3)\in\mathbb F_r^4$, define
\begin{align}
\xi_i^c:=(-1)^{L(c_0+c_1\alpha_i+c_2\alpha_i^2+
c_3\alpha_i^3)},\qquad i\in[r].
\end{align}
For $s\in\{-1,1\}^k$, define $v^{c,s}\in\{-1,1\}^r$ by
\begin{align}
v_i^{c,s}:=\xi_i^c s_\ell\qquad\text{for }i\in G_\ell,
\end{align}
and let
\begin{align}
\MV(r,k):=\{v^{c,s}:c\in\mathbb F_r^4,
s\in\{-1,1\}^k\}
\end{align}
as a labeled multiset. Its size is exactly $r^4 2^k$.

Fix $z\in\mathbb R^r$ and, for a uniformly random $c\in\mathbb F_r^4$,
write
\begin{align}
S_\ell:=\sum_{i\in G_\ell}\xi_i^c z_i.
\end{align}
The signs $(\xi_i^c)_{i\in[r]}$ are four-wise independent: for any at
most four distinct field points, the corresponding evaluation map for a
degree-three polynomial is surjective (equivalently, one may interpolate
arbitrary values at those points). Thus the corresponding polynomial
values are independent and uniform in $\mathbb F_r$. Applying $L$ to each
value preserves independence, and each resulting bit is unbiased because
$L$ is a nonzero linear map. Therefore
\begin{align}
\E_c[S_\ell^2]&=\sum_{i\in G_\ell}z_i^2,&
\E_c[S_\ell^4]&\leq3\left(\sum_{i\in G_\ell}z_i^2\right)^2.
\end{align}
If $z$ vanishes on $G_\ell$, the following lower bound is immediate;
otherwise, $\E_c S_\ell^2>0$ by the preceding display. Applying
H\"{o}lder's inequality to $|S_\ell|^{2/3}$ and $|S_\ell|^{4/3}$ with
conjugate exponents $3/2$ and $3$ gives
\begin{align}
\E_c S_\ell^2
&=\E_c\left[|S_\ell|^{2/3}|S_\ell|^{4/3}\right]
\leq (\E_c|S_\ell|)^{2/3}(\E_c S_\ell^4)^{1/3}.
\nonumber
\end{align}
In particular, $\E_c S_\ell^4>0$, so raising this inequality to the
power $3/2$ and rearranging yields
\begin{align}
\E_c|S_\ell|
&\geq\frac{(\E_c S_\ell^2)^{3/2}}{(\E_c S_\ell^4)^{1/2}}
\geq\frac1{\sqrt3}\left(\sum_{i\in G_\ell}z_i^2\right)^{1/2}.
\end{align}
Consequently, some choice of $c$ satisfies
\begin{align}
\max_{s\in\{-1,1\}^k}\langle v^{c,s},z\rangle
&=\sum_{\ell=1}^k|S_\ell|\nonumber\\
&\geq\frac1{\sqrt3}\sum_{\ell=1}^k
\left(\sum_{i\in G_\ell}z_i^2\right)^{1/2}\nonumber\\
&\geq\frac1{\sqrt{3g}}\|z\|_1
\geq\lambda\|z\|_1. \label{eq:scaled-cube-support}
\end{align}
For the last inequality, we used
$g=\lceil r/k\rceil\leq2r/k$ and $1/\sqrt6\geq1/4$.
The right-hand side of \cref{eq:scaled-cube-support} is the support function
of $\lambda[-1,1]^r$. The support-function characterization of convex
containment therefore gives
$\lambda[-1,1]^r\subseteq\operatorname{conv}(\MV(r,k))$, which is the
claimed inclusion.

Finally, fix canonical encodings of the elements of $\mathbb F_r$ and of
the sign vectors. A label in $[r^4 2^k]$ then specifies the four field
elements $c$ and the sign vector $s$. A representation of $\mathbb F_r$ can be found
deterministically in $\operatorname{poly}(r)$ time by enumerating and
testing degree-$\log_2r$ polynomials over $\mathbb F_2$. Given this
representation, all polynomial evaluations, applications of $L$, and
coordinates of $v^{c,s}$ can be computed in $\operatorname{poly}(r)$ time.
\end{proof}

\begin{proof}[Proof of \cref{lem:partitions}]
Divide $\VL$ arbitrarily into $t$ labeled, possibly empty blocks whose
sizes differ by at most one. Their sizes are at most
$\lceil m/t\rceil\leq h$.

To construct the right blocks, form a conflict graph $H$ on $\VR$: two
distinct vertices are adjacent in $H$ if they both have a neighbor in the
same left block $\VL\^i$. A right vertex has neighbors in at most $D$ left
blocks, and each left block is incident to at most $Dh$ right vertices.
Thus
\begin{align}
\Delta(H)\leq D(Dh-1)\leq D^2h. \label{eq:conflict-degree}
\end{align}
Set $q:=2(D^2+3)h$. By the algorithmic Hajnal--Szemer\'edi theorem, $H$
has an equitable proper $q$-coloring computable in polynomial
time~\cite{KK08}: apply the theorem with degree parameter $q-1$, noting
from \cref{eq:conflict-degree} that $\Delta(H)\leq q-1$.
(If $q>m$, assign distinct colors to the vertices and
leave the remaining color classes empty.) Hence every color class has size
$\lfloor m/q\rfloor$ or $\lceil m/q\rceil$.

Group the $q$ color classes into $h$ groups of exactly $2(D^2+3)$ classes
each, distributing the larger color classes as evenly as possible among
the groups, and let $\VR\^1,\ldots,\VR\^h$ be their unions. The resulting
right-block sizes differ by at most one, and hence are at most
\begin{align}
\left\lceil\frac mh\right\rceil\leq t.
\end{align}
Fix a left block $\VL\^i$. Since each color class is independent in $H$,
it contains at most one right vertex having a neighbor in $\VL\^i$. Such a
vertex contributes at most $D$ edges from $\VL\^i$. Each right block is a
union of $2(D^2+3)$ color classes, so
\begin{align}
|E\cap(\VL\^i\times\VR\^j)|\leq2D(D^2+3)
\end{align}
for every $i,j$. All steps of the construction are deterministic and
polynomial-time.
\end{proof}

Finally, we prove \cref{lem:freegame-source-normalization}, which is an immediate consequence of the Raz-Moshkovitz constant-soundness PCP theorem~\cite{MR08}.
\begin{proof}[Proof of \cref{lem:freegame-source-normalization}]
Fix a $3$-CNF formula with $r$ variables and $O(r)$ constraints.  Apply
\cite[Theorem~30]{AIM14} with soundness $\sigma_\star$.  The resulting
bipartite $2$-CSP instance $\mathcal{I}$ satisfies $\mathrm{val}(\mathcal{I}) \leq \sigma_\star$ if the initial formula was unsatisfiable and $\mathrm{val}(\mathcal{I}) = 1$ otherwise.  Furthermore, the instance $\mathcal{I}$ has  alphabet size $2^{\operatorname{poly}(1/\sigma_\star)}$,
and $N=r^{1+o(1)}\operatorname{poly}(1/\sigma_\star)=r^{1+o(1)}$ total
variables.  Moreover, every variable participates in exactly
$d=\operatorname{poly}(1/\sigma_\star)$ constraints.  Since
$\sigma_\star$ is fixed, both the alphabet size and $d$ are absolute
constants. Let the number of variables on each side of the resulting 2-CSP instance be denoted by $m$. If an algorithm could distinguish between the cases $\mathrm{val}(\mathcal{I}) \leq \sigma_\star$ and $\mathrm{val}(\mathcal{I}) = 1$ for any such instance $\mathcal{I}$, it follows that we have an algorithm for $3$-CNF running in time
$2^{O(m^{1-\eta})}$.  Since $m=r^{1+o(1)}$,
composing the PCP reduction with the alleged distinguisher would therefore
decide the $3$-CNF formula in time
$2^{O(r^{1-\eta/2})}=2^{o(r)}$, which contradicts
\cref{conj:eth-freegame}.
\end{proof}

\end{document}